\newtheorem{theorem}{Theorem}[section]
\newtheorem{proposition}[theorem]{Proposition}
\theoremstyle{definition}
\newtheorem{definition}[theorem]{Definition}
\renewcommand{\theequation}{\arabic{section}.\arabic{equation}}
\theoremstyle{definition}
\theoremstyle{definition}
\newtheorem{remark}{Remark}
\theoremstyle{definition}
\DeclareMathOperator*{\diag}{diag}
\newcommand{\rd}{\mathrm{d}}
\renewcommand{\epsilon}{\varepsilon}
\newcommand{\var}{\mathbf{Var}}
\renewcommand{\cite}{\citet*}
\begin{document}
	\makeatletter
	\def\@setauthors{%
		\begingroup
		\def\thanks{\protect\thanks@warning}%
		\trivlist \centering\footnotesize \@topsep30\p@\relax
		\advance\@topsep by -\baselineskip
		\item\relax
		\author@andify\authors
		\def\\{\protect\linebreak}%
		{\authors}%
		\ifx\@empty\contribs \else ,\penalty-3 \space \@setcontribs
		\@closetoccontribs \fi
		\endtrivlist
		\endgroup } \makeatother
	\baselineskip 17.3pt
	\title[{{\tiny Robust $N$-insurer games }}]
	{{\tiny Robust mean-variance stochastic differential reinsurance and investment games under volatility risk and model uncertainty}} \vskip 10pt\noindent
		\author[{\tiny  Guohui Guan, Zongxia Liang, Yi Xia}]
		{\tiny {\tiny  Guohui Guan$^{a,\dag}$, Zongxia Liang$^{b,\ddag}$, Yi Xia$^{b,*}$}
				\vskip 10pt\noindent
				{\tiny ${}^a$Center for Applied Statistics and School of Statistics, Renmin University of China, Beijing, 100872, China
						\vskip 10pt\noindent\tiny ${}^b$Department of Mathematical Sciences, Tsinghua
						University, Beijing 100084, China}
				\footnote{ 
						$^{\dag}$  e-mail: guangh@ruc.edu.cn\\
						$^{\ddag}$  e-mail:  liangzongxia@tsinghua.edu.cn\\
						$^*$  Corresponding author, e-mail:  xia-y20@mails.tsinghua.edu.cn}}
	\maketitle
\begin{abstract}	
This paper investigates robust stochastic differential games among insurers under model uncertainty and stochastic volatility. The surplus processes of ambiguity-averse insurers (AAIs) are characterized by drifted Brownian motion with both common and idiosyncratic insurance risks. To mitigate these risks, AAIs can purchase proportional reinsurance. Besides, AAIs allocate their wealth in a financial market consisting of cash, and a stock characterized by the 4/2 stochastic volatility model.  AAIs compete with each other based on relative performance with the mean-variance criterion under the worst-case scenario. This paper  formulates a robust time-consistent mean-field game in a non-linear system. The AAIs seek robust, time-consistent response strategies to achieve Nash equilibrium strategies in the game. We introduce $n$-dimensional extended Hamilton-Jacobi-Bellman-Isaacs (HJBI) equations and corresponding verification theorems under compatible conditions. Semi-closed forms of the robust $n$-insurer equilibrium and mean-field equilibrium are derived, relying on coupled Riccati equations. Suitable conditions are presented to ensure the existence and uniqueness of the coupled Riccati equation as well as the integrability in the verification theorem. As the number of AAIs increases, the results in the $n$-insurer game converge to those in the mean-field game. Numerical examples are provided to illustrate economic behaviors in the games, highlighting the herd effect of competition on the AAIs.
	\vskip 10pt  \noindent
	2020 Mathematics Subject Classification: 91G05, 91B05, 91G10, 49L20, 91A15.
	\vskip 10pt  \noindent
	JEL Classifications: G22, G11, C61, C72.
	\vskip 10 pt  \noindent
	Keywords:  Stochastic differential games; Investment-reinsurance; Mean-variance criterion; Equilibrium strategy;  Model uncertainty; 4/2 model.
\end{abstract}
	\section{\bf Introduction}

 Stochastic differential games provide a framework for analyzing decision-making problems involving multiple players, where choices are influenced not only by individual performance but also by the relative performances of peers.  Based on various research studies, such as \cite{Hopkins} and \cite{demarzo2008relative}, which emphasize the influence of peers' wealth levels on an individual's utility, this paper focuses on the field of stochastic differential games. \cite{elliott1976existence} explores a two-person zero-sum differential game, while \cite{fleming1989existence} investigate the existence of the value for zero-sum, two-player stochastic differential games. \cite{bensoussan2000stochastic} apply dynamic programming method in a class of stochastic games for $N$
players. Due to the importance of relative performance and competition among the players, stochastic differential games have garnered significant attention.

In the context of finance and insurance, there have been numerous studies on reinsurance and investment strategies within the framework of stochastic differential games. Empirical investigations, such as \cite{biggar1998competition} and \cite{murat2002competition}, emphasize the pivotal role of competition in shaping insurers' decisions.  Specifically, \cite{alhassan2018competition} examine  how competition influences risk-taking behavior in emerging insurance markets, while \cite{chang2022does} demonstrates that insurers with varying levels of competitiveness exhibit diverse risk-taking tendencies. \cite{pun2016robust} investigate a non-zero-sum stochastic differential game involving two ambiguity-averse insurers, focusing on the effects of relative performance concerns and ambiguity aversion on reinsurance demand. \cite{asimit2018insurance} study the set of Pareto optimal insurance contracts and the core of an
insurance game. In contrast to numerous works focusing on two-insurer case (e.g., \cite{bensoussan2014class}, \cite{deng2018non},  \cite{zhang2021class}, \cite{bai2022hybrid}, and \cite{cao2023reinsurance}), this paper addresses the practical reality of having more than two insurers in the market. The novel approach integrates model uncertainty and relative performance considerations to explore robust investment-reinsurance decisions for both a finite number ($n$) and an infinite number of insurers. The insurers are faced with common and idiosyncratic insurance risks and volatility risk, with the option of purchasing proportional reinsurance. The primary focus is on deriving robust equilibrium strategies for insurers operating in a competitive setting under  worst-case scenarios. 

While the existing literature on stochastic differential games predominantly centers around It\^{o} (jump) diffusions, recent events such as the COVID-19 pandemic have led to increased volatility in the US equity and options market, as highlighted by \cite{john2021covid} and \cite{carriero2022addressing}. Acknowledging the need to effectively capture this heightened volatility and manage pandemic risks, stochastic volatility models have gained significance. Empirical studies, particularly those exploring volatility smile and volatility clustering, also indicate that volatility is generally non-constant, introducing volatility risk. To comprehensively address volatility risk, this paper adopts the 4/2 stochastic volatility model proposed by \cite{grasselli20174}. Notably, this model encompasses both the Heston model by \cite{heston1993closed} and the 3/2 model by \cite{heston1997simple} as special cases. While Heston's model is widely utilized for derivatives pricing and asset allocation, it often falls short of satisfying the Feller condition when calibrated to real data. In contrast, the 4/2 model proves more effective in capturing the skewness and kurtosis observed in empirical data compared to the Heston model.

In addition, this study examines the effects of ambiguity within the stochastic differential game. Since the pioneering work of \cite{DE}, there has been a growing recognition that individuals exhibit ambiguity attitudes when making decisions. To address this, \cite{GS} propose the max-min expected utility model, which is  widely employed in financial mathematics and actuarial sciences. This model aims to derive robust strategies  under worst-case scenarios. Studies by \cite{HS} and \cite{Anderson} introduce a penalty term based on relative entropy in max-min expected utility problems to measure discrepancies between the  reference model and alternative model. In the context of robust portfolio problems,  \cite{Maenhout2004} demonstrates that a preference for robustness can significantly alter demand for equities in portfolios, affecting the equilibrium equity premium. While robust investment-reinsurance strategies have been extensively studied for insurers (\cite{huang2017robust}, \cite{sun2017robust}, \cite{gu2018optimal}, \cite{wang2022robust}, etc.), these works mainly focus on robust strategies for a single insurer.  Although existing studies indicate that higher ambiguity aversion reduces risk exposure for a single insurer in financial and insurance markets, exploration of ambiguity within stochastic differential games is relatively recent in finance and insurance. Studies by \cite{deng2018non} and \cite{bai2022hybrid} investigate the impact of peers' risk attitudes, revealing that increased risk aversion results in decreased holdings of risky assets. However, the effects of peers' ambiguity attitudes remain unexplored, and this paper aims to fill this gap in the literature.

We focus on establishing robust time-consistent equilibrium strategies for competitive insurers with volatility risk under the mean-variance criterion.  Based on \cite{lacker2019mean} and \cite{kraft2020dynamic}, our research  introduces model uncertainty and the 4/2 stochastic volatility model, which result in robust mean-field game in a non-linear system.  The insurers search for the robust strategies with relative concerns, accounting for model uncertainty and volatility risk. Moreover, this paper show precise definitions of robust $n$-insurer and mean-field equilibria.  By the $n$-dimensional extended HJBI  equations, we derive semi-closed form equilibrium strategies for the robust time-consistent $n$-insurer game under the mean-variance criterion. The results rely on coupled Riccati equations and we show suitable conditions to ensure the existence and uniqueness of the equations. 
Besides, we show the convergence of results from the $n$-insurer game to the mean-field game. Finally, we present numerical examples to show the impacts of competition level, risk aversion coefficients, and ambiguity aversion coefficients. We also find the herd effect of competition in the robust game.

To recap, the main contributions of this paper include the
following aspects:

 (a) It is worth mentioning that in several two-agent games studied in the literature, such as \cite{bensoussan2014class}, \cite{deng2018non}, \cite{kraft2020dynamic}, \cite{bai2022hybrid}, and others, the financial market is characterized by standard Brownian motion, the constant elasticity of variance (CEV)  model, or Heston's stochastic volatility model. In this paper, we  extend \cite{deng2018non} by considering a more general stochastic volatility model. Specifically, we adopt the 4/2 stochastic volatility model, which serves as a unified approach encompassing both the Heston model and the 3/2 model. By employing the 4/2 model, we aim to capture a broader range of market dynamics and address the limitations of both the Heston and the 3/2 models in explaining the implied volatility surface. 
 
 (b)  In contrast to the existing literature, including works such as \cite{yong2002leader}, \cite{aid2020nonzero}, \cite{bai2022hybrid}, and \cite{savku2022stochastic}, this paper focuses on studying the effects of model uncertainty in the $n$-insurer stochastic differential game. We address this by formulating the robust problem as a max-min problem, which requires suitable integrable conditions to ensure the well-posedness of the problem. Furthermore, the non-linearity of the 4/2 model adds complexity to the problem. Unlike models that satisfy the uniform Lipschitz condition, the 4/2 model does not. We establish compatible conditions in Theorem \ref{solution-HJBI} to address this challenge. 
 
 (c) Previous research in the field has predominantly focused on studying stochastic differential games involving only two finite agents, such as \cite{bensoussan2014class}, \cite{deng2018non}, \cite{medhin2020nonzero},  and \cite{zhang2021class}. However, the iteration among a larger number of agents, including $n$ or even an infinite number of agents, is relatively new  in the context of insurance.
 It is worth mentioning that the insurance industry is highly competitive, with a significant number of domestic insurers reported in the United States alone, reaching 5978 by the end of 2021. As such, it becomes increasingly important to understand the dynamics and strategic interactions among multiple insurers. This paper tackles this challenge by formulating the $n$-agent and mean-field games for insurers. Besides, solvable  games are formulated mainly for the linear system, see \cite{bardi2014linear}, \cite{bensoussan2016linear}, \cite{moon2016linear}, \cite{lacker2019mean}, etc. This paper also contributes to the area of mean-field game by explicitly studying the  game in a non-linear system, which arise from the 4/2 model and variance principle.
 
 (d) In contrast to the CARA utility function employed in previous works such as \cite{deng2018non} and \cite{bai2022hybrid}, insurers in our model seek to maximize the mean-variance criterion. The optimal mean-variance investment-reinsurance strategies for a single insurer has been studied in  \cite{zhao2016time} and \cite{wang2019optimal}. The mean-variance criterion contains a non-linear function of the expectation of terminal wealth, which violates Bellman's optimality principle and leads to time inconsistency. To address this issue, we solve the problem based on the time-consistent equilibrium strategy proposed in \cite{bjork2017time}. By employing this approach, we overcome the time inconsistency problem associated with the mean-variance criterion and present $n$-dimensional extended HJBI equations. The results rely on coupled Riccati equations, as stated in Proposition \ref{ricc}, which also discusses their existence. These coupled Riccati equations differentiate themselves from the equations found in the standard geometric Brownian motion or CEV model (\cite{li2014optimal}, \cite{bai2022hybrid}, \cite{wang2022robust}, etc.).
 
(e) We conduct a comprehensive analysis of the impacts of competition within our framework.  \eqref{dn-opt} reveals that the robust equilibrium investment strategy comprises four components: a myopic part, a hedging demand, a myopic part and a hedging demand resulting from competition. On the other hand, the robust equilibrium reinsurance strategy is implicitly determined by two  components: one that disregards competition and another that responds to the actions of others. Similar to the findings in the two-insurer game studied in \cite{deng2018non}, we observe the presence of a herd effect in insurers' decision-making processes. This means that each insurer tends to make decisions by mimicking the actions of their peers. Furthermore, we investigate the effects of risk aversion coefficients and ambiguity aversion coefficients on insurers' strategies. Our analysis reveals that these two coefficients have similar effects on insurers' decision-making. 

The remainder of this paper is organized as follows. Section 2 presents the financial model. In Section 3, we derive the robust time-consistent equilibrium strategies in the $n$-insurer game under the mean-variance criterion. Section 4 shows the results in the robust mean-field game. Section 5 presents the numerical results, and Section 6 is a conclusion. All the proofs are provided in the Appendices.

\section{\bf Financial model}
Consider a complete probability space denoted as $\left(\Omega, \mathcal{F}, \mathbb{P}\right)$, where $\mathbb{P}$ is a reference probability measure. $[0,T]$ is a fixed investment time horizon.  The natural filtration $\mathbb{F}= \left\{\mathcal{F}_{t},{t \in[0, T]}\right\}$ is generated by the standard Brownian motions within the space $\left(\Omega, \mathcal{F}, \mathbb{P}\right)$. We assume that all processes introduces below are well-defined and adapted to $\mathbb{F}$. Additionally, there are no transaction costs, and short sellings are allowed. 


This paper  investigates stochastic differential game among $n$ \footnote{Stochastic differential game among two insurers have been studied in \cite{bensoussan2014class}, \cite{medhin2020nonzero}, \cite{deng2018non}, \cite{bai2022hybrid}. However, the insurance market consists of more than two insurers and is very competitive.
}
insurers with common shock dependence as in \cite{bi2021equilibrium}. To desribe the surplus processes of these insurers, we adopt an approximate model based on the Cramér-Lundberg model. In the Cramér-Lundberg model, the dynamic surplus process $X_i=\left\{X_{i}(t), t \in[0,T] \right\}$ of insurer $i$ ($i \in{1,2, \ldots, n}$) is given by
$$
X_{i}(t)=x^{0}_{i}+p_{i} t-\sum_{j=1}^{K_{i}(t)} Y_{j}^{i},
$$
where $x_i^0 \geqslant 0$ represents the initial surplus, $p_i > 0$ denotes the premium rate, and ${Y_j^i, j=1,2,\ldots}$ constitutes a set of independent identically distributed (i.i.d.) random variables with a common distribution function $F_{Y^i}(\cdot)$. The process $K_i(t)=\hat{N}(t)+N_i(t)$ represents the total number of jumps up to time $t$ for insurer $i$, wherein $N_i(t)$ and $\hat{N}(t)$ are mutually independent Poisson processes with intensities $\lambda_i > 0$ and $\hat{\lambda} > 0$, respectively. It is assumed that $Y^i$ possesses a finite mean $\mu_{i1}=\mathbb{E}[Y^i]$ and a second-order moment $\mu_{i2}=\mathbb{E}[(Y^i)^2]$. The determination of the premium rate $p_i$ adheres to the expected value principle, specifically,
$$
p_{i}=\frac{1}{t}\left(1+\eta_{i}\right) \mathbb{E}\left[\sum_{j=1}^{K_{i}(t)} Y_{j}^{i}\right]=\left(1+\eta_{i}\right)\left(\lambda_{i}+\hat{\lambda}\right) \mu_{i 1},
$$where  $ \eta_{i}>0 $ is the safety loading coefficient of insurer $ i $.

 In this paper, we employ a Brownian motion with drift to approximate the cumulative claims process. More precisely, the cumulative claims process is expressed as:
$$
\sum_{j=1}^{K_{i}(t)} Y_{j}^{i} \approx\left(\hat{\lambda}+\lambda_{i}\right) \mu_{i 1} t-\sqrt{\left(\hat{\lambda}+\lambda_{i}\right) \mu_{i 2}} W_{i}(t).
$$
In this context, $W_i=\{W_{i}(t),t\in[0,T]\}$ refers to a standard Brownian motion, and any two Brownian motions $W_{i}$ and $W_{k}$, where $i \neq k \in\{1,2, \ldots, n\}$, are correlated with a correlation coefficient of
$$
\rho_{i k}:=\frac{\hat{\lambda} \mu_{i 1} \mu_{k 1}}{\sqrt{\left(\hat{\lambda}+\lambda_{i}\right)\left(\hat{\lambda}+\lambda_{k}\right) \mu_{i 2} \mu_{k 2}}}.
$$ 
Equivalently, we assume that
 $$ W_i(t)=\sqrt{\frac{\hat{\lambda}\mu_{i 1}^2}{(\hat{\lambda}+\lambda_i)\mu_{i 2}}} \tilde{W}(t)+\sqrt{1-\frac{\hat{\lambda}\mu_{i 1}^2}{(\hat{\lambda}+\lambda_i)\mu_{i 2}}}\hat{W}_i(t),$$ where $\tilde{W}=\{\tilde{W}(t),t\in[0,T]\}$, $\hat{W}_i=\{\hat{W}_{i}(t),t\in[0,T]\}$, $1\leqslant i\leqslant n $, are standard Brownian motions and are independent with each other. 
Denote $\hat{\textbf{W}}(t)= (\hat{W}_1(t),\hat{W}_2(t),\cdots,\hat{W}_n(t))^T $.

Then the surplus process  of insurer $i$ is approximated by \begin{equation*}\begin{aligned}
	X_{i}(t)=&x^{0}_{i}+p_{i} t-\left(\hat{\lambda}+\lambda_{i}\right) \mu_{i 1} t+\sqrt{\hat{\lambda}}\mu_{i 1} \tilde{W}(t)+\sqrt{(\hat{\lambda}+\lambda_{i})\mu_{i 2}-\hat{\lambda}\mu_{i 1}^2}\hat{W}_i(t)\\
=&x^{0}_{i}+\eta_{i}\left(\hat{\lambda}+\lambda_{i}\right) \mu_{i 1} t+\sqrt{\hat{\lambda}}\mu_{i 1} \tilde{W}(t)+\sqrt{(\hat{\lambda}+\lambda_{i})\mu_{i 2}-\hat{\lambda}\mu_{i 1}^2}\hat{W}_i(t).	\end{aligned}
\end{equation*}

In this study, insurers manage their risk by  purchasing proportional reinsurance from a reinsurer and engaging in investments in the financial market. In the context of the reinsurance model, the proportional reinsurance level of insurer $i$ at time $t$ is denoted by $a_i(t) \in [0,1]$. The reinsurance premium rate paid by insurer $i$ at time $t$ is denoted as $\hat{p}_i(t)$ and is calculated using the variance principle, relying on the expected loss and variance of the underlying risk. Specifically,
$$
\hat{p}_i(t) = (1-a_i(t))\left(\lambda_{i}+\hat{\lambda}\right)+\hat{\eta} (1-a_i(t))^2 \left(\lambda_{i}+\hat{\lambda}\right) \mu_{i 2},
$$
where $\hat{\eta} > 0$ is the safety loading coefficient of the reinsurer.

Apart from their involvement in the insurance market, the insurers in this study also engage in investments within a financial market comprising both cash and a stock. The risk-free rate in the financial market is represented by $r\geqslant 0$, denoting the return on cash. The stock price is denoted by $S={S(t), t\in [0,T]}$ and is characterized using the $4/2$ stochastic volatility model, expressed as:
\begin{equation}
	\left\{\begin{aligned}
		&\frac{\rd S(t)}{S(t)}=\left(r +m({a}{Z(t)}+{{b}})\right)\rd t+{\Sigma}(t)\rd {W}(t), \quad S_0=s_0,\\
			&\rd Z(t)={\kappa}(\bar{Z}-Z(t))\rd t+{\nu}\sqrt{Z(t)}\left[{\rho}\rd {W}(t)+\sqrt{1-{\rho}^2}\rd{B}(t)\right],\quad Z(0)=z^0,
	\end{aligned}\right.
\end{equation}
where\begin{equation*}
	{\Sigma}(t)={a}\sqrt{Z(t)}+\frac{{b}}{\sqrt{Z(t)}}.
\end{equation*}	 

The processes $W=\{W(t), t\in [0,T]\}$ and $B=\{B(t), t\in [0,T]\}$ are independent standard Brownian motions. To ensure the strict positivity of the process $Z(t)$, we impose the Feller condition $2\kappa\bar{Z} > \nu^2$. The 4/2 stochastic volatility model, introduced by \cite{grasselli20174}, encompasses both the Heston model by \cite{heston1993closed} and the 3/2 model by \cite{heston1997simple} as special cases.

Let $\pi_i(t)$ denote the amount invested in the stock by insurer $i$ at time $t$, while the remaining portion $X_i(t)-\pi_i(t)$ is allocated in cash. Considering both reinsurance and investment, the non-linear\footnote{The non-linearity of the system is a result of the variance principle and the 4/2 model. In contrast to solvable mean-field games found in linear systems like those discussed in \cite{bardi2014linear}, \cite{bensoussan2016linear}, \cite{moon2016linear}, \cite{lacker2019mean}, and others, we introduce a solvable mean-field game within a non-linear system that incorporates quadratic and square root terms.} surplus process $X_i$ for insurer $i$ is expressed as follows:
$$
\begin{aligned}
	\mathrm{d} X_{i}(t)
	=&\left[r X_{i}(t)+p_i-\hat{p}_i(t)-a_i(t)\left(\lambda_{i}+\hat{\lambda}\right) \mu_{i 1}+\pi_{i}(t)m\sqrt{Z(t)}\Sigma(t)\right] \mathrm{d} t\\&+a_{i}(t) \left(\sqrt{\hat{\lambda}}\mu_{i 1} \mathrm{d}\tilde{W}(t)+\sqrt{(\hat{\lambda}+\lambda_{i})\mu_{i 2}-\hat{\lambda}\mu_{i 1}^2}\mathrm{d}\hat{W}_i(t)\right)+\pi_{i}(t){\Sigma}(t)\rd {W}(t),\\
	=&\left[r X_{i}(t)+\eta_{i}\left(\lambda_{i}+\hat{\lambda}\right) \mu_{i 1}-\hat{\eta}(1-a_i(t))^2\left(\lambda_{i}+\hat{\lambda}\right) \mu_{i 2}+\pi_{i}(t)m\sqrt{Z(t)}\Sigma(t)\right] \mathrm{d} t\\&+a_{i}(t) \left(\sqrt{\hat{\lambda}}\mu_{i 1} \mathrm{d}\tilde{W}(t)+\sqrt{(\hat{\lambda}+\lambda_{i})\mu_{i 2}-\hat{\lambda}\mu_{i 1}^2}\mathrm{d}\hat{W}_i(t)\right)+\pi_{i}(t){\Sigma}(t)\rd {W}(t).
\end{aligned}
$$

\section{\bf Problem formulation}

In the context of financial and insurance modeling, the prevalence of model uncertainty or ambiguity arises from the inherent difficulty in accurately estimating model parameters. Consequently, it becomes imperative for insurers to incorporate model uncertainty into their risk management strategies. Insurers often exhibit ambiguity aversion, necessitating the adoption of a robust risk management framework. Model uncertainly is ignored in \cite{medhin2020nonzero}, \cite{deng2018non}, \cite{bai2022hybrid}, etc.  \cite{deng2018non} find the herd effect in the competition. However, it is crucial to investigate the herd effect in competition while taking into account the presence of model uncertainty.  In this study, we focus on ambiguity-averse insurers (AAIs) and address model uncertainty by formulating the max-min robust problem for AAIs. The uncertainty w.r.t.  the reference model is characterized by a class of probability measures that are equivalent to the reference measure $\mathbb{P}$, as follows:
$$
\mathcal{Q}^{i}:=\left\{\mathbb{Q}^{\varphi_i,\chi_i,\phi_i,\vartheta_{i}} \mid \mathbb{Q}^{\varphi_i,\chi_i,\phi_i,\vartheta_{i}} \sim \mathbb{P}\right\},
$$
where functions  $\varphi_{i}(t)$, $ \chi_{i}(t),\phi_i(t),\vartheta_{i}(t)=(\vartheta_{i,1}(t),\vartheta_{i,2}(t),\cdots,\vartheta_{i,n}(t))^T$ satisfy the following conditions:
\begin{enumerate}
	\item $\varphi_{i}(t), \chi_{i}(t),\phi_i(t),\vartheta_{i}(t)$ are progressively measurable w.r.t. $\mathbb{F}$.
	\item 	The Novikov's condition is satisfied:$$
	\mathbb{E}\left[\exp \left(\frac{1}{2} \int_{0}^{T}\left[{\varphi_{i}^2(t)}+\chi_{i}^2(t)+\phi_i^2(t)+\vartheta_{i}(t)^T\vartheta_{i}(t)\right] \mathrm{d} t\right)\right]<+\infty.
	$$
\end{enumerate} 

For each $\varphi_{i}(t), \chi_{i}(t),\phi_i(t),\vartheta_{i}(t)$, the real valued process $\Theta^{\varphi_i,\chi_i,\phi_i,\vartheta_{i}}=\left\{\Theta^{\varphi_i,\chi_i,\phi_i,\vartheta_{i}}(t),{t \in[0, T]}\right\}$ is defined by
$$
\begin{aligned}
		\Theta^{\varphi_i,\chi_i,\phi_i,\vartheta_{i}}(t)=& \exp \left(\int_{0}^{t} \varphi_{i}(s) \mathrm{d} W(s)+\int_{0}^{t} \chi_{i}(s) \mathrm{d}B(s)-\frac{1}{2} \int_{0}^{t}[ \varphi_{i}^2(s) +\chi_{i}^2 (s)]\mathrm{d} s\right)\\
		&\times \exp \left(\int_{0}^{t} \phi_{i}(s) \mathrm{d} \tilde{W}(s)+\int_{0}^{t} \vartheta_{i}(t)^T \mathrm{d}\hat{\textbf{W}}(s)-\frac{1}{2} \int_{0}^{t} [\phi_{i}^2(s) +\vartheta_{i}(t)^T\vartheta_{i}(t)]\mathrm{d} s\right).
\end{aligned}
$$
Then $\Theta^{\varphi_i,\chi_i,\phi_i,\vartheta_{i}}$ is a $\mathbb{P}$-martingale and there is an equivalent probability measure $ \mathbb{Q}^{\varphi_i,\chi_i,\phi_i,\vartheta_{i}}$ defined by
$$
\left.\frac{\mathrm{d} \mathbb{Q}^{\varphi_i,\chi_i,\phi_i,\vartheta_{i}}}{\mathrm{~d} \mathbb{P}}\right|_{\mathcal{F}_{T}}=\Theta^{\varphi_i,\chi_i,\phi_i,\vartheta_{i}}(T) .
$$
It follows from Girsanov's Theorem that the  processes
\begin{equation*}
	\begin{aligned}
	 &{{W}}^{ \mathbb{Q}^{\varphi_i,\chi_i,\phi_i,\vartheta_{i}}}=\left\{{{W}}^{ \mathbb{Q}^{\varphi_i,\chi_i,\phi_i,\vartheta_{i}}}(t),t \in[0, T]\right\},
	 &&B^{ \mathbb{Q}^{\varphi_i,\chi_i,\phi_i,\vartheta_{i}}}=\left\{{{B}}^{ \mathbb{Q}^{\varphi_i,\chi_i,\phi_i,\vartheta_{i}}}(t),t \in[0, T]\right\},\\
	& \tilde{W}^{ \mathbb{Q}^{\varphi_i,\chi_i,\phi_i,\vartheta_{i}}}=\left\{\tilde{W}^{ \mathbb{Q}^{\varphi_i,\chi_i,\phi_i,\vartheta_{i}}}(t),t \in[0, T]\right\},
	&& \hat{\textbf{W}}^{ \mathbb{Q}^{\varphi_i,\chi_i,\phi_i,\vartheta_{i}}}=\left\{\hat{\textbf{W}}^{ \mathbb{Q}^{\varphi_i,\chi_i,\phi_i,\vartheta_{i}}}(t),t \in[0, T]\right\}
\end{aligned}
\end{equation*}
%
%
defined in  (\ref{newBrown}) are standard Brownian motions under $ \mathbb{Q}^{\varphi_i,\chi_i,\phi_i,\vartheta_{i}}$, where
\begin{equation}\label{newBrown}
	\begin{aligned}
	&\mathrm{d} {{W}}^{ \mathbb{Q}^{\varphi_i,\chi_i,\phi_i,\vartheta_{i}}}(t)=\mathrm{d}W(t)-\varphi_i(t) \mathrm{d} t, 
	&&\mathrm{d} B^{ \mathbb{Q}^{\varphi_i,\chi_i,\phi_i,\vartheta_{i}}}(t)=\mathrm{d} B(t)-\chi_{i}(t) \mathrm{d} t,\\
	&\mathrm{d} {\tilde{W}}^{ \mathbb{Q}^{\varphi_i,\chi_i,\phi_i,\vartheta_{i}}}(t)=\mathrm{d}\tilde{W}(t)-\phi_i(t) \mathrm{d} t, 
	&&\mathrm{d} {\hat{\textbf{W}}}^{  \mathbb{Q}^{\varphi_i,\chi_i,\phi_i,\vartheta_{i}}}(t)=\mathrm{d}\hat{\textbf{W}}(t)-\vartheta_{i}(t) \mathrm{d} t.
\end{aligned}
\end{equation}
Under the equivalent probability measure $\mathbb{Q}^{\varphi_i,\chi_i,\phi_i,\vartheta_{i}}$ and the investment-reinsurance strategies $\left\{(\pi_{k},a_k)_{k=1}^{n}\right\}$, the wealth process of AAI $k$ ($k=1,2,\dots,n$) can be expressed as follows:
\begin{equation}\label{SDE-X}
	\begin{aligned}
		&\mathrm{d} X_{k}(t)\\
		=&\left[r X_{k}(t)+\eta_{k}\left(\lambda_{k}\!+\!\hat{\lambda}\right) \!\mu_{k 1}\!-\!\hat{\eta}(1\!-\!a_k(t))^2\left(\lambda_{k}\!+\!\hat{\lambda}\right)\! \mu_{k 2}\!+\!\pi_{k}(t)m\sqrt{Z(t)}\Sigma(t)\right]\!\mathrm{d} t,\\
		&+\!a_{k}(t)\!\left(\!\sqrt{\hat{\lambda}}\mu_{k1}(\mathrm{d} {\tilde{W}}^{ \mathbb{Q}^{\varphi_i,\chi_i,\phi_i,\vartheta_{i}}}\!(t)\!+\!\phi_{i}(t)\rd t)\!+\!\sqrt{(\hat{\lambda}\!+\!\lambda_{k})\mu_{k 2}\!-\!\hat{\lambda}\mu_{k 1}^2}(\mathrm{d} {\hat{\textbf{W}}}_k^{ \mathbb{Q}^{\varphi_i,\chi_i,\phi_i,\vartheta_{i}}}(t)\!+\!\vartheta_{i,k}\!(t) \mathrm{d} t )\!\right)\\
		&+\pi_{k}(t){\Sigma}(t)(\rd W^{ \mathbb{Q}^{\varphi_i,\chi_i,\phi_i,\vartheta_{i}}}(t)+\varphi_i(t)\rd t),
	\end{aligned}
\end{equation}where
\begin{equation}\label{z}
	\rd Z(t)\!=\!{\kappa}(\bar{Z}\!-\!Z(t))\rd t\!+\!{\nu}\sqrt{Z(t)}\left[{\rho}(\rd W^{ \mathbb{Q}^{\varphi_i,\chi_i,\phi_i,\vartheta_{i}}}(t)\!+\!\varphi_i(t)\rd t)\!+\!\sqrt{1\!-\!{\rho}^2}(\rd B^{ \mathbb{Q}^{\varphi_i,\chi_i,\phi_i,\vartheta_{i}}}(t)\!+\!\chi_i(t)\rd t)\right]\!.
\end{equation}

The AAIs compete with one another and are concerned with their relative performance. The average wealth of the AAIs is denoted by $\bar{X}(T)=\frac{1}{n}\sum_{k=1}^{n} X_k(T)$. AAI $i$ is specifically interested in her relative wealth, which is expressed as $$ Y_i(t):=X_{i}(t) {-\theta_{i}}\bar{X}(t)$$ with the initial value $ y_i^0=x^0_i-\frac{\theta_i}{n}\sum_{k=1}^nx^0_k$. $\theta_{i} \in [0,1]$ denotes the risk preference of AAI $i$ concerning her own wealth versus relative wealth. Larger values of $\theta_i$ indicate a greater emphasis on relative performance. When $\theta_i$ is higher, the AAI places more importance on her relative wealth and is more concerned with her performance relative to the peers.

Based on \eqref{SDE-X}, $Y_i=\{Y_i(t),t\in[0,T]\}$ satisfies the following stochastic differential equation
\begin{equation}\label{SDE-y}
	\begin{aligned}
		\mathrm{d} Y_i(t)
		=&r Y_i(t)\rd t+\frac{\theta_i}{n}\sum_{k\neq i}\left[\eta_{k}\left(\lambda_{k}\!+\!\hat{\lambda}\right) \!\mu_{k 1}\!-\!\hat{\eta}(1\!-\!a_k(t))^2\left(\lambda_{k}\!+\!\hat{\lambda}\right)\! \mu_{k 2}\!+\!\pi_{k}(t)m\sqrt{Z(t)}\Sigma(t)\right]\!\mathrm{d} t\\
		&+\left(1-\frac{\theta_i}{n}\right)\left[\eta_{i}\left(\lambda_{i}\!+\!\hat{\lambda}\right) \!\mu_{i 1}\!-\!\hat{\eta}(1\!-\!a_i(t))^2\left(\lambda_{i}\!+\!\hat{\lambda}\right)\! \mu_{i 2}\!+\!\pi_{i}(t)m\sqrt{Z(t)}\Sigma(t)\right]\!\mathrm{d} t\\
		&\!+\!\frac{\theta_i}{n}\!\sum_{k\neq i}a_{k}(t)\!\left(\!\sqrt{\hat{\lambda}}\mu_{k1}(\mathrm{d} {\tilde{W}}^{ \mathbb{Q}^{\varphi_i,\chi_i,\phi_i,\vartheta_{i}}}(t)\!+\!\phi_{i}(t)\rd t)\!+\!\sqrt{(\hat{\lambda}+\lambda_{k})\mu_{k 2}-\hat{\lambda}\mu_{k 1}^2}(\mathrm{d} {\hat{\textbf{W}}}_k^{ \mathbb{Q}^{\varphi_i,\chi_i,\phi_i,\vartheta_{i}}}(t)\!+\!\vartheta_{i,k}(t) \mathrm{d} t )\right)\\
		&\!+\!\left(\!1\!-\!\frac{\theta_i}{n}\right)\!a_{i}(t)\!\left(\!\sqrt{\hat{\lambda}}\mu_{i1}(\mathrm{d} {\tilde{W}}^{ \mathbb{Q}^{\varphi_i,\chi_i,\phi_i,\vartheta_{i}}}(t)\!+\!\phi_{i}(t)\rd t)\!+\!\sqrt{(\hat{\lambda}+\lambda_{i})\mu_{i 2}-\hat{\lambda}\mu_{i 1}^2}(\mathrm{d} {\hat{\textbf{W}}}_i^{ \mathbb{Q}^{\varphi_i,\chi_i,\phi_i,\vartheta_{i}}}(t)\!+\!\vartheta_{i,i}(t) \mathrm{d} t )\right)\\
		&+\frac{\theta_i}{n}\sum_{k\neq i}\pi_{k}(t){\Sigma}(t)(\rd W^{ \mathbb{Q}^{\varphi_i,\chi_i,\phi_i,\vartheta_{i}}}(t)+\varphi_i(t)\rd t)\!+\!\left(1\!-\!\frac{\theta_i}{n}\right)\!\pi_{i}(t){\Sigma}(t)(\rd W^{ \mathbb{Q}^{\varphi_i,\chi_i,\phi_i,\vartheta_{i}}}(t)\!+\!\varphi_i(t)\rd t).
	\end{aligned}
\end{equation}
We now introduce the robust game for AAIs under the mean-variance criterion. For an AAI $i$ who is concerned with relative performance, her preference without ambiguity can be expressed as
$$
\mathbb{E}[Y_i(T)]-\frac{\delta_{i}}{2}\var(Y_i(T)),
$$
where $\delta_{i}>0$ represents AAI $i$'s risk aversion coefficient.

In our work, the AAIs aim to search for the robust strategies under worst-case scenarios. The penalty of $\mathbb{Q}^{\varphi_i,\chi_i,\phi_i,\vartheta_{i}}$, denoted by $\gamma\left( \mathbb{Q}^{\varphi_i,\chi_i,\phi_i,\vartheta_{i}}\right)$, measures the distance between this probability measure and the reference measure $\mathbb{P}$. To account for ambiguity aversion, we use the following relative penalty function:
$$
\gamma\left( \mathbb{Q}^{\varphi_i,\chi_i,\phi_i,\vartheta_{i}}\right)=\frac{1}{2 \Psi_i}\int_{0}^{T} \left[{\varphi_{i}^{2}(t)}+{\chi_{i}^2(t)}+{\phi_{i}^{2}(t)}+{\vartheta_{i}(t)^T\vartheta_{i}(t)}\right] \mathrm{d} t,
$$
where $\Psi_i>0$ is a constant representing  AAI $i$'s level of ambiguity aversion.
 
Then the preference of AAI $i$ at time $t$ given states $ Y_i(t)=y_i$ and $Z(t)=z $ under the equivalent probability measure  $\mathbb{Q}^{\varphi_i,\chi_i,\phi_i,\vartheta_{i}}\in \mathcal{Q}^{i}$ is as follows
\begin{equation}\label{equ:ji}
	\begin{aligned}
&J_{i}\left(\left\{(\pi_{k},a_k)_{k=1}^{n}\right\},\left(\varphi_i,\chi_i,\phi_i,\vartheta_{i}\right),t,y_i,z\right)\\=&\mathbb{E}_{t,y_i,z}^{ \mathbb{Q}^{\varphi_i,\chi_i,\phi_i,\vartheta_{i}}}\left[Y_i(T)\right]
-{\delta_i\over 2}\var_{t,y_i,z}^{ \mathbb{Q}^{\varphi_i,\chi_i,\phi_i,\vartheta_{i}}}\left[Y_i(T)\right]
\\&+
\mathbb{E}_{t,y_i,z}^{ \mathbb{Q}^{\varphi_i,\chi_i,\phi_i,\vartheta_{i}}}\left[ \frac{1}{2 \Psi_i}\int_{t}^{T}  \left[{\varphi_{i}^{2}(s)}+{\chi_{i}^2(s)}+{\phi_{i}^{2}(s)}+{\vartheta_{i}(s)^T\vartheta_{i}(s)}\right] \mathrm{d} s\right], 
	\end{aligned}
\end{equation}where $ \mathbb{E}_{t,y_i,z}^{ \mathbb{Q}^{\varphi_i,\chi_i,\phi_i,\vartheta_{i}}}[\cdot] $ and $ \var_{t,y_i,z}^{ \mathbb{Q}^{\varphi_i,\chi_i,\phi_i,\vartheta_{i}}}[\cdot] $ mean the conditional expectation and the conditional variance under probability measure $ \mathbb{Q}^{\varphi_i,\chi_i,\phi_i,\vartheta_{i}}$ given states $ Y_i(t)=y_i$ and $Z(t)=z $, respectively.

To ensure that the verification theorem holds, we establish the admissible set of strategies and density generators for the AAIs:
\begin{equation*}
	\begin{aligned}
		\mathscr{U}=&\Big\{\{(\pi_k,a_k)_{k=1}^n\}:(\pi_k,a_k) \text{ is progressively measurable w.r.t. }\mathbb{F}\text{ and }0\leqslant a_k(t)\leqslant1, \forall 1\leqslant k\leqslant n,\\
		&\exists \{\mathcal{C}_k\}_{k=1}^n\subseteq\mathbb{R}_+, \sup_{1\leqslant k\leqslant n}\mathcal{C}_k<\infty, \text{ such that } \pi_k(t)=\ell_k(t)\frac{Z(t)}{aZ(t)+b},|\ell_k(t)|\leqslant \mathcal{C}_k, \forall t\in[0,T]\Big\},\\
		\mathscr{U}_i=&\Big\{(\pi_i,a_i):(\pi_i,a_i) \text{ is progressively measurable w.r.t. }\mathbb{F}\text{ and }0\leqslant a_i(t)\leqslant1,\\
		&\exists \mathcal{C}_i\in\mathbb{R}_+, \text{ such that } \pi_i(t)=\ell_i(t)\frac{Z(t)}{aZ(t)+b},|\ell_i(t)|\leqslant \mathcal{C}_i, \forall t\in[0,T]\Big\},\\
		\mathscr{U}_{-i}=&\Big\{\{(\pi_k,a_k)_{k\neq i}\}:(\pi_k,a_k) \text{ is progressively measurable w.r.t. }\mathbb{F}\text{ and }0\leqslant a_k(t)\leqslant1, \forall k\neq i,\\
		&\exists \{\mathcal{C}_k\}_{k\neq i}\subseteq\mathbb{R}_+, \sup_{k\neq i}\mathcal{C}_k<\infty, \text{ such that } \pi_k(t)=\ell_k(t)\frac{Z(t)}{aZ(t)+b},|\ell_k(t)|\leqslant \mathcal{C}_k, \forall t\in[0,T]\Big\},\\
		\mathscr{A}=&\Big\{\left(\varphi, \chi,\phi,\vartheta\right):\left(\varphi, \chi,\phi,\vartheta\right) \text{ is progressively measurable w.r.t. }\mathbb{F}\text{ and }\left(\varphi(t), \chi(t)\right)\\&\quad=\left(\hslash(t)\!\sqrt{Z(t)},\hbar(t)\!\sqrt{Z(t)} \right),
		|\hslash(t)|, |\hbar(t)|\leqslant \mathcal{C},\sup|\phi|<\infty,\sup|\vartheta|<\infty,~ \forall t\in[0,T]\Big\},
	\end{aligned}
\end{equation*} 
where $ \mathcal{C} $ is a constant and $ \mathcal{C}^2<\frac{\kappa^2}{2\nu^2} $.
\begin{remark}
	For $ \left(\varphi, \chi,\phi,\vartheta\right)\in\mathscr{A} $, we see  \begin{equation*}
		\begin{aligned}
			&\mathbb{E}\left[\exp \left(\frac{1}{2} \int_{0}^{T}\left[{\varphi^2(t)}+\chi^2(t)+\phi^2(t)+\vartheta(t)^T\vartheta(t)\right] \mathrm{d} t\right)\right]\\
			=&\mathbb{E}\left[\exp \left(\frac{1}{2} \int_{0}^{T}\left[{\hslash^2(t)}+\hbar^2(t)\right]Z(t) \mathrm{d} t+\frac{1}{2} \int_{0}^{T}[\phi^2(t)+\vartheta(t)^T\vartheta(t)] \mathrm{d} t\right)\right]\\
			\leqslant&\mathbb{E}\left[\exp\left(\frac{T}{2}\sup|\phi|^2+\frac{T}{2}\sup\|\vartheta\|^2\right)\times\exp \left( \mathcal{C}^2\int_{0}^{T}Z(t) \mathrm{d} t\right)\right].
		\end{aligned}
	\end{equation*}
	By Proposition 5.1 in \cite{Kraft}, if condition\begin{equation*}
		\mathcal{C}^2<\frac{\kappa^2}{2\nu^2}
	\end{equation*}is satisfied, then \begin{equation*}
		\mathbb{E}\left[\exp \left( \mathcal{C}^2\int_{0}^{T}Z(t) \mathrm{d} t\right)\right]<\infty.
	\end{equation*}
	Thus \begin{equation*}
		\mathbb{E}\left[\exp \left(\frac{1}{2} \int_{0}^{T}\left[{\varphi^2(t)}+\chi^2(t)+\phi^2(t)+\vartheta(t)^T\vartheta(t)\right] \mathrm{d} t\right)\right]<\infty.
	\end{equation*}
	The Novikov's condition holds and $\mathbb{Q}^{\varphi,\chi,\phi,\vartheta}$ is a well-defined probability  measure which is equivalent to $ \mathbb{P} $.
\end{remark}

 Denote the objective function of AAI $i$ when AAI $k$ takes strategy $ (\pi_{k},a_k), \forall 1\leqslant k\leqslant n $, under the worst-case scenario as 
$$
\bar{J}_{i}\left(\left\{(\pi_{k},a_k)_{k=1}^{n}\right\},t,y_i,z\right)=\inf_{\left(\varphi_i,\chi_i,\phi_i,\vartheta_{i}\right)\in \mathscr{A}}J_{i}\left(\left\{(\pi_{k},a_k)_{k=1}^{n}\right\},\left(\varphi_i,\chi_i,\phi_i,\vartheta_{i}\right),t,y_i,z\right).
$$
If \begin{equation*}
\left(\varphi_i^\dag,\chi_i^\dag,\phi_i^\dag,\vartheta_{i}^\dag\right)=\arg	\inf_{\left(\varphi_i,\chi_i,\phi_i,\vartheta_{i}\right)\in \mathscr{A}}J_{i}\left(\left\{(\pi_{k},a_k)_{k=1}^{n}\right\},\left(\varphi_i,\chi_i,\phi_i,\vartheta_{i}\right),t,y_i,z\right),
\end{equation*}then we call $ \left(\varphi_i^\dag,\chi_i^\dag,\phi_i^\dag,\vartheta_{i}^\dag\right) $ the worst-case scenario density generator associated with the strategy.

Then  the goal of AAI $i$ is to search the robust investment-reinsurance response strategies when the peers' strategies are given under the worst-case scenario, which is formulated as follows:
\begin{equation}\nonumber
\sup_{(\pi_i,a_i)\in\mathscr{U}_i}\bar{J}_{i}\left(\left\{(\pi_{k},a_k)_{k=1}^{n}\right\},t,y_i,z\right).
\end{equation}

 The inclusion of the variance operator in the objective function introduces time-inconsistency, posing challenges for the application of dynamic programming methods. To address this issue, we adopt the approach proposed in \cite{bjork2017time}. This method facilitates the derivation of time-consistent strategies for the AAIs under the mean-variance criterion.
We  first introduce the concept of the robust time-consistent  response strategy for each AAI, which is widely called  ``weak equilibrium strategy'', see \cite{EkelandIvar}, \cite{ZhouZhou} and \cite{YFY}. However, it's essential to clarify that this strategy is defined when the strategies of others are given. To avoid any misunderstandings regarding the Nash equilibrium of the $n$-insurer game and mean-field equilibrium of the mean-field game, we specifically refer to this strategy as the robust time-consistent response strategy or simply the best response strategy.
\begin{definition}{\label{def}} For any fixed $(t,y_i,z)\in [0,T]\times \mathbb{R}\times \mathbb{R}_+$, consider an admissible  strategy $ (\pi_i^\dag,a_i^\dag) $ for AAI $ i $ given that AAI $k$ takes admissible strategy $ (\pi_{k},a_k), k\neq i $. For any $(\pi_i,a_i)\in \mathscr{U}_i$ and $h>0$,  define a new $ h $-perturbed strategy $(\pi_i^h,a_i^h)$ by
	\begin{equation*}
		(\pi_{i}^h(s),a_{i}^h(s))=\left\{
		\begin{aligned}
			&(\pi_i(s),a_i(s)),&&t\leqslant s<t+h,
			\\&(\pi^\dag_i(s),a_i^\dag(s)),&& t+h\leqslant s\leqslant T.
		\end{aligned}
		\right.
	\end{equation*}
	If for all $(\pi_i,a_i)\in \mathscr{U}_i$,   we have 
	\begin{eqnarray}{\label{mv1}}
		\mathop {\lim \inf }\limits_{h\rightarrow 0^+} \frac{\bar{J}_{i}\left(\left\{(\pi_{k},a_k)_{k\neq i},(\pi_{i}^\dag,a_i^\dag)\right\},t,y_i,z\right)-\bar{J}_{i}\left(\left\{(\pi_{k},a_k)_{k\neq i},(\pi_i^h,a_i^h)\right\},t,y_i,z\right)}{h}\geqslant 0,
	\end{eqnarray}
	then  $ (\pi_i^\dag,a_i^\dag) $  is the \textbf{robust time-consistent  investment-reinsurance {response} strategy} or simply the \textbf{best response strategy} for AAI $ i $ and the robust time-consistent response value function of AAI $i$ is given by $\bar{J}_{i}\left(\left\{(\pi_{k},a_k)_{k\neq i},(\pi_{i}^\dag,a_i^\dag)\right\},t,y_i,z\right)$.
\end{definition}

Next, we  present the definition of a robust Nash equilibrium game for the $n$-insurer game under insurance and volatility risks for the AAIs.
\begin{definition}\label{def1}
	We say that   $\left\{(\pi_{k}^*,a_k^*)_{k=1}^{n}\right\}\in\mathscr{U}$ is a {\textbf{robust time-consistent Nash equilibrium}} or simply the \textbf{robust equilibrium} if $\forall 1\leqslant i\leqslant n$, $ (\pi_{i}^*,a_i^*) $ is the best response strategy for AAI $ i $ when AAI $k$ takes strategy $ (\pi^*_{k},a^*_k), k\neq i $. We call $ (\pi_{i}^*,a_i^*) $ the \textbf{robust  equilibrium strategy} for AAI $ i $ and  and denote the robust time-consistent value function of of AAI $i$ as $V^{(i)}(t,y,z):= \bar{J}_{i}\left(\left\{(\pi_{k}^*,a_k^*)_{k=1}^{n}\right\},t,y,z\right) $.
\end{definition}


\section{\bf Robust $n$-insurer game}

In this section, we focus on obtaining the robust equilibrium strategies for the AAIs. Based on  \cite{bjork2017time} and \cite{Chi2018}, we derive the $n$-deimension extended HJBI equations for the robust mean-variance game. To begin with, we introduce the infinitesimal generator $\mathcal{A}^{\{(\pi_k,a_k)\}_{k=1}^n,(\varphi_i,\chi_i,\phi_i,\vartheta_{i})}$ of system \eqref{z}--\eqref{SDE-y}  for all $ (t,y,z) \in[0,T]\times\mathbb{R}\times\mathbb{R}_+$ and $f(t,y,z)\in C^{1,2,2}([0,T]\times\mathbb{R}\times\mathbb{R}_+)$, 
\begin{equation}\label{equ:af}
\hspace{-5pt}	\begin{aligned}
		&\mathcal{A}^{\{(\pi_k,a_k)\}_{k=1}^n,(\varphi_i,\chi_i,\phi_i,\vartheta_{i})}f(t,y,z)\\
		=&f_t\!+\!\left[{\kappa}(\bar{Z}\!-\!z)+{\nu}\sqrt{z}({\rho}\varphi_i\!+\!\sqrt{1-{\rho}^2}\chi_i)\right]f_{z}\!+\!\nu^2\frac{1}{2}zf_{zz}\!+\!\nu\sqrt{z}\rho\sigma((1\!-\!\frac{\theta_{i}}{n})\pi_i\!-\!\frac{\theta_{i}}{n}\sum_{k\neq i}\pi_{k}(t))f_{yz}+ryf_y\\
		&+(1-\frac{\theta_i}{n})\left[\eta_{i}\left(\lambda_{i}+\hat{\lambda}\right) \mu_{i 1}-\hat{\eta}(1-a_i(t))^2\left(\lambda_{i}+\hat{\lambda}\right) \mu_{i 2}+\pi_{i}(t)\left(m+\frac{\varphi_i(t)}{\sqrt{z}}\right)({a}{z}+{{b}})\right] f_y\\
		&-\frac{\theta_{i}}{n}\sum_{k\neq i}\left[\eta_{k}\left(\lambda_{k}+\hat{\lambda}\right) \mu_{k 1}-\hat{\eta}(1-a_k(t))^2\left(\lambda_{k}+\hat{\lambda}\right) \mu_{k 2}+\pi_{k}(t)\left(m+\frac{\varphi_i(t)}{\sqrt{z}}\right)({a}{z}+{{b}})\right]f_y\\
		&+(1-\frac{\theta_i}{n})a_{i}(t)\left(\sqrt{\hat{\lambda}}\mu_{i1}\phi_{i}(t)+\sqrt{(\hat{\lambda}+\lambda_{i})\mu_{i 2}-\hat{\lambda}\mu_{i 1}^2}\vartheta_{i,i}(t)\right)f_y\\
		&-\frac{\theta_{i}}{n}\sum_{k\neq i}a_{k}(t) \left(\sqrt{\hat{\lambda}}\mu_{k1}\phi_{i}(t)+\sqrt{(\hat{\lambda}+\lambda_{k})\mu_{k 2}-\hat{\lambda}\mu_{k 1}^2}\vartheta_{i,k}(t) \right)f_y\\
		&+\!\frac{1}{2}(1\!-\!\frac{\theta_i}{n})^2a_{i}^2(t) \left(\hat{\lambda}\!+\!\lambda_{i}\right) \mu_{i 2}f_{yy}\!+\!\frac{\theta_{i}^2}{2n^2}\sum_{k\neq i}(a_{k}(t))^2 \left(\left(\hat{\lambda}\!+\!\lambda_{k}\right) \mu_{k 2}-\hat{\lambda}\mu_{k1}^2\right)f_{yy}\\
		&\!+\!\hat{\lambda}\frac{\theta_i^2}{2n^2}\!\left[\sum_{k\neq i}a_k(t)\mu_{k 1}\right]^2\!\!\!\!f_{yy}\!-\!\hat{\lambda}\frac{\theta_{i}}{n}(1\!-\!\frac{\theta_i}{n})a_i(t)\mu_{i 1}\sum_{k\neq i}a_{k}(t)\mu_{k 1}f_{yy}\!+\!\frac{1}{2}\sigma^2((1\!-\!\frac{\theta_{i}}{n})\pi_i\!-\!\frac{\theta_{i}}{n}\sum_{k\neq i}\pi_{k}(t))^2f_{yy}.
	\end{aligned}
\end{equation} 
Denote\begin{equation*}
	\begin{aligned}
		&	\mathcal{L}_{i}\left({\{(\pi_k,a_k)_{k=1}^n\},(\varphi,\chi,\phi,\vartheta)},f,g, h,(t,y,z)\right)\\=&\mathcal{A}^{\{(\pi_k,a_k)_{k=1}^n\},(\varphi,\chi,\phi,\vartheta)}f(t,y,z)+  \frac{\varphi^{2}(t)}{2 h} +\frac{\chi^2(t)}{2 h} +  \frac{\phi^{2}(t)}{2 h} +\frac{\vartheta(t)^T\vartheta(t)}{2 h} \\
		&+\delta_{i} g(t,y,z) \mathcal{A}^{\{(\pi_k,a_k)_{k=1}^n\},(\varphi,\chi,\phi,\vartheta)}g(t,y,z)-\frac{\delta_{i}}{2}\mathcal{A}^{\{(\pi_k,a_k)_{k=1}^n\},(\varphi,\chi,\phi,\vartheta)}g^2(t,y,z).
	\end{aligned}
\end{equation*}

Similar to the derivation in \cite{bjork2017time} and \cite{wang2022robust}, we introduce the $n$-dimensional extended HJBI equations for the robust time-consistent equilibrium strategy and value function, as defined in Definition \ref{def}. The details of the derivation are omitted for brevity. 
\begin{proposition}[extended HJBI equations]\label{prop:hjbi}
	The extended HJBI equations of AAI $i$ for the two-tuple $ (V,\Upsilon) $ where $ V\in C^{1,2,2}([0,T]\times\mathbb{R}\times\mathbb{R}_+) $ and $ \Upsilon \in C^{1,2,2}([0,T]\times\mathbb{R}\times\mathbb{R}_+)$ are defined as
	\begin{equation}\label{HJBI-MV}
		\begin{aligned}
			\sup _{(\pi_{i},a_i)\in\mathscr{U}_i} \inf _{\left(\varphi_i,\chi_i,\phi_i,\vartheta_{i}\right)\in\mathscr{A}}&\mathcal{L}_{i}\left({\left\{(\pi_{k}^*,a_k^*)_{k\neq i},(\pi_{i},a_i)\right\},(\varphi_i,\chi_i,\phi_i,\vartheta_{i})},V,\Upsilon,\Psi_i,(t,y,z)\right)  =0
		\end{aligned}
	\end{equation}
	with boundary condition $ V(T,y,z)=y $
	and \begin{equation}\label{g-MV}
		\mathcal{A}^{\left\{(\pi_{k}^*,a_k^*)_{k\neq i},(\pi_{i}^\circ,a_i^\circ)\right\},(\varphi_i^\circ,\chi_i^\circ,\phi_i^\circ,\vartheta_{i}^\circ)}\Upsilon(t,y,z)=0
	\end{equation}with boundary condition $ \Upsilon(T,y,z)=y $, where\begin{equation}\label{HJBI-opt-MV}
		\begin{aligned}
		&	(\pi^\circ_i,a^\circ_i)=\arg	\sup _{(\pi_i,a_i)\in\mathscr{U}_i} \inf _{(\varphi_i,\chi_i,\phi_i,\vartheta_{i})\in\mathscr{A}}\mathcal{L}_{i}\left({\left\{(\pi_{k}^*,a_k^*)_{k\neq i},(\pi_{i},a_i)\right\},(\varphi_i,\chi_i,\phi_i,\vartheta_{i})},V,\Upsilon,\Psi_i,(t,y,z)\right),\\
		&	( \varphi^\circ_{i},\chi^\circ_{i},\phi_i^\circ,\vartheta_{i}^\circ)=\arg	 \inf _{(\varphi_i,\chi_i,\phi_i,\vartheta_{i})\in\mathscr{A}}\mathcal{L}_{i}\left({\left\{(\pi_{k}^*,a_k^*)_{k\neq i},(\pi_{i}^\circ,a_i^\circ)\right\},(\varphi_i,\chi_i,\phi_i,\vartheta_{i})},V,\Upsilon,\Psi_i,(t,y,z)\right).
		\end{aligned}
	\end{equation}
\end{proposition}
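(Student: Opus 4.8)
The statement is, in essence, the extended-HJB system of \cite{bjork2017time} (see also \cite{Chi2018}) transported to the present robust max--min setting with a relative-entropy penalty, so the plan is to reproduce the classical equilibrium derivation while tracking two extra features: the adversarial density generator and the fixed-peers reduction. First I would freeze the peers' candidate equilibrium $(\pi_k^*,a_k^*)_{k\neq i}$, so that AAI $i$ faces a single-agent robust time-inconsistent control problem for the Markov state $(Y_i,Z)$ driven by \eqref{z}--\eqref{SDE-y}, with control $(\pi_i,a_i)\in\mathscr{U}_i$, adversarial control $(\varphi_i,\chi_i,\phi_i,\vartheta_i)\in\mathscr{A}$, and value $V(t,y,z)=\bar J_i(\{(\pi_k^*,a_k^*)_{k\neq i},(\pi_i^*,a_i^*)\},t,y,z)$. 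Then I would introduce the equilibrium target function $\Upsilon(t,y,z)=\mathbb{E}^{\mathbb{Q}^{\dagger}}_{t,y,z}[Y_i(T)]$, i.e. the conditional mean of the terminal relative wealth computed along the equilibrium strategy $(\pi_i^*,a_i^*)$ (with the peers' $(\pi_k^*,a_k^*)$) under the corresponding worst-case measure $\mathbb{Q}^{\dagger}$; being a conditional expectation of a terminal payoff along fixed Markovian controls, Feynman--Kac yields (formally) $\Upsilon\in C^{1,2,2}$ and the linear backward equation $\mathcal{A}^{\{(\pi_k^*,a_k^*)_{k\neq i},(\pi_i^\circ,a_i^\circ)\},(\varphi_i^\circ,\chi_i^\circ,\phi_i^\circ,\vartheta_i^\circ)}\Upsilon=0$ with $\Upsilon(T,y,z)=y$, which is exactly \eqref{g-MV}.

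Next I would run the local (perturbation) analysis of Definition \ref{def}. Let $f$ denote the equilibrium value of $\mathbb{E}[Y_i(T)]-\tfrac{\delta_i}{2}\mathbb{E}[Y_i(T)^2]$ plus the equilibrium entropy penalty accrued on $[t,T]$; the variance identity $\var^{\mathbb{Q}}[Y_i(T)]=\mathbb{E}^{\mathbb{Q}}[Y_i(T)^2]-(\mathbb{E}^{\mathbb{Q}}[Y_i(T)])^2$ then gives $V=f+\tfrac{\delta_i}{2}\Upsilon^2$, and $f$ itself solves a linear backward equation with terminal datum $f(T,y,z)=y-\tfrac{\delta_i}{2}y^2$. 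For the $h$-perturbed strategy $(\pi_i^h,a_i^h)$ one conditions on $\mathcal{F}_{t+h}$, using that the equilibrium strategy and its worst-case generator are in force on $[t+h,T]$, expresses the remaining objective through $f$ and $\Upsilon$, and Itô-expands over $[t,t+h)$. This yields $\bar J_i(\{(\pi_k^*,a_k^*)_{k\neq i},(\pi_i^h,a_i^h)\},t,y,z)=V(t,y,z)+h\inf_{(\varphi_i,\chi_i,\phi_i,\vartheta_i)}\big[\mathcal{A}^{(\pi_i,a_i),(\varphi_i,\dots)}f+\delta_i\Upsilon\,\mathcal{A}^{(\pi_i,a_i),(\varphi_i,\dots)}\Upsilon+\tfrac{1}{2\Psi_i}(\varphi_i^2+\chi_i^2+\phi_i^2+\vartheta_i^T\vartheta_i)\big]+o(h)$, the inner $\inf$ coming from $\bar J_i=\inf_{\mathbb{Q}}J_i$ and the penalty density from the $[t,t+h)$ portion of the entropy term. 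Feeding this into the $\liminf\ge0$ requirement \eqref{mv1} and letting $h\to0^+$ forces the bracketed quantity, after the $\inf$, to be $\le0$ for every $(\pi_i,a_i)$ with equality at the equilibrium; rewriting $\mathcal{A}^{\cdots}f=\mathcal{A}^{\cdots}V-\tfrac{\delta_i}{2}\mathcal{A}^{\cdots}\Upsilon^2$ identifies the bracket with $\mathcal{L}_i(\{(\pi_k^*,a_k^*)_{k\neq i},(\pi_i,a_i)\},(\varphi_i,\chi_i,\phi_i,\vartheta_i),V,\Upsilon,\Psi_i,(t,y,z))$, so $\sup_{(\pi_i,a_i)}\inf_{(\varphi_i,\dots)}\mathcal{L}_i=0$, which is \eqref{HJBI-MV}. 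The terminal conditions $V(T,\cdot)=y=\Upsilon(T,\cdot)$ follow from $Y_i(T)=y$ and the vanishing of variance and penalty at $T$, and \eqref{HJBI-opt-MV} merely records the argmax--argmin producing the equilibrium feedback and its worst-case generator, which, plugged back, make \eqref{g-MV} self-consistent.

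The step I expect to be the main obstacle is the interplay between time-inconsistency (the variance term) and robustness (the max--min over measures): one must justify that the worst-case density generator defining $\Upsilon$ along the equilibrium is consistent with the one emerging in the $h\to0$ limit --- that is, that the inner $\inf$ over $\mathscr{A}$ can be frozen over the perturbation window and that $\sup\inf=\inf\sup$ holds for the local operator $\mathcal{L}_i$, so that $(\pi_i^\circ,a_i^\circ)$ and $(\varphi_i^\circ,\chi_i^\circ,\phi_i^\circ,\vartheta_i^\circ)$ in \eqref{HJBI-opt-MV} form a genuine saddle. This is where the structure of $\mathscr{A}$ (the affine form $(\varphi_i,\chi_i)=(\hslash\sqrt{Z},\hbar\sqrt{Z})$ with bounded $\hslash,\hbar$) together with the strict convexity of the penalty against the concavity of $\mathcal{A}^{\cdots}f$ in $(\varphi_i,\chi_i,\phi_i,\vartheta_i)$ enters. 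A secondary difficulty is that the coefficients of the $4/2$ model are not uniformly Lipschitz, so the Itô expansions above are legitimate only modulo integrability --- exactly what the admissible classes $\mathscr{U}_i,\mathscr{A}$ and the bound $\mathcal{C}^2<\kappa^2/(2\nu^2)$ are designed to guarantee and what the verification theorem (Theorem \ref{solution-HJBI}) makes rigorous; at this level one would therefore present only the formal derivation, as the paper does.
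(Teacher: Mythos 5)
Your derivation is correct and is precisely the Bj\"ork--Murgoci-style argument that the paper itself invokes but omits: Proposition \ref{prop:hjbi} is stated essentially as a definition of the extended HJBI system (``are defined as''), with the derivation explicitly ``omitted for brevity'' and deferred to the cited references, and your reconstruction --- the auxiliary function $\Upsilon$ as the conditional mean along the equilibrium under the worst-case measure, the decomposition $V=f+\tfrac{\delta_i}{2}\Upsilon^2$, the $h$-perturbation yielding $\sup\inf\mathcal{L}_i=0$, and the acknowledgement that the argument is formal with rigor supplied by the verification step --- is exactly that omitted derivation. One minor correction: the rigorous justification of the It\^o expansions and integrability is Theorem \ref{Verification}, not Theorem \ref{solution-HJBI} (the latter solves the HJBI system rather than verifying it).
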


We call $ 	(\pi^\circ_i,a^\circ_i) $ the candidate response strategy of AAI $ i $ and call $ ( \varphi^\circ_{i},\chi^\circ_{i},\phi_i^\circ,\vartheta_{i}^\circ) $ the worst-case scenario density generator  associated with the candidate  strategy  of AAI $ i $. Assuming that the two-tuple $ (v^{(i)},\varUpsilon^{(i)}) $ satisfies (\ref{HJBI-MV}) and (\ref{g-MV}), we refer to $ (v^{(i)},\varUpsilon^{(i)}) $ as the candidate value function tuple, with $ v^{(i)}$ being the candidate value function.

To begin with, we will use the extended HJBI equations \eqref{HJBI-MV}-\eqref{g-MV} to derive the candidate response strategies, the associated worst-case scenario density generators, and the candidate value function, as outlined in Theorem \ref{solution-HJBI}. Subsequently, we proceed to prove that under specific conditions the candidate response strategies are the robust time-consistent equilibrium strategies in Theorem \ref{Verification}.

We first list some compatible conditions about the parameters.
Let \begin{equation*}
	\left\{\begin{aligned}
		&K_{i,1}=\diag(-\frac{1}{2}\nu^2(\rho^2\frac{\Psi_i\delta_{i}}{\Psi_i+\delta_{i}}+(1-\rho^2)\Psi_i), \nu^2\rho^2\frac{\delta_{i} \Psi_i}{(\Psi_i+{\delta_{i}})^2}),\\
		&K_{i,2}=\diag(-\frac{1}{2}\nu^2\delta_{i}(1-\rho^2\frac{\delta_{i}}{\Psi_i+\delta_{i}} ),\nu^2\rho^2\Psi_i\frac{\delta_{i} \Psi_i}{(\Psi_i+{\delta_{i}})^2}), \\
		&K_{i,1,2}=\diag(-\nu^2\left[ \rho^2\Psi_i\frac{2\Psi_i{\delta_{i}}}{(\Psi_i+{\delta_{i}})^2}+(1-\rho^2)\Psi_i\right],\nu^2\rho^2\frac{\Psi_i\delta_{i}}{\Psi_i+\delta_{i}}),\\
		&\bar{B}_{i,1}=\diag(-(\kappa+m\nu\rho\frac{\Psi_i}{\Psi_i+\delta_{i}}), -\left[\kappa +m\nu\rho\frac{{\delta_{i}}^2+(\Psi_i)^2}{(\Psi_i+{\delta_{i}})^2}\right]), \\
		&\bar{B}_{i,2}=\diag (-m\nu\rho\frac{\delta_{i}}{\Psi_i+\delta_{i}},-m\nu\rho\frac{2\Psi_i{\delta_{i}}}{(\Psi_i+{\delta_{i}})^2}), \\
		&G_i=\diag(\frac{m^2}{2(\Psi_i+\delta_{i})}, \frac{\delta_{i} m^2}{(\Psi_i+\delta_{i})^2}),
	\end{aligned}\right.
\end{equation*}

and $ H=\begin{pmatrix}
	0&1\\1&0
\end{pmatrix} $, where $ \diag(a,b)= \begin{pmatrix}
	a&0\\0&b
\end{pmatrix}$. Define 
\begin{equation*}
\alpha_{i,1}=\|K_{i,1}\|_{\sup}+\|K_{i,2}\|_{\sup}+\|K_{i,1,2}\|_{\sup},\quad\alpha_{i,2}=\|\bar{B}_{i,1}\|_{\sup}+\|\bar{B}_{i,2}\|_{\sup},\quad	\alpha_{i,3}=\|G_{i}\|_{\sup}
\end{equation*}
and
\begin{equation*}
 \Delta_i = \alpha_{i,2}^2-4\alpha_{i,1}\alpha_{i,3} ,\quad \varsigma_{i,1}=\frac{-\alpha_{i,2}+\sqrt{\Delta_i}}{2\alpha_{i,1}} ,\quad \varsigma_{i,2}=\frac{-\alpha_{i,2}-\sqrt{\Delta_i}}{2\alpha_{i,1}},
\end{equation*}
where $ \|\diag(a,b)\|_{\sup}= \max\{|a|,|b|\}$.

Then we have the following results.
\begin{proposition}\label{ricc}
$\forall 1\leqslant i\leqslant n$, if one of the following three compatible conditions holds, then  we can define the corresponding $ U_i(t) $ and we will find that the matrix Riccati equation (\ref{mat-ricc}) admits a solution $ F_i$ with $ \|F_i(t)\|_{\sup}\leqslant U_i(t) $ for $ t\in[0,T] $.\begin{equation}\label{mat-ricc}
		\left\{\begin{aligned}
			&\frac{\rd F_i}{\rd t}=F_iK_{i,1}F_{i}+HF_iHK_{i,2}F_iH+\bar{B}_{i,1}F_i+HF_iH\bar{B}_{i,2}+HF_iK_{i,1,2}HF_i+ G_i,\\&F_i(0)=\diag(0,0).
		\end{aligned}\right.
	\end{equation}
\begin{itemize}
	\item [(i)]If $ \Delta_i =0 $ and $ T<\frac{2}{\alpha_{i,2}} $, then define 
	\begin{equation*}
		 U_i(t)=\varsigma_{i,1}+\frac{\alpha_{i,2}}{\alpha_{i,1}(2-\alpha_{i,2}t)}.
	\end{equation*}
	\item [(ii)]If $ \Delta_i >0 $ and $ T<\frac{1}{\sqrt{\Delta_i}}\log(\frac{\varsigma_{i,2}}{\varsigma_{i,1}}) $, then define
	\begin{equation*}
	U_i(t)=\frac{\alpha_{i,3}}{\alpha_{i,1}}\frac{1-e^{\alpha_{i,1}(\varsigma_{i,1}-\varsigma_{i,2})t}}{\varsigma_{i,2}-\varsigma_{i,1}e^{\alpha_{i,1}(\varsigma_{i,1}-\varsigma_{i,2})t}}.
	\end{equation*}
	\item [(iii)]If $ \Delta_i <0 $ and $ T<\frac{1}{\sqrt{|\Delta_i|}}(\pi+2\arctan(\frac{\Re{\varsigma_{i,1}}}{\Im{\varsigma_{i,1}}})) $, where $ \Re{\varsigma_{i,1}} $  represents the real part of $ \varsigma_{i,1} $ and $ \Im{\varsigma_{i,1}} $ represents the imaginary part of $ \varsigma_{i,1} $, then define
	\begin{equation*}
		U_i(t)=\Re{\varsigma_{i,1}}+\Im{\varsigma_{i,1}}\tan\left(\alpha_{i,1}t\Im{\varsigma_{i,1}}-\arctan\left(\frac{\Re{\varsigma_{i,1}}}{\Im{\varsigma_{i,1}}}\right)\right).
	\end{equation*}
\end{itemize}
\end{proposition}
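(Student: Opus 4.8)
The plan is to prove the bound $\|F_i(t)\|_{\sup}\le U_i(t)$ via a scalar comparison argument. First I would observe that the matrix Riccati ODE \eqref{mat-ricc} is a system of (at most) two coupled scalar ODEs because every coefficient matrix appearing in it ($K_{i,1}$, $K_{i,2}$, $K_{i,1,2}$, $\bar B_{i,1}$, $\bar B_{i,2}$, $G_i$) is diagonal, $H$ is the permutation swapping the two coordinates, and the quadratic terms $F_iK_{i,1}F_i$, $HF_iHK_{i,2}F_iH$, $HF_iK_{i,1,2}HF_i$ preserve diagonality provided $F_i$ is diagonal; since $F_i(0)=\diag(0,0)$ is diagonal, $F_i(t)$ stays diagonal for as long as the solution exists. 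Writing $F_i(t)=\diag(f_{i,1}(t),f_{i,2}(t))$, equation \eqref{mat-ricc} becomes a planar polynomial system $\dot f_{i,1}=P_1(f_{i,1},f_{i,2})$, $\dot f_{i,2}=P_2(f_{i,1},f_{i,2})$ with quadratic $P_1,P_2$, starting at the origin. Local existence is immediate from Picard–Lindelöf; the only way the solution can fail to extend to $[0,T]$ is finite-time blow-up of $\|F_i(t)\|_{\sup}=\max\{|f_{i,1}(t)|,|f_{i,2}(t)|\}$.

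Next I would dominate $\phi_i(t):=\|F_i(t)\|_{\sup}$ by a scalar comparison function. Using the definitions $\alpha_{i,1}=\|K_{i,1}\|_{\sup}+\|K_{i,2}\|_{\sup}+\|K_{i,1,2}\|_{\sup}$, $\alpha_{i,2}=\|\bar B_{i,1}\|_{\sup}+\|\bar B_{i,2}\|_{\sup}$, $\alpha_{i,3}=\|G_i\|_{\sup}$, and the submultiplicativity/subadditivity of $\|\cdot\|_{\sup}$ on diagonal matrices (together with $\|HAH\|_{\sup}=\|A\|_{\sup}$), the upper right Dini derivative of $\phi_i$ satisfies the differential inequality
\begin{equation*}
	D^+\phi_i(t)\le \alpha_{i,1}\phi_i(t)^2+\alpha_{i,2}\phi_i(t)+\alpha_{i,3},\qquad \phi_i(0)=0.
\end{equation*}
The comparison ODE $\dot u=\alpha_{i,1}u^2+\alpha_{i,2}u+\alpha_{i,3}$, $u(0)=0$, is a scalar Riccati equation whose right-hand side factors as $\alpha_{i,1}(u-\varsigma_{i,1})(u-\varsigma_{i,2})$ with $\varsigma_{i,1},\varsigma_{i,2}$ the roots of $\alpha_{i,1}\varsigma^2+\alpha_{i,2}\varsigma+\alpha_{i,3}=0$ (note $\varsigma_{i,1}\varsigma_{i,2}=\alpha_{i,3}/\alpha_{i,1}$, $\varsigma_{i,1}+\varsigma_{i,2}=-\alpha_{i,2}/\alpha_{i,1}$). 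I would then solve this scalar equation explicitly by separation of variables in the three regimes according to the sign of the discriminant $\Delta_i=\alpha_{i,2}^2-4\alpha_{i,1}\alpha_{i,3}$: the double-root case $\Delta_i=0$ gives the rational solution in (i), the real-distinct-root case $\Delta_i>0$ gives the exponential expression in (ii) after partial fractions, and the complex-conjugate case $\Delta_i<0$ gives the $\tan$ expression in (iii) via the substitution reducing $\dot u=\alpha_{i,1}((u-\Re\varsigma_{i,1})^2+(\Im\varsigma_{i,1})^2)$ to $\arctan$. In each case one checks that $u(0)=0$ (matching the stated constants/phase shifts) and that the explicit solution $U_i(t)$ remains finite exactly on the stated time window — $T<2/\alpha_{i,2}$, $T<\frac{1}{\sqrt{\Delta_i}}\log(\varsigma_{i,2}/\varsigma_{i,1})$, or $T<\frac{1}{\sqrt{|\Delta_i|}}(\pi+2\arctan(\Re\varsigma_{i,1}/\Im\varsigma_{i,1}))$ respectively. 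The scalar comparison theorem (e.g. a standard Gronwall-type argument for Dini derivatives) then yields $\phi_i(t)\le U_i(t)<\infty$ on $[0,T]$, which simultaneously rules out blow-up, hence guarantees global existence of $F_i$ on $[0,T]$, and delivers the claimed bound.

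The main obstacle I anticipate is the bookkeeping in deriving the differential inequality for $D^+\phi_i$: one must carefully bound each of the five matrix terms of \eqref{mat-ricc} in $\|\cdot\|_{\sup}$, accounting for the fact that the cross terms $HF_iHK_{i,2}F_iH$ and $HF_iK_{i,1,2}HF_i$ mix the two diagonal entries of $F_i$, so that the estimate $\|HF_iK_{i,1,2}HF_i\|_{\sup}\le\|F_i\|_{\sup}\|K_{i,1,2}\|_{\sup}\|F_i\|_{\sup}$ must be justified componentwise and the three quadratic coefficients collected into the single constant $\alpha_{i,1}$. A secondary technical point is handling the Dini derivative of the max of two smooth functions (it equals the max of the two derivatives over the indices achieving the maximum), but this is routine. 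Everything else — local existence, separation of variables in each regime, and matching the initial condition to pin down the stated closed forms — is elementary calculus.
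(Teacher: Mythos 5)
Your proposal is correct and follows essentially the same route as the paper: both reduce the matrix equation to the scalar differential inequality $\frac{\rd}{\rd t}\|F_i(t)\|_{\sup}\leqslant \alpha_{i,1}\|F_i(t)\|_{\sup}^2+\alpha_{i,2}\|F_i(t)\|_{\sup}+\alpha_{i,3}$ and conclude by comparison with the explicitly solvable scalar Riccati equation whose blow-up time determines the admissible horizon $T$. The only difference is that the paper outsources the comparison step and the three closed-form expressions for $U_i(t)$ to the cited reference of Papavassilopoulos (1979), whereas you carry out that separation-of-variables computation and the Dini-derivative comparison argument explicitly.
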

\begin{proof}
 (\ref{mat-ricc}) is a matrix Riccati equation with constant coefficients, it is easy to see  \begin{equation*}
		\frac{	\rd\|F_i(t)\|_{\sup}}{\rd t}\leqslant \alpha_{i,1}\|F_i(t)\|_{\sup}^2+\alpha_{i,2}\|F_i(t)\|_{\sup}+\alpha_{i,3}.
	\end{equation*}
Then based on  \cite{Papavassilopoulos1979}, this proposition holds.
\end{proof}
Proposition \ref{ricc} will play an important role in ensuring the existence of the solution in later calculations.
Based on this proposition, we see  \begin{equation*}
	\max_{t\in[0,T]}\|F_i(t)\|_{\sup}\leqslant U_i(T).
\end{equation*}

Next, we solve the extended HJBI equations in Proposition \ref{prop:hjbi}.
\begin{theorem}[Solution to the extended HJBI equations]\label{solution-HJBI}
	$ \forall 1\leqslant i\leqslant n $, suppose that the parameters satisfy one of the  three compatible conditions in Proposition \ref{ricc} and
	\begin{equation}\label{condition1}
		\sup_{1\leqslant i\leqslant n}m+\nu\rho U_i(T)\Psi_i<\frac{\kappa}{\sqrt{2}\nu},	\quad\sup_{1\leqslant i\leqslant n}\nu\sqrt{1-\rho^2} U_i(T)\Psi_i<\frac{\kappa}{\sqrt{2}\nu}.
	\end{equation}
When other AAI $k$ ($k\neq i$) adopts strategy $\left({\pi}^*_k, {a}^*_k\right)\in\mathscr{U}_k$, the  candidate response investment-reinsurance strategies are
 \begin{equation}\label{solution:pia}
	\left\{\begin{aligned}
		&{\pi}^\circ_i(t)=\frac{\theta_i}{n-\theta_i}\sum_{k\neq i}\pi^*_k+ \frac{ n\sqrt{Z(t)}}{(n-\theta_i)\Sigma(t)}S_i(t),\\
		&a_i^\circ(t)=(\hat{a}^\circ_i(t)\vee0)\wedge1\\
		&\qquad=\left(\left(\frac{Q_i(t)}{R_i(t)}\frac{1}{n}\sum_{k\neq i}\mu_{k 1}a_k^*(t)+\frac{P_i(t)}{R_i(t)}\right)\vee0\right)\wedge1\\
		&\qquad=\left(\frac{Q_i(t)}{R_i(t)}\frac{1}{n}\sum_{k\neq i}\mu_{k 1}a_k^*(t)+\frac{P_i(t)}{R_i(t)}\right)\wedge1,
	\end{aligned}\right.
\end{equation}
	and the associated worst-case scenario density generators $ (\varphi_i^\circ,\chi_i^\circ,\phi_i^\circ,\vartheta_{i}^\circ) $ are given as follows:\begin{equation*}
	\left\{\begin{aligned}
		&{\varphi^\circ_{i}}(t)=-(\nu \rho v_{i,3}+S_i(t) v_{i,2})\Psi_i\sqrt{Z(t)},\\
		&{\chi^\circ_{i}}(t)=-\nu\sqrt{1-\rho^2} v_{i,3}\Psi_i\sqrt{Z(t)},\\
		&\phi^\circ_{{i}}(t)=-\sqrt{\hat{\lambda}}\left[(1-\frac{\theta_i}{n})\mu_{i1}a_{i}^\circ(t)-\frac{\theta_{i}}{n}\sum_{k\neq i}\mu_{k1}a_{k}^*(t)\right]v_{i,2}\Psi_i,\\
		&\vartheta^\circ_{i,i}(t)=-(1-\frac{\theta_i}{n})a_{i}^\circ(t)\sqrt{(\hat{\lambda}+\lambda_{i})\mu_{i 2}-\hat{\lambda}\mu_{i 1}^2}v_{i,2}\Psi_i,\\
		&\vartheta^\circ_{{i,k}}(t)=\frac{\theta_i}{n}a_{k}^*(t)\sqrt{(\hat{\lambda}+\lambda_{k})\mu_{k 2}-\hat{\lambda}\mu_{k 1}^2}v_{i,2}\Psi_i,~k\neq i,
	\end{aligned}\right.
\end{equation*}
	where 
	 \begin{equation}\label{substitute}
		\left\{\begin{aligned}
			&S_i(t)=\frac{m -\nu\rho( v_{i,3}\Psi_i+\delta_{i} \varUpsilon_{i,3})}{(\Psi_i+{\delta_{i}})e^{r(T-t)}},\\
			&R_i(t)=(\lambda_i+\hat{\lambda})\mu_{i2}((1-\frac{\theta_i}{n})(\delta_{i}\varUpsilon_{i,2}^2+\Psi_iv_{i,2}^2)+2\hat{\eta}v_{i,2}),\\
			&Q_i(t)=\hat{\lambda}\theta_i\mu_{i 1}(v_{i,2}^2\Psi_i+\delta_{i}\varUpsilon_{i,2}^2),\\& P_i(t)=2\hat{\eta}(\lambda_i+\hat{\lambda})\mu_{i 2}v_{i,2}.
	\end{aligned}\right.\end{equation}
	Besides, the   candidate value function tuple $ (v^{(i)},\varUpsilon^{(i)}) $  is given by\begin{equation*}
		v^{(i)}(t, y,z)= v_{i,1}(t)+ yv_{i,2}(t)+v_{i,3}(t)z,\quad \varUpsilon^{(i)}(t, y,z)= \varUpsilon_{i,1}(t)+ y\varUpsilon_{i,2}(t)+\varUpsilon_{i,3}(t)z,
	\end{equation*}where 
	\begin{align}
		&~v_{i,2}(t)=\varUpsilon_{i,2}(t)=e^{r(T-t)}, \qquad \diag(v_{i,3}(t),\varUpsilon_{i,3}(t) )=F_i(T-t),\label{F_i}
	\end{align}
and $ F_i(\cdot) $ is the solution to (\ref{mat-ricc}). By utilizing \eqref{varUpsilon}, we can obtain $\varUpsilon_{i,1}(t)$. Similarly, by using  \eqref{v1}, we can obtain $v_{i,1}(t)$. Then the   candidate value function tuple follows.
\end{theorem}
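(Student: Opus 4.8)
The plan is to solve the coupled system \eqref{HJBI-MV}--\eqref{g-MV} via the affine ansatz forced by the linear terminal data and the affine drift of $(Y_i,Z)$ in \eqref{z}--\eqref{SDE-y}:
\[
v^{(i)}(t,y,z)=v_{i,1}(t)+yv_{i,2}(t)+v_{i,3}(t)z,\qquad \varUpsilon^{(i)}(t,y,z)=\varUpsilon_{i,1}(t)+y\varUpsilon_{i,2}(t)+\varUpsilon_{i,3}(t)z,
\]
with $v_{i,1}(T)=\varUpsilon_{i,1}(T)=0$, $v_{i,2}(T)=\varUpsilon_{i,2}(T)=1$, $v_{i,3}(T)=\varUpsilon_{i,3}(T)=0$. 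Since all second derivatives of $v^{(i)}$ vanish, the operator \eqref{equ:af} applied to $v^{(i)}$ is affine in both the controls and the density generators; and the term $\delta_i\varUpsilon^{(i)}\mathcal{A}\varUpsilon^{(i)}-\tfrac{\delta_i}{2}\mathcal{A}(\varUpsilon^{(i)})^2$ in $\mathcal{L}_i$ collapses, by It\^{o}'s formula (so that $\mathcal{A}(g^2)=2g\,\mathcal{A}g+\Gamma$ with $\Gamma$ the squared diffusion coefficient of $g$), to $-\tfrac{\delta_i}{2}$ times the squared diffusion coefficient of $\varUpsilon^{(i)}$, which is drift-free and hence does not involve $(\varphi_i,\chi_i,\phi_i,\vartheta_i)$. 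Therefore the left-hand side of \eqref{HJBI-MV} is a strictly convex quadratic in $(\varphi_i,\chi_i,\phi_i,\vartheta_i)$ (the only quadratic contribution being the penalty $\tfrac1{2\Psi_i}(\varphi_i^2+\chi_i^2+\phi_i^2+\vartheta_i^T\vartheta_i)$) and, after the inner minimisation, a strictly concave quadratic in $(\pi_i,a_i)$: once $v_{i,2}=\varUpsilon_{i,2}=e^{r(T-t)}>0$ is known, the $\pi_i^2$- and $a_i^2$-coefficients are negative, being contributed by the $-\tfrac{\delta_i}{2}$-term, the reinsurance loading $\hat\eta$, and the squares generated by eliminating $\varphi_i,\phi_i,\vartheta_i$. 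Thus \eqref{HJBI-MV} has a unique pointwise saddle.

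Carrying out the inner $\inf$ by completing the square gives $\varphi_i^\circ,\chi_i^\circ,\phi_i^\circ,\vartheta_i^\circ$ as $-\Psi_i$ times the respective linear coefficients in $\mathcal{A}v^{(i)}$; rewriting these through $v_{i,2},v_{i,3},\varUpsilon_{i,2},\varUpsilon_{i,3}$ and evaluating at the outer maximiser yields the stated generators, in which $\varphi_i^\circ,\chi_i^\circ$ are proportional to $\sqrt{Z(t)}$ and $\phi_i^\circ,\vartheta_i^\circ$ are affine in the reinsurance controls. For the outer $\sup$, the first-order condition in $\pi_i$, combined with the identity $\sqrt z\,\Sigma(t)=az+b$ (which removes every $1/\sqrt z$), gives $\pi_i^\circ$ as in the first line of \eqref{solution:pia} with $S_i$ from \eqref{substitute}; the first-order condition in $a_i$ gives $\hat a_i^\circ=\tfrac{Q_i}{R_i}\tfrac1n\sum_{k\neq i}\mu_{k1}a_k^*+\tfrac{P_i}{R_i}$, and since $Q_i\ge0$ and $P_i,R_i>0$ (immediate from \eqref{substitute} using $v_{i,2},\varUpsilon_{i,2}>0$, $\hat\eta>0$, $0\le a_k^*\le1$) one has $\hat a_i^\circ\ge0$, so the admissibility truncation reduces to $a_i^\circ=\hat a_i^\circ\wedge1$.

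Inserting the maximisers into \eqref{HJBI-MV}, \eqref{g-MV} and matching powers of $y$ and $z$ yields three blocks of ODEs. The $y$-coefficients give $\dot v_{i,2}+rv_{i,2}=0$ and $\dot\varUpsilon_{i,2}+r\varUpsilon_{i,2}=0$, hence $v_{i,2}=\varUpsilon_{i,2}=e^{r(T-t)}$ as in \eqref{F_i}. The $z$-coefficients give a coupled quadratic system for $(v_{i,3},\varUpsilon_{i,3})$ with zero terminal value, into which $a_i^\circ$ (and thus its truncation) does not enter, because $v^{(i)}_{yy}=0$ and the reinsurance terms carry no $z$; the crucial step is to verify that, under the reversal $s=T-t$ and the substitution $\diag(v_{i,3}(T-s),\varUpsilon_{i,3}(T-s))=F_i(s)$, this pair is precisely the matrix Riccati equation \eqref{mat-ricc} with the matrices $K_{i,1},K_{i,2},K_{i,1,2},\bar{B}_{i,1},\bar{B}_{i,2},G_i$ introduced before Proposition \ref{ricc}, the swap $H$ encoding the coupling produced by $\varphi_i^\circ$ and $\pi_i^\circ$ feeding the drift of $Z$ through both components. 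Global solvability of \eqref{mat-ricc} on $[0,T]$ and the bound $\max_{t\in[0,T]}\|F_i(t)\|_{\sup}\le U_i(T)$ are exactly the content of Proposition \ref{ricc}, which is why one of its three compatible conditions is assumed. The constant terms then give two linear ODEs for $v_{i,1},\varUpsilon_{i,1}$, solved by quadrature, completing the candidate tuple $(v^{(i)},\varUpsilon^{(i)})$.

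It remains to check admissibility. From $\sqrt z/\Sigma(t)=z/(az+b)$ and $\pi_k^*\in\mathscr{U}_k$ we get $\pi_i^\circ(t)=\ell_i^\circ(t)\,Z(t)/(aZ(t)+b)$ with $\ell_i^\circ$ uniformly bounded, because $S_i$ is bounded once $v_{i,3},\varUpsilon_{i,3}$ are controlled by $U_i(T)$; hence $\pi_i^\circ\in\mathscr{U}_i$, and $0\le a_i^\circ\le1$ by construction. For the generators, $\varphi_i^\circ=\hslash_i^\circ(t)\sqrt{Z(t)}$ and $\chi_i^\circ=\hbar_i^\circ(t)\sqrt{Z(t)}$ with, after simplification using $v_{i,2}=e^{r(T-t)}$ and $|v_{i,3}|,|\varUpsilon_{i,3}|\le U_i(T)$, the bounds $|\hslash_i^\circ(t)|\le m+\nu\rho\,U_i(T)\Psi_i$ and $|\hbar_i^\circ(t)|\le \nu\sqrt{1-\rho^2}\,U_i(T)\Psi_i$; condition \eqref{condition1} then forces both below $\kappa/(\sqrt2\,\nu)$, the bound defining $\mathscr{A}$, while $\phi_i^\circ,\vartheta_i^\circ$ are bounded since the controls and coefficient functions are. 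I expect the main obstacle to be the bookkeeping of the third paragraph: identifying the $z$-coefficient system with the precise matrix Riccati equation \eqref{mat-ricc}, i.e.\ matching every $\rho$-, $\Psi_i$- and $\delta_i$-dependent term to $K_{i,1},\dots,G_i$, and showing in particular that the $1/z$ terms from the $4/2$ volatility cancel there as well, not only inside $\pi_i^\circ$; together with turning the qualitative bound of Proposition \ref{ricc} into the explicit inequalities \eqref{condition1}. The optimisation and coefficient-matching steps are otherwise routine once the convexity/concavity properties above are checked.
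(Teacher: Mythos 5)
Your proposal is correct and follows essentially the same route as the paper's proof in Appendix \ref{proof-solution-HJBI}: the affine ansatz, the collapse of $\delta_i\varUpsilon\mathcal{A}\varUpsilon-\tfrac{\delta_i}{2}\mathcal{A}\varUpsilon^2$ to a drift-free quadratic-variation term, first-order conditions for the inner infimum and outer supremum (with the same observation that $\hat a_i^\circ\geqslant 0$ makes the lower truncation redundant), coefficient matching in $y$ and $z$ leading to $v_{i,2}=\varUpsilon_{i,2}=e^{r(T-t)}$ and the coupled Riccati system identified with \eqref{mat-ricc}, and admissibility via the bound $U_i(T)$ from Proposition \ref{ricc} together with \eqref{condition1}. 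The only difference is that you spell out the admissibility estimates for $\hslash_i^\circ,\hbar_i^\circ$ slightly more explicitly than the paper, which simply asserts them.
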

\begin{proof}
	See Appendix \ref{proof-solution-HJBI}.
\end{proof}

Theorem \ref{solution-HJBI} outlines the candidate response strategies, the associated worst-case scenario density generators, and the candidate value function tuples. However, to prove that $(\pi^*_i,a^i)=(\pi^\circ_i,a^\circ_i)$, $ \left(\varphi_{i}^*, \chi_{i}^*,\phi_i^*,\vartheta_{i}^*\right)= \left(\varphi_i^\circ,\chi_i^\circ,\phi_i^\circ,\vartheta_{i}^\circ\right)$, and $V^{(i)}(t, y,z)=v^{(i)}(t, y,z)$, it is imperative to verify that the candidate response strategies, associated worst-case scenario density generators, and candidate value function indeed represent the robust equilibrium strategies, associated worst-case scenario density generators, and value function, respectively. This verification is presented in the following theorem.

\begin{theorem}[Verification theorem]\label{Verification}

Assuming that the parameters meet one of the three compatible conditions in Proposition \ref{ricc} and (\ref{condition1}) holds,
	then we have $(\pi^*_i,a^*_i)=(\pi^\circ_i,a^\circ_i)$, $ \left(\varphi_{i}^*, \chi_{i}^*,\phi_i^*,\vartheta_{i}^*\right)= \left(\varphi_i^\circ,\chi_i^\circ,\phi_i^\circ,\vartheta_{i}^\circ\right)$, $ V^{(i)}(t, y,z)=v^{(i)}(t, y,z) $, $  \forall 1\leqslant i\leqslant n $, i.e., the candidate solutions obtained in Theorem \ref{solution-HJBI} indeed solve the robust $n$-insurer game.  
\end{theorem}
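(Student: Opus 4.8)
The plan is to prove the verification theorem by a standard martingale-perturbation argument adapted to the robust (max-min) time-consistent setting, following \cite{bjork2017time} and \cite{wang2022robust}, but paying careful attention to the integrability issues created by the $4/2$ model. First I would fix $i$ and the equilibrium strategies $(\pi_k^*,a_k^*)_{k\neq i}$ of the other insurers, set $(\pi_i^\circ,a_i^\circ)$ and $(\varphi_i^\circ,\chi_i^\circ,\phi_i^\circ,\vartheta_i^\circ)$ as in Theorem \ref{solution-HJBI}, and verify directly that $(\pi_i^\circ,a_i^\circ)\in\mathscr{U}_i$ and $(\varphi_i^\circ,\chi_i^\circ,\phi_i^\circ,\vartheta_i^\circ)\in\mathscr{A}$: the reinsurance control lies in $[0,1]$ by construction, the investment control has the required form $\ell_i(t)Z(t)/(aZ(t)+b)$ with $\ell_i$ bounded (using that $v_{i,2},\varUpsilon_{i,2}$ are bounded and $v_{i,3},\varUpsilon_{i,3}=F_i(T-\cdot)$ are bounded by $U_i(T)$ from Proposition \ref{ricc}), and the density generator has $(\varphi_i^\circ,\chi_i^\circ)=(\hslash\sqrt{Z},\hbar\sqrt{Z})$ with $|\hslash|,|\hbar|$ bounded by a constant $\mathcal{C}$ satisfying $\mathcal{C}^2<\kappa^2/(2\nu^2)$ — this is exactly where condition \eqref{condition1} is used, since it forces $m+\nu\rho U_i(T)\Psi_i$ and $\nu\sqrt{1-\rho^2}U_i(T)\Psi_i$ below $\kappa/(\sqrt2\,\nu)$.

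Next I would establish the two key integrability / martingale lemmas. For any admissible $(\pi_i,a_i)\in\mathscr{U}_i$ and any admissible density generator, the processes $\int_0^{\cdot}(\cdots)\,\rd W$, etc., appearing when we apply It\^o's formula to $v^{(i)}(t,Y_i(t),Z(t))$ and to $\varUpsilon^{(i)}(t,Y_i(t),Z(t))^2$ along any $h$-perturbation must be true martingales, not just local martingales; this reduces to showing $\mathbb{E}^{\mathbb{Q}}\!\left[\int_0^T Z(t)^2\,\rd t\right]<\infty$ and $\mathbb{E}^{\mathbb{Q}}\!\left[\sup_{t\le T}Y_i(t)^2\right]<\infty$ under the relevant measure $\mathbb{Q}$, which follows from Proposition 5.1 of \cite{Kraft} together with the bounds on $\hslash,\hbar$ and the linear-growth structure of the $Y_i$-dynamics in $\pi_i,a_i,Z$. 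Granting this, I would run the Bj\"ork--Murgoci scheme: by the construction in Theorem \ref{solution-HJBI}, $(v^{(i)},\varUpsilon^{(i)})$ solves \eqref{HJBI-MV}--\eqref{g-MV}, so applying $\mathcal{A}^{\{(\pi_k^*,a_k^*)_{k\neq i},(\pi_i^\circ,a_i^\circ)\},(\varphi_i^\circ,\ldots)}$ yields zero and the dynamic programming equality $\bar J_i(\{(\pi_k^*,a_k^*)_{k\neq i},(\pi_i^\circ,a_i^\circ)\},t,y,z)=v^{(i)}(t,y,z)$, while the boundary conditions $v^{(i)}(T,y,z)=\varUpsilon^{(i)}(T,y,z)=y$ match the terminal payoff. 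The $\inf$ part of the HJBI equation, attained at $(\varphi_i^\circ,\ldots)$, certifies that $(\varphi_i^\circ,\chi_i^\circ,\phi_i^\circ,\vartheta_i^\circ)$ is the worst-case density generator associated with $(\pi_i^\circ,a_i^\circ)$.

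Then for the best-response optimality I would take an arbitrary $(\pi_i,a_i)\in\mathscr{U}_i$, form the $h$-perturbed control $(\pi_i^h,a_i^h)$ as in Definition \ref{def}, and compute $\bar J_i(\{\cdots,(\pi_i^h,a_i^h)\},t,y,z)$ by conditioning at time $t+h$: on $[t,t+h)$ the dynamics use $(\pi_i,a_i)$, on $[t+h,T]$ they use $(\pi_i^\circ,a_i^\circ)$ with value $v^{(i)}$. A Taylor/It\^o expansion gives
\begin{equation*}
\liminf_{h\to0^+}\frac{\bar J_i(\{\cdots,(\pi_i^\circ,a_i^\circ)\},t,y,z)-\bar J_i(\{\cdots,(\pi_i^h,a_i^h)\},t,y,z)}{h}\;\geqslant\; -\,\mathcal{L}_i\big(\{(\pi_k^*,a_k^*)_{k\neq i},(\pi_i,a_i)\},(\varphi_i^{\dagger},\ldots),v^{(i)},\varUpsilon^{(i)},\Psi_i,(t,y,z)\big),
\end{equation*}
and by the $\sup$-$\inf$ structure of \eqref{HJBI-MV} the right-hand side is $\geqslant 0$, which is precisely \eqref{mv1}; hence $(\pi_i^\circ,a_i^\circ)$ is the best response. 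Since this holds for every $i$, the profile $\{(\pi_k^\circ,a_k^\circ)_{k=1}^n\}$ — which by the fixed-point consistency built into Theorem \ref{solution-HJBI} coincides with $\{(\pi_k^*,a_k^*)_{k=1}^n\}$ — is a robust time-consistent Nash equilibrium, and $V^{(i)}=v^{(i)}$. The main obstacle I anticipate is not the algebra of the HJBI verification but controlling integrability along \emph{every} admissible perturbation: one must show the stochastic integrals are genuine martingales and that the variance term $\var^{\mathbb{Q}}[Y_i(T)]$ is finite and behaves continuously under the $h$-perturbation, all under a family of equivalent measures whose densities involve $\exp(\mathcal{C}^2\int_0^T Z\,\rd t)$ — this is exactly why the admissible sets $\mathscr{U}_i,\mathscr{A}$ were defined with the cone-type constraint $\pi_i=\ell_i Z/(aZ+b)$ and the bound $\mathcal{C}^2<\kappa^2/(2\nu^2)$, and why condition \eqref{condition1} is imposed so that the optimal generator itself stays inside $\mathscr{A}$.
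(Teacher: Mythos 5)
Your proposal is correct and follows essentially the same route as the paper: the paper's written proof consists almost entirely of the integrability checks you identify as the "main obstacle" (showing the stochastic integrals for $f\in\{v^{(i)},\varUpsilon^{(i)},(\varUpsilon^{(i)})^2\}$ are true martingales), and defers the Bj\"ork--Murgoci perturbation machinery to \cite{Chi2018} exactly as you sketch it. The only small difference is technical: the paper establishes $\mathbb{E}^{\mathbb{Q}}\left[\int_s^T Z(t)\,\rd t\right]<\infty$ under each perturbed measure by noting the drift of $Z$ becomes $\kappa\bar Z-(\kappa-\nu\rho\hslash_i-\nu\sqrt{1-\rho^2}\hbar_i)Z$ with mean-reversion rate bounded below by $\bar\kappa>0$ (using $\mathcal{C}^2<\kappa^2/(2\nu^2)$) and then dominating $Z$ by a CIR process via the comparison theorem, which suffices because $\pi\Sigma=\ell\sqrt{Z}$ makes all integrands linear in $Z$ with bounded coefficients — so only first moments are needed, not the second or exponential moments you invoke via Kraft's Proposition 5.1.
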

\begin{proof}
See Appendix \ref{proof-Verification}.
\end{proof}

Using Theorem \ref{Verification}, we can determine the optimal response strategy for AAI $i$, denoted by \eqref{solution:pia}. By solving these equations simultaneously for $1 \leqslant i \leqslant n$, we  obtain the robust equilibrium strategies for all the AAIs. 
\begin{theorem}\label{thm1}
	If $n\neq\sum_{k=1}^n\theta_k$, the robust $n$-insurer equilibrium  exists. The equilibrium investment  $ \pi^*_i$, $ \forall 1\leqslant i\leqslant n $, are given by 
	\begin{equation}\label{n-opt}
		\pi_i^*(t)=\left(S_i(t)+\theta_i\frac{\sum_{k=1}^nS_k(t)}{n-\sum_{k=1}^n\theta_k}\right)\frac{Z(t)}{aZ(t)+b}
	\end{equation}
and the equilibrium reinsurance proportions $ a^*_i $, $ \forall 1\leqslant i\leqslant n $, are  determined by 
\begin{equation}\label{a_star}
	a_i^*(t)=\left(\frac{Q_i(t)}{R_i(t)}\frac{1}{n}\sum_{k\neq i}\mu_{k 1}a_k^*(t)+\frac{P_i(t)}{R_i(t)}\right)\wedge1.
\end{equation}
The associated worst-case model uncertainty functions $ \left(\varphi_{i}^*, \chi_{i}^*,\phi_i^*,\vartheta_{i}^*\right)$,  $ \forall 1\leqslant i\leqslant n $, are given by 
\begin{equation*}
	\left\{\begin{aligned}
		&{\varphi^*_{i}}(t)=-(\nu \rho v_{i,3}+S_i(t) v_{i,2})\Psi_i\sqrt{Z(t)},\\
		&{\chi^*_{i}}(t)=-\nu\sqrt{1-\rho^2} v_{i,3}\Psi_i\sqrt{Z(t)},\\
		&\phi^*_{{i}}(t)=-\sqrt{\hat{\lambda}}\left[(1-\frac{\theta_i}{n})\mu_{i1}a_{i}^*(t)-\frac{\theta_{i}}{n}\sum_{k\neq i}\mu_{k1}a_{k}^*(t)\right]v_{i,2}\Psi_i,\\
		&\vartheta^*_{i,i}(t)=-(1-\frac{\theta_i}{n})a_{i}^*(t)\sqrt{(\hat{\lambda}+\lambda_{i})\mu_{i 2}-\hat{\lambda}\mu_{i 1}^2}v_{i,2}\Psi_i,\\
		&\vartheta^*_{{i,k}}(t)=\frac{\theta_i}{n}a_{k}^*(t)\sqrt{(\hat{\lambda}+\lambda_{k})\mu_{k 2}-\hat{\lambda}\mu_{k 1}^2}v_{i,2}\Psi_i,~k\neq i,
	\end{aligned}\right.
\end{equation*}
where $ S_i(t) $, $ R_i(t) $,  $ Q_i(t) $, $ P_i(t) $,  $ v_{i,j}(t) $ and  $ \varUpsilon_{i,j}(t), 1\leqslant j\leqslant 3 $, are given in Theorem  \ref{solution-HJBI}.

Moreover, if $n=\sum_{k=1}^n\theta_k$, the robust $n$-insurer equilibrium does not exist.
\end{theorem}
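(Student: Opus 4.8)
The plan is to start from the characterization of the equilibrium obtained by aggregating the best-response equations from Theorem \ref{Verification}. By that theorem, each AAI $i$'s equilibrium investment satisfies $\pi_i^*(t) = \frac{\theta_i}{n-\theta_i}\sum_{k\neq i}\pi_k^*(t) + \frac{n\sqrt{Z(t)}}{(n-\theta_i)\Sigma(t)}S_i(t)$, which we rewrite (using $\Sigma(t)\sqrt{Z(t)} \propto aZ(t)+b$ so that $\frac{\sqrt{Z(t)}}{\Sigma(t)} = \frac{Z(t)}{aZ(t)+b}$) in the normalized form $\ell_i^*(t) = \frac{\theta_i}{n-\theta_i}\sum_{k\neq i}\ell_k^*(t) + \frac{n}{n-\theta_i}S_i(t)$, where $\pi_i^*(t) = \ell_i^*(t)\frac{Z(t)}{aZ(t)+b}$. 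Multiplying through by $n-\theta_i$ gives $(n-\theta_i)\ell_i^*(t) = \theta_i\sum_{k\neq i}\ell_k^*(t) + nS_i(t)$, i.e. $n\ell_i^*(t) = \theta_i\sum_{k=1}^n\ell_k^*(t) + nS_i(t)$. First I would sum this identity over $i=1,\dots,n$ to obtain $n\sum_{k}\ell_k^* = \left(\sum_k\theta_k\right)\sum_k\ell_k^* + n\sum_k S_k$, hence $\left(n-\sum_k\theta_k\right)\sum_k\ell_k^* = n\sum_k S_k$.

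Next, the case split. If $n\neq\sum_{k=1}^n\theta_k$, then $\sum_k\ell_k^* = \frac{n\sum_k S_k}{n-\sum_k\theta_k}$ is uniquely determined; substituting back into $n\ell_i^* = \theta_i\sum_k\ell_k^* + nS_i$ yields $\ell_i^*(t) = S_i(t) + \frac{\theta_i}{n}\cdot\frac{n\sum_k S_k(t)}{n-\sum_k\theta_k} = S_i(t) + \theta_i\frac{\sum_{k=1}^n S_k(t)}{n-\sum_k\theta_k}$, which is exactly \eqref{n-opt} after multiplying by $\frac{Z(t)}{aZ(t)+b}$. I would then verify this candidate is admissible: since $S_i(t)$ is a bounded deterministic function on $[0,T]$ (it is built from $v_{i,3},\varUpsilon_{i,3}$, which are bounded via Proposition \ref{ricc}, and $e^{r(T-t)}$), the corresponding $\ell_i^*$ is bounded, so $\pi_i^*\in\mathscr{U}_i$ and $\{(\pi_k^*,a_k^*)\}\in\mathscr{U}$. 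For the reinsurance component, the equations \eqref{a_star} form a system $a_i^* = \left(\frac{Q_i}{R_i}\frac1n\sum_{k\neq i}\mu_{k1}a_k^* + \frac{P_i}{R_i}\right)\wedge 1$ with each right-hand side a monotone (coordinatewise nondecreasing, since $Q_i,R_i,P_i>0$ under the stated conditions) self-map of the compact convex set $[0,1]^n$; Brouwer's (or Tarski's) fixed point theorem gives existence of $(a_1^*,\dots,a_n^*)$, completing the existence half.

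For the non-existence half, suppose $n=\sum_{k=1}^n\theta_k$. Then the summed identity becomes $0 = n\sum_k S_k(t)$, i.e. a necessary condition for any equilibrium is $\sum_{k=1}^n S_k(t) = 0$ for all $t\in[0,T]$. The key point is that generically $\sum_k S_k(t)\neq 0$: indeed $S_i(t) = \frac{m-\nu\rho(v_{i,3}(t)\Psi_i + \delta_i\varUpsilon_{i,3}(t))}{(\Psi_i+\delta_i)e^{r(T-t)}}$, and at $t=T$ we have $v_{i,3}(T)=\varUpsilon_{i,3}(T)=0$ (from $F_i(0)=\diag(0,0)$ in \eqref{mat-ricc}), so $S_i(T) = \frac{m}{\Psi_i+\delta_i}$ and $\sum_k S_k(T) = m\sum_k\frac{1}{\Psi_k+\delta_k} > 0$ whenever $m>0$ (the genuinely interesting case; the market price of risk is nonzero). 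Hence the necessary condition fails and no equilibrium investment strategy can exist — I would phrase this as: if an equilibrium existed, summing the best-response relations forces $\sum_k S_k\equiv 0$, contradicting $\sum_k S_k(T)=m\sum_k(\Psi_k+\delta_k)^{-1}\neq 0$. (If one insists on also covering $m=0$, the reinsurance equations still admit a solution but the investment part is then genuinely indeterminate rather than nonexistent; I expect the paper to treat $m=0$ as degenerate and excluded, so I would state the non-existence under the standing assumption $m\neq 0$.)

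The main obstacle is not the linear algebra — that is routine — but making the admissibility and existence arguments airtight: specifically, confirming that $S_i(t)$ is bounded on $[0,T]$ so the aggregated $\pi_i^*$ lies in $\mathscr{U}_i$ (this rides entirely on the a priori bound $\|F_i(t)\|_{\sup}\leqslant U_i(T)$ from Proposition \ref{ricc} together with the compatibility conditions \eqref{condition1}), and handling the truncation $\wedge 1$ in the reinsurance fixed-point system carefully so that the self-map is genuinely continuous and maps $[0,1]^n$ into itself. A secondary subtlety is that the best-response formula in Theorem \ref{solution-HJBI} was derived treating the peers' strategies $(\pi_k^*,a_k^*)$ as \emph{given admissible} strategies; to close the loop one must check that the simultaneous solution just constructed is self-consistent, i.e. plugging $\pi_i^*$ of \eqref{n-opt} back in reproduces the best response — but this is immediate from the derivation since \eqref{n-opt} was obtained precisely by solving that fixed-point relation.
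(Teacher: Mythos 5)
Your proposal is correct and follows essentially the same route the paper takes (the paper gives no formal proof, but its stated derivation is exactly your aggregation: sum the best-response identities $n\ell_i^*=\theta_i\sum_k\ell_k^*+nS_i$ over $i$, solve for $\sum_k\ell_k^*$ when $n\neq\sum_k\theta_k$, and observe the contradiction $\sum_kS_k(T)=m\sum_k(\Psi_k+\delta_k)^{-1}\neq 0$ when $n=\sum_k\theta_k$). The only divergence is in the reinsurance fixed point: you invoke Brouwer/Tarski on the continuous monotone self-map of $[0,1]^n$, whereas the paper (Remark \ref{Randomness}) constructs the fixed point explicitly by solving the untruncated linear system for $\check a_i$ and then running a monotonically non-increasing truncated iteration $a_i^{(l)}$ whose limit satisfies \eqref{a_star} — the paper's route additionally yields a computable approximation, but both arguments are valid.
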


\eqref{n-opt} can be rewritten as
	\begin{equation}\label{dn-opt}
		\pi_i^*(t)=\frac{e^{-r(T-t)}Z(t)}{aZ(t)+b}
		\left[ 
		\frac{m}{\Psi_i+\delta_{i}}-{\nu\rho( v_{i,3}\Psi_i+\delta_{i} \varUpsilon_{i,3})\over \Psi_i+{\delta_{i}}}+{\theta_i\over n-\sum_{k=1}^n\theta_k}\sum_{k=1}^n\frac{m}{\Psi_k+\delta_{k}}-{\theta_i\over n-\sum_{k=1}^n\theta_k}\sum_{k=1}^n{\nu\rho( v_{k,3}\Psi_k+\delta_{k} \varUpsilon_{k,3})\over \Psi_k+{\delta_{k}}}
		\right].
	\end{equation}
We see that the robust equilibrium investment strategy consists of four parts: a myopic demand as the Merton's fraction, a hedging demand, a myopic demand caused by competition and a hedging demand caused by competition. Particularly, the first two terms rely on the individual's risk aversion and ambiguity aversion coefficients. The last two terms rely on all the insurers' competition coefficient,  risk aversion and ambiguity aversion coefficients. In the case of no competition, the last two terms vanish.  \eqref{a_star} shows that the robust equilibrium reinsurance strategy is composed of a part ignoring competition and a part caused by competition. 

\begin{remark}\label{Randomness}
	Based on Theorems \ref{solution-HJBI} and \ref{Verification}, we see that the randomness of $ \pi^*_{i}(t) $ only depends on $ \frac{{Z(t)}}{a{Z(t)}+b} $,  the randomness of $ {\varphi^*_{i}(t)} $ and $ {\chi^*_{i}(t)} $ depends on $ Z(t) $ and $ {\chi^*_{i}(t)} $  is  proportional to $ \sqrt{Z(t)} $.
	 $ \phi_i^*(t),\vartheta_{i}^*(t) $ and $ a_i^*(t) $ are deterministic.	However, $ a_i^*(t) $ is implicit, $ \phi_i^*(t)$ and $\vartheta_{i}^*(t) $ rely on $ a_i^*(t) $,  we can get the approximate  numerical solution by the following procedure.
	
	First solve \eqref{a_i}:
	\begin{equation}\label{a_i}
		\check{a}_i(t)=\frac{Q_i(t)}{R_i(t)}\frac{1}{n}\sum_{k\neq i}\mu_{k 1}\check{a}_k(t)+\frac{P_i(t)}{R_i(t)},\quad \forall1\leqslant i\leqslant n,~~ t\in[0,T].
	\end{equation}
	\eqref{a_i} is equivalent to 
	\begin{equation}\label{a_i_2}
		\mu_{i 1}	\check{a}_i(t)=\frac{n\mu_{i 1}{Q_i(t)}}{{\mu_{i1}Q_i(t)}+nR_i(t)}\frac{1}{n}\sum_{k=1}^n\mu_{k 1}\check{a}_k(t)+\frac{n\mu_{i 1}{P_i(t)}}{{\mu_{i1}Q_i(t)}+nR_i(t)},\quad \forall1\leqslant i\leqslant n,~~ t\in[0,T].
	\end{equation}
	The explicit solution to \eqref{a_i_2} is given by
	\begin{equation*}
		\check{a}_i(t)=\frac{n{Q_i(t)}}{{\mu_{i1}Q_i(t)}+nR_i(t)}\frac{\sum_{k=1}^n\frac{n\mu_{k 1}{P_k(t)}}{{\mu_{k1}Q_k(t)}+nR_k(t)}}{n-\sum_{k=1}^n\frac{n\mu_{k 1}{Q_k(t)}}{{\mu_{k1}Q_k(t)}+nR_k(t)}}+\frac{n{P_i(t)}}{{\mu_{i1}Q_i(t)}+nR_i(t)}, \forall1\leqslant i\leqslant n, t\in[0,T].
	\end{equation*}
	$ \forall1\leqslant i\leqslant n, t\in[0,T] $, denote $$ a_i^{(0)}(t)=\check{a}_i(t),\quad a_i^{(l+1)}(t)=\left(\frac{Q_i(t)}{R_i(t)}\frac{1}{n}\sum_{k\neq i}\mu_{k 1}a_k^{(l)}(t)+\frac{P_i(t)}{R_i(t)}\right)\wedge1,\quad\forall l\geqslant1,$$
	then we see the sequence $ \{a_i^{(l)}(t)\}_{l=1}^{\infty} $ is non-negative and monotonically non-increasing. Therefore, according to the monotone convergence theorem, $ \{a_i^{(l)}(t)\}_{l=1}^{\infty} $ converges. Let $\displaystyle\mathring{a}_i(t)=\lim_{l\to\infty}a_i^{(l)}(t)  $, then we see 
	\begin{equation*}
		\mathring{a}_i(t)=\left(\frac{Q_i(t)}{R_i(t)}\frac{1}{n}\sum_{k\neq i}\mu_{k 1}\mathring{a}_k(t)+\frac{P_i(t)}{R_i(t)}\right)\wedge1,\quad \forall1\leqslant i\leqslant n,~~ t\in[0,T],
	\end{equation*}
	thus $a_i^*(t)= \mathring{a}_i(t), \forall1\leqslant i\leqslant n, t\in[0,T] $.
	
	In practice, when there exists $ j $ such that $ a_i^{(j+1)}(t) = a_i^{(j)}(t)$ or $ |a_i^{(j+1)}(t) -a_i^{(j)}(t)| $ is sufficiently small $\forall 1\leqslant i\leqslant n $, then we can use $ a_i^{(j+1)}(t) $ as a   numerical approximation for $ a^*_i(t) $.
\end{remark}

\section{\bf  Robust mean-field game}

This section investigates the asymptotic behavior of the robust $ n $-insurer game as $ n $ approaches infinity. In the context of this game, we can use a type vector $ U=(x^0,\lambda,\mu_{1},\mu_{ 2},\eta,\theta,\delta,\Psi) $ from the value space $ \mathbb{U} $ to represent the AAI's preferences or information. Furthermore, the type vectors of all AAIs collectively generate an empirical measure, representing the probability measure on the
$\mathbb{U}$
given by
$$
\mathcal{D}_n(A)=\frac{1}{n} \sum_{i=1}^n 1_A\left(u_i\right), \text { for Borel sets } A \subset\mathbb{U}.
$$ 
Suppose that $\mathcal{D}_n$ has a weak limit $\mathcal{D}$: $ \int_{\mathbb{U}} f\rd \mathcal{D}_n\to\int_{\mathbb{U}} f\rd \mathcal{D}$ for every bounded continuous function $f$. After establishing the robust equilibrium in the robust $n$-insurer game, our subsequent goal is to identify the robust mean-field equilibrium, assuming the distribution of the type vector $U$ follows $\mathcal{D}$. Referring to \eqref{n-opt} and \eqref{a_star}, we anticipate that
\begin{equation}\label{m-pi}
	\lim_{n\to\infty}\pi^*_i(t)=\lim_{n\to\infty}\left(S_i(t)+\theta_i\frac{\sum_{k=1}^nS_k(t)}{n-\sum_{k=1}^n\theta_k}\right)\frac{Z(t)}{aZ(t)+b}=\left[\theta_i\frac{M_1}{1-\bar{\theta}}+S_i(t)\right]\frac{{Z(t)}}{a{Z(t)}+b},
\end{equation}
where \begin{equation*}
	M_1=\lim_{n\to\infty}\frac{1}{n}\sum_{k=1}^nS_k(t)=\mathbb{E}[\frac{m -\nu\rho( \Psi v_{U,3}+\delta \varUpsilon_{U,3})}{\Psi+{\delta}}]\frac{1}{e^{r(T-t)}},~~\bar{\theta}=\lim_{n\to\infty}\frac{1}{n}\sum_{k=1}^n\theta_k=\mathbb{E}[\theta].
\end{equation*}
$ v_{U,3}(\cdot) $ and $ \varUpsilon_{U,3}(\cdot) $ are similar to that in (\ref{F_i}) and (\ref{mat-ricc}) for AAI with type vector $ U $,
and \begin{equation}\label{m-a}
	\lim_{n\to\infty}a_i^*(t)=\left(\frac{Q_i(t)}{R_i(t)}M_2+\frac{P_i(t)}{R_i(t)}\right)\wedge1,
\end{equation}
where \begin{equation*}
	M_2=\lim_{n\to\infty}\frac{1}{n}\sum_{k=1}^n\mu_{k 1}a_k^*(t)=\mathbb{E}\left[\mu_{ 1}a^*_U(t)\right]
\end{equation*}
is obtained by solving\begin{equation*}
	M_2=\!	\lim_{n\to\infty}\frac{1}{n}\sum_{i=1}^n\mu_{i 1}a_i^*(t)\!=\!	\lim_{n\to\infty}\frac{1}{n}\sum_{i=1}^n\mu_{i 1}\!\left(\frac{Q_i(t)}{R_i(t)}M_2+\frac{P_i(t)}{R_i(t)}\right)\wedge1=\mathbb{E}\!\left[\mu_{ 1}\!\left(\frac{Q_U(t)}{R_U(t)}M_2+\frac{P_U(t)}{R_U(t)}\right)\!\wedge1\right]\!.
\end{equation*}
Expressing the investment and reinsurance strategies for a robust equilibrium with an infinite number of AAIs, we can use the formulations \eqref{m-pi}--\eqref{m-a}. In the subsequent discussion, we will introduce the mean-field game and illustrate that \eqref{m-pi}--\eqref{m-a} serve as solutions to the mean-field game.

Assuming that the distribution $\mathcal{D}$ is a general distribution, let $\left(\Omega, \mathcal{F}, \mathbb{P}\right)$ support a random vector $ U=(x^0,\lambda,\mu_{1},\mu_{ 2},\eta,\theta,\delta,\Psi):\Omega\to\mathbb{U} $ with distribution $ \mathcal{D} $. We also assume that all the relevant Brownian motions are adapted to the smallest filtration $\hat{\mathbb{F}}=\{\hat{\mathcal{F}}_t,t\in[0,T]\}$ with the usual assumptions, ensuring $U$ is $\hat{\mathcal{F}}_0$-measurable. The natural filtration generated by the Brownian motions $ \tilde{W},{W}$, and ${B}$ is denoted by $\mathbb{F}^{\tilde{W},{W},{B}}=\{\mathcal{F}_t^{\tilde{W},{W},{B}},t\in[0,T]\}$.

Under the measure of  the AAI with type vector $ u_0=(x_0^0,\lambda_0,\mu_{ 01},\mu_{ 02},\eta_0,\theta_0,\delta_0,\Psi_{0}) $,
 the surplus process of a representative AAI with type $ U\sim\mathcal{D} $ is described as:
\begin{equation}\label{equ:xu}
\hspace{-5pt}	\left\{\!\begin{aligned}
		\mathrm{d} X_{U}(t)
		=&\left[r X_{U}(t)+\eta\left(\lambda+\hat{\lambda}\right) \mu_{ 1}-\hat{\eta}(1-a_{U}(t))^2\left(\lambda+\hat{\lambda}\right) \mu_{ 2}+\pi_{U}(t)m({a}{Z(t)}+{{b}})\right] \mathrm{d} t\\&+\!a_{U}(t)\!\left(\!\sqrt{\hat{\lambda}}\mu_{1}(\rd  \tilde{W}^{\mathbb{Q}^{u_0}}(t)\!+\!\phi_{u_0}(t)\rd t)\!+\!\sqrt{(\hat{\lambda}\!+\!\lambda)\mu_{ 2}\!-\!\hat{\lambda}\mu_{ 1}^2}(\rd \hat{W}_U^{\mathbb{Q}^{u_0}}(t)\!+\!\vartheta_{u_0}^U(t)\rd t)\!\right)\\&+\pi_{U}(t){\Sigma}(t)(\rd {W}^{\mathbb{Q}^{u_0}}(t)+\varphi_{u_0}(t)\rd t),
		\\	\rd Z(t)\!=\!{\kappa}&(\bar{Z}\!-\!Z(t))\rd t\!+\!{\nu}\sqrt{Z(t)}\left[{\rho}(\rd {W}^{\mathbb{Q}^{u_0}}(t)\!+\!\varphi_{u_0}(t)\rd t)\!+\!\sqrt{1\!-\!{\rho}^2}(\rd{B}^{\mathbb{Q}^{u_0}}(t)\!+\!\chi_{u_0}(t)\rd t)\right]\!,
	\end{aligned}\right.
\end{equation}
where \begin{equation*}
	{\Sigma}(t)={a}\sqrt{Z(t)}+\frac{{b}}{\sqrt{Z(t)}}.
\end{equation*}

Similar to the robust $ n $-insurer game, we initially establish the admissible set of strategies and corresponding density generators. Certain notations remain consistent with the previous section. Define:
\begin{equation}
	\begin{aligned}
		\mathscr{U}\!\!=&\Big\{(\pi_{U},a_{U}):U\sim\mathcal{D}, (\pi_{U},a_{U}) \text{ is progressively measurable w.r.t. }\hat{\mathbb{F}}\text{ and }0\leqslant a_{U}(t)\leqslant1,\\
		&\exists \mathcal{C}_{U}\in\mathbb{R}_+, \sup_{U}\mathcal{C}_{U}<\infty, \text{ such that } \pi_{U}(t)=\ell_{U}(t)\frac{Z(t)}{aZ(t)+b},|\ell_{U}(t)|\leqslant \mathcal{C}_{U}, \forall t\in[0,T]\Big\},\\
		\mathscr{U}_{u_0}&\!=\!\Big\{(\pi_{u_0},a_{u_0}):(\pi_{u_0},a_{u_0}) \text{ is progressively measurable w.r.t. }{\mathbb{F}}\text{ and }0\leqslant a_{u_0}(t)\leqslant1,\\
		&\quad\exists \mathcal{C}_{u_0}\in\mathbb{R}_+, \text{ such that } \pi_{u_0}(t)=\ell_{u_0}(t)\frac{Z(t)}{aZ(t)+b},|\ell_{u_0}(t)|\leqslant \mathcal{C}_{u_0}, \forall t\in[0,T]\Big\},\\
			\mathscr{A}\!\!=&\Big\{\left(\varphi, \chi,\phi,\vartheta\right):\left(\varphi, \chi,\phi,\vartheta\right) \text{ is progressively measurable w.r.t. }\mathbb{F}\text{ and }\left(\varphi(t), \chi(t)\right)\\&\ =\!\left(\hslash(t)\!\sqrt{Z(t)},\hbar(t)\!\sqrt{Z(t)} \right),
		|\hslash(t)|, |\hbar(t)|\leqslant \mathcal{C},\sup|\phi|<\infty,\sup|\vartheta|<\infty,~ \forall t\in[0,T]\Big\}.
	\end{aligned}
\end{equation}

The AAIs aim to identify the robust mean-field equilibrium strategies within the admissible set. Analogous to Definition \ref{def1}, the definition of the robust mean-field equilibrium in the mean-field game is presented below:

When the AAIs have type distribution $ U \sim \mathcal{D} $ and take strategy $ (\pi_{U},a_U) $, for AAI with type vector $ u_0 $, we define the relative performance at time $t$  as $ Y^{(\pi_{U},a_U)}_{u_0}(t)=X_{u_0}(t) {-\theta_{0}}\bar{X}_{u_0}(t),$  $ \forall  u_0\in\mathbb{U} $, where $ \bar{X}_{u_0}(t)=\mathbb{E}^{\mathbb{Q}^{u_0}}\left[X_{U}(T)\Big|\mathcal{F}_T^{\tilde{W},{W},{B}}\right] $,  
and  define the preference under the equivalent probability measure  $\mathbb{Q}^{u_0}$ given states $ Y_{u_0}^{(\pi_{U},a_U)}(t)=y$ and $Z(t)=z $  as follows:
\begin{equation*}
	\begin{aligned}
		&J_{u_0}\left((\pi_{u_0},a_{u_0}),\left(\varphi_{u_0}, \chi_{u_0},\phi_{u_0},\vartheta_{u_0}\right),(\pi_{U},a_U),t,y,z\right)\\
		=&\mathbb{E}^{\mathbb{Q}^{u_0}}_{t,y,z}\left[Y^{(\pi_{U},a_U)}_{u_0}(T)\right]
		-{\delta_0\over 2}\var^{\mathbb{Q}^{u_0}}_{t,y,z}\left[Y^{(\pi_{U},a_U)}_{u_0}(T)\right]
		\\
		&+\mathbb{E}^{\mathbb{Q}^{u_0}}_{t,y,z}\left[\frac{1}{2 \Psi_{0}}\int_{t}^{T} \left( {\varphi_{u_0}^{2}(s)}+{\chi_{u_0}^2(s)}+{\phi_{u_0}^{2}(s)}+{\mathbb{E}^{\mathbb{Q}^{u_0}}\left[(\vartheta_{u_0}^U(s))^2\Big|\mathcal{F}_T^{\tilde{W},{W},{B}}\right]}\right)\mathrm{d} s \right],
	\end{aligned}
\end{equation*}
where $ \mathbb{E}^{\mathbb{Q}^{u_0}}_{t,y,z}[\cdot] $ and $\var^{\mathbb{Q}^{u_0}}_{t,y,z}[\cdot]$ represent the conditional expectation and the conditional variance under probability measure $ \mathbb{Q}^{u_0}$ given states $ Y_{u_0}^{(\pi_{U},a_U)}(t)=y$ and $Z(t)=z $, respectively.  Define the objective function of an AAI with type vector $u_0$ under the worst-case scenario in a similar manner as:
$$
\begin{aligned}
	&\bar{J}_{u_0}\left((\pi_{u_0},a_{u_0}),(\pi_{U},a_U),t,y,z\right)\\=&\inf_{\left(\varphi_{u_0}, \chi_{u_0},\phi_{u_0},\vartheta_{u_0}\right)\in \mathscr{A}}J_{u_0}\left((\pi_{u_0},a_{u_0}),\left(\varphi_{u_0}, \chi_{u_0},\phi_{u_0},\vartheta_{u_0}\right),(\pi_{U},a_U),t,y,z\right).
\end{aligned}
$$
If \begin{equation*}
	\left(\varphi_{u_0}^\dag,\chi_{u_0}^\dag,\phi_{u_0}^\dag,\vartheta_{u_0}^\dag\right)=\arg\!\inf_{\left(\varphi_{u_0}, \chi_{u_0},\phi_{u_0},\vartheta_{u_0}\right)\in \mathscr{A}}\!J_{u_0}\left((\pi_{u_0},a_{u_0}),\!\left(\varphi_{u_0}, \chi_{u_0},\phi_{u_0},\!\vartheta_{u_0}\right)\!,(\pi_{U},a_U),t,y,z\right)\!,
\end{equation*}then we call $ \left(\varphi_{u_0}^\dag,\chi_{u_0}^\dag,\phi_{u_0}^\dag,\vartheta_{u_0}^\dag\right)$ the worst-case scenario density generator associated with the strategy.

The objective function parallels the one introduced in the preceding section, featuring the variance operator and exhibiting time inconsistency. To address this, we adopt the time-consistent equilibrium strategy outlined in \cite{bjork2017time} to tackle the time-inconsistent optimization problem. Analogous to Definitions \ref{def}--\ref{def1}, defining the robust mean-field equilibrium of the mean-field game requires establishing the best response strategy.
 \begin{definition}\label{def:mfrs}
For any fixed $(t,y,z)\in [0,T]\times \mathbb{R}\times \mathbb{R}_+$, consider an admissible  strategy $ (\pi_{u_0}^\dag,a_{u_0}^\dag) $ for AAI with type vector $ u_0 $ given that the AAIs have type distribution $ U \sim \mathcal{D} $ and take strategy $ (\pi_{U},a_U) $. For any $(\pi,a)\in \mathscr{U}_{u_0}$ and $h>0$,  define a new $ h $-perturbed strategy $(\pi^h,a^h)$ by
 	\begin{equation*}
 		(\pi^h(s),a^h(s))=\left\{
 		\begin{aligned}
 			&(\pi(s),a(s)),&&t\leqslant s<t+h,
 			\\&(\pi^\dag_{u_0}(s),a_{u_0}^\dag(s)),&& t+h\leqslant s\leqslant T.
 		\end{aligned}
 		\right.
 	\end{equation*}
 	If for all $(\pi,a)\in \mathscr{U}_{u_0}$,   we have 
 	\begin{eqnarray}{\label{mv1}}
 		\mathop {\lim \inf }\limits_{h\rightarrow 0^+} \frac{\bar{J}_{u_0}\!\left((\pi_{u_0}^\dag,a_{u_0}^\dag),(\pi_{U},a_U),t,y,z\right)\!-\bar{J}_{u_0}\!\left((\pi^h,a^h),(\pi_{U},a_U),t,y,z\right)\!}{h}\geqslant 0,
 	\end{eqnarray}
 	then  $ (\pi_{u_0}^\dag,a_{u_0}^\dag)$  is called the \textbf{robust time-consistent Nash equilibrium investment-reinsurance {response} strategy} or simply the \textbf{best response strategy} and the robust time-consistent response value function of AAI with type vector $ u_0 $ is given by $\bar{J}_{u_0}\!\left((\pi_{u_0}^\dag,a_{u_0}^\dag),(\pi_{U},a_U),t,y,z\right)\!$.
 	
 \end{definition}
\begin{definition}\label{def:mfe}
	We call $ (\pi_{U}^*,a_{U}^* ), U\sim \mathcal{D}$ a {\textbf{robust (time-consistent) mean-field equilibrium}} if $(\pi_{U}^*,a_{U}^*)\in\mathscr{U}$, and   for all  $ u_0\in\mathbb{U} $,
$ (\pi_{u_0}^*,a_{u_0}^*)$  is  the {best response strategy} given that the AAIs have type distribution $ U \sim \mathcal{D} $ and take strategy $ (\pi^*_{U},a^*_U) $.  We then call $(\pi_{u_0}^*,a_{u_0}^*) $ the \textbf{robust (time-consistent) mean-field equilibrium strategy}  for AAI with type vector $ u_0 $ and call $V^{u_0}(t,y,z):= \bar{J}_{u_0}\!\left((\pi_{u_0}^*,a_{u_0}^*),(\pi^*_{U},a^*_U),t,y,z\right)\! $ the robust time-consistent value  function.

\end{definition}

Definition \ref{def:mfe} establishes that the robust mean-field equilibrium is attained by identifying a fixed point. With AAIs characterized by a type distribution $ U \sim \mathcal{D} $ and employing strategies $ (\pi^*_{U},a^*_U) $, an AAI with type vector $ u_0 $ aims to find the best response investment-reinsurance strategy $(\pi^*_{u_0},a_{u_0}^*)$ under the worst-case scenario. This iterative process is carried out by each insurer in the system.
If we obtain a fixed point in the meaning that $(\pi^*_{u_0},a_{u_0}^*)=(\pi^*_{U},a^*_U)|_{U=u_0} ,\forall u_0\in\mathbb{U}$, then $ (\pi_{U}^*,a_{U}^* ), U \sim \mathcal{D}$ is referred to as the robust mean-field equilibrium.

%

Based on \eqref{m-pi} and \eqref{m-a}, we expect 
\begin{equation}\label{m-opt}
	\left\{\begin{aligned}
		\pi_{U}^*(t)&=\left[\theta_i\frac{\mathbb{E}[\frac{m -\nu\rho( \Psi v_{U,3}+\delta \varUpsilon_{U,3})}{(\Psi+{\delta})}]}{e^{r(T-t)}(1-\mathbb{E}[{\theta}])}+S_U(t)\right]\frac{{Z(t)}}{a{Z(t)}+b},\\
		a_U^*(t)&=\left(\frac{Q_U(t)}{R_U(t)}\bar{\Omega}(t)+\frac{P_U(t)}{R_U(t)}\right)\wedge1,
	\end{aligned}\right.
\end{equation}
where
 $\overline{\Omega}(t)=\mathbb{E}[\mu_{ 1}a_{U}^*(t)] $	is obtained by $\overline{\Omega}(t)=\mathbb{E}[\mu_{ 1}\left(\frac{Q_U(t)}{R_U(t)}\bar{\Omega}(t)+\frac{P_U(t)}{R_U(t)}\right)\wedge1] $ and \begin{equation}\label{SRQP}
 	\left\{\begin{aligned}
 		&S_U(t)=\frac{m -\nu\rho( v_{U,3}\Psi+\delta \varUpsilon_{U,3})}{(\Psi+{\delta})e^{r(T-t)}},\\
 		&R_U(t)=(\lambda+\hat{\lambda})\mu_{2}((\delta+\Psi)e^{2r(T-t)}+2\hat{\eta}e^{r(T-t)}),\\
 		&Q_U(t)=\hat{\lambda}\theta\mu_{ 1}(\Psi+\delta)e^{2r(T-t)},\\& P_U(t)=2\hat{\eta}(\lambda+\hat{\lambda})\mu_{ 2}e^{r(T-t)}.
 	\end{aligned}\right.\end{equation}
The expression \eqref{m-opt} represents the robust mean-field equilibrium, achieved by taking the limit as $n$ approaches infinity in the competition system with $n$ AAIs. Subsequently, we aim to concurrently determine the best response strategy for all $u_0\in\mathbb{U} $ to establish the robust mean-field equilibrium $ (\pi_{U}^*,a_{U}^* ), U\sim \mathcal{D}$.

Based on the dynamic \eqref{equ:xu} of $ X_{U}$, under the measure of $ u_0 $, $Y^{(\pi_{U},a_U)}_{u_0}$ evolves as
\begin{equation}
	\begin{aligned}
	&	\mathrm{d} Y^{(\pi_{U},a_U)}_{u_0}(t)\\
		=&r Y^{(\pi_{U},a_U)}_{u_0}(t)\mathrm{d} t+\left(\eta_0\left(\lambda_0+\hat{\lambda}\right) \mu_{0 1}-\theta_0\mathbb{E}\left[\eta\left(\lambda+\hat{\lambda}\right) \mu_{ 1}\right]\right)\rd t\\
		&-\hat{\eta}(1-a_{u_0}(t))^2\left(\lambda_0+\hat{\lambda}\right) \mu_{ 02}\rd t+\theta_0\hat{\eta}\mathbb{E}^{\mathbb{Q}^{u_0}}\left[(1-a_{U}(t))^2\left(\lambda+\hat{\lambda}\right) \mu_{ 2}\Big|\mathcal{F}_T^{\tilde{W},{W},{B}}\right]\rd t\\
		&+\left(\pi_{u_0}(t)-\theta_0\mathbb{E}^{\mathbb{Q}^{u_0}}\left[\pi_{U}(t)\Big|\mathcal{F}_T^{\tilde{W},{W},{B}}\right]\right)\left({\Sigma}(t)\rd {W}^{\mathbb{Q}^{u_0}}(t)+\left(m\sqrt{Z(t)}+{\varphi_{u_0}(t)}\right)\Sigma(t) \mathrm{d} t\right)\\
		&+\sqrt{\hat{\lambda}}\left(a_{u_0}(t)\mu_{01}-\theta_0\mathbb{E}^{\mathbb{Q}^{u_0}}\left[a_{U}(t)\mu_{1}\Big|\mathcal{F}_T^{\tilde{W},{W},{B}}\right]\right)(\rd  \tilde{W}^{\mathbb{Q}^{u_0}}(t)+\phi_{u_0}(t)\rd t)\\
		&+a_{u_0}(t)\sqrt{(\hat{\lambda}+\lambda_0)\mu_{0 2}-\hat{\lambda}\mu_{0 1}^2}(\rd \hat{W}_{u_0}^{\mathbb{Q}^{u_0}}(t)+\vartheta_{u_0}^{u_0}(t)\rd t)
		\\&-\theta_0\mathbb{E}^{\mathbb{Q}^{u_0}}\left[a_{U}(t)\sqrt{(\hat{\lambda}+\lambda)\mu_{ 2}-\hat{\lambda}\mu_{ 1}^2}\vartheta_{u_0}^U(t)\Big|\mathcal{F}_T^{\tilde{W},{W},{B}}\right]\rd t.
	\end{aligned}
\end{equation}
To obtain the  best response strategy, we first define the  infinitesimal generator $\mathcal{A}^{(\pi_{U},a_U)}_{u_0}$ based on the dynamic of $ Y^{(\pi_{U},a_U)}_{u_0} $ and $Z$ for  $ u_0=(x_0^0,\lambda_0,\mu_{ 01},\mu_{ 02},\eta_0,\theta_0,\delta_0,\Psi_{0}) $. For $ U\sim \mathcal{D} $,  $\forall (t,y,z) \in[0,T]\times\mathbb{R}\times\mathbb{R}_+$ and $f(t,y,z)\in C^{1,2,2}([0,T]\times\mathbb{R}\times\mathbb{R}_+)$, 
\begin{equation*}
	\begin{aligned}
		&\mathcal{A}^{(\pi_{U},a_U)}_{u_0}f(t,y,z)\\
		=&f_t+\left[{\kappa}(\bar{Z}-z)+{\nu}\sqrt{z}({\rho}\varphi_{u_0}+\sqrt{1-{\rho}^2}\chi_{u_0})\right]f_{z}+\frac{1}{2}\nu^2zf_{zz}\\
		&+\rho\nu (az+b)\left(\pi_{u_0}(t)-\theta_0\mathbb{E}^{\mathbb{Q}^{u_0}}\left[\pi_{U}(t)\Big|\mathcal{F}_T^{\tilde{W},{W},{B}}\right]\right)f_{yz}\\
		&+r yf_y+\left(\eta_0\left(\lambda_0+\hat{\lambda}\right) \mu_{0 1}-\theta_0\mathbb{E}\left[\eta\left(\lambda+\hat{\lambda}\right) \mu_{ 1}\right]\right)f_y\\
		&-\hat{\eta}(1-a_{u_0}(t))^2\left(\lambda_0+\hat{\lambda}\right) \mu_{0 2}f_y+\theta_0\mathbb{E}^{\mathbb{Q}^{u_0}}\left[\hat{\eta}(1-a_{U}(t))^2\left(\lambda+\hat{\lambda}\right) \mu_{ 2}\Big|\mathcal{F}_T^{\tilde{W},{W},{B}}\right]f_y\\
		&+\left(\pi_{u_0}(t)-\theta_0\mathbb{E}^{\mathbb{Q}^{u_0}}\left[\pi_{U}(t)\Big|\mathcal{F}_T^{\tilde{W},{W},{B}}\right]\right)\left(m\sqrt{z}+{\varphi_{u_0}(t)}\right)\sigma f_y\\
		&+\sqrt{\hat{\lambda}}\left(a_{u_0}(t)\mu_{01}-\theta_0\mathbb{E}^{\mathbb{Q}^{u_0}}\left[a_{U}(t)\mu_{1}\Big|\mathcal{F}_T^{\tilde{W},{W},{B}}\right]\right)\phi_{u_0}(t)f_y\\
		&+a_{u_0}(t)\sqrt{(\hat{\lambda}+\lambda_0)\mu_{0 2}-\hat{\lambda}\mu_{0 1}^2}\vartheta_{u_0}^{u_0}(t)f_y
		-\theta_0\mathbb{E}^{\mathbb{Q}^{u_0}}\left[a_{U}(t)\sqrt{(\hat{\lambda}+\lambda)\mu_{ 2}-\hat{\lambda}\mu_{ 1}^2}\vartheta_{u_0}^{U}(t)\Big|\mathcal{F}_T^{\tilde{W},{W},{B}}\right]f_y\\
		&+\frac{1}{2}{\hat{\lambda}}\left(a_{u_0}(t)\mu_{01}-\theta_0\mathbb{E}^{\mathbb{Q}^{u_0}}\left[a_{U}(t)\mu_{1}\Big|\mathcal{F}_T^{\tilde{W},{W},{B}}\right]\right)^2f_{yy}+\frac{1}{2}\left(\pi_{u_0}(t)-\theta_0\mathbb{E}^{\mathbb{Q}^{u_0}}\left[\pi_{U}(t)\Big|\mathcal{F}_T^{\tilde{W},{W},{B}}\right]\right)^2\!{\sigma}^2f_{yy}\\
		&+\frac{1}{2}a_{u_0}^2(t){\left((\hat{\lambda}+\lambda_0)\mu_{0 2}-\hat{\lambda}\mu_{ 01}^2\right)}f_{yy},
	\end{aligned}
\end{equation*}
where $ \sigma=a\sqrt{z}+\frac{b}{\sqrt{z}} $.
We also define \begin{equation*}
\hspace{-5pt}	\begin{aligned}
		&	\mathcal{L}^{(\pi_{U},a_U)}_{u_0}\left((\pi_{u_0},a_{u_0}),\left(\varphi_{u_0}, \chi_{u_0},\phi_{u_0},\vartheta_{u_0}\right),f,g, h,(t,y,z)\right)\\=~&\mathcal{A}^{(\pi_{U},a_U)}_{u_0}f(t,y,z)+  \frac{\varphi_{u_0}^{2}(t)}{2 h} +\frac{\chi_{u_0}^2(t)}{2 h} +  \frac{\phi_{u_0}^{2}(t)}{2 h} +\frac{1}{2 h}\mathbb{E}^{\mathbb{Q}^{u_0}}\left[(\vartheta_{u_0}^U(t))^2\Big|\mathcal{F}_T^{\tilde{W},{W},{B}}\right] \\
		&+\delta_{0} g(t,y,z) \mathcal{A}^{(\pi_{U},a_U)}_{u_0}g(t,y,z)-\frac{\delta_{0}}{2}\mathcal{A}^{(\pi_{U},a_U)}_{u_0}g^2(t,y,z).
	\end{aligned}
\end{equation*}

 The extended HJBI equations  are presented as follows.
\begin{proposition}[extended HJBI equations]\label{def:hjbi2}
	The extended HJBI equations of AAI with type $ u_0=(x_0^0,\lambda_0,\mu_{ 01},$ $\mu_{ 02},\eta_0,\theta_0,\delta_0,\Psi_{0}) $ for the two-tuple $ (V, \Upsilon) $ are defined as
	\begin{equation}\label{HJBI-MV2}
		\begin{aligned}
			\sup _{(\pi_{u_0},a_{u_0})\in\mathscr{U}_{u_0}} \inf _{\left(\varphi_{u_0}, \chi_{u_0},\phi_{u_0},\vartheta_{u_0}\right)\in\mathscr{A}}&\mathcal{L}^{(\pi^*_{U},a^*_U)}_{u_0}\left((\pi_{u_0},a_{u_0}),(\varphi_{u_0}, \chi_{u_0},\phi_{u_0},\vartheta_{u_0}),V,\Upsilon,\Psi_0,(t,y,z)\right)  =0,
		\end{aligned}
	\end{equation}
	with boundary condition $ V(T,y,z)=y $
	and \begin{equation}\label{g-MV2}
		\mathcal{A}^{u_0}\Upsilon(t,y,z)=0
	\end{equation}with boundary condition $ \Upsilon(T,y,z)=y $, where\begin{equation}\label{HJBI-opt-MV2}
		\begin{aligned}
			&(\pi^\circ_{u_0},a^\circ_{u_0})\!=\!\arg\!	\sup _{(\pi_{u_0},a_{u_0})\in\mathscr{U}_{u_0}} \inf _{(\varphi_{u_0}, \chi_{u_0},\phi_{u_0},\vartheta_{u_0})\in\mathscr{A}}\!\!\mathcal{L}^{(\pi^*_{U},a^*_U)}_{u_0}\!\left((\pi_{u_0},a_{u_0}),(\varphi_{u_0}, \chi_{u_0},\phi_{u_0},\vartheta_{u_0}),V,\Upsilon,\Psi_0,(t,y,z)\right),\\
			&( \varphi^\circ_{u_0},\chi^\circ_{u_0},\phi_{u_0}^\circ,\vartheta_{u_0}^\circ)\!=\!\arg	\! \inf _{(\varphi_{u_0}, \chi_{u_0},\phi_{u_0},\vartheta_{u_0})\in\mathscr{A}}\!\!\mathcal{L}^{(\pi^*_{U},a^*_U)}_{u_0}\!\left((\pi_{u_0},a_{u_0}),(\varphi_{u_0}, \chi_{u_0},\phi_{u_0},\vartheta_{u_0}),V,\Upsilon,\Psi_0,(t,y,z)\right).
		\end{aligned}
	\end{equation}
\end{proposition}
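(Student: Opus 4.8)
The plan is to mirror the heuristic derivation behind Proposition \ref{prop:hjbi}, now applied to the frozen-population problem faced by the representative AAI of type $u_0$. I would fix the population strategy $(\pi^*_U,a^*_U)\in\mathscr{U}$; then the conditional expectations $\mathbb{E}^{\mathbb{Q}^{u_0}}[\,\cdot\,|\mathcal{F}_T^{\tilde{W},{W},{B}}]$ appearing in the dynamics of $Y^{(\pi^*_U,a^*_U)}_{u_0}$ and in the generator $\mathcal{A}^{(\pi^*_U,a^*_U)}_{u_0}$ depend only on $t$, on the frozen strategy, and on the common factors $\tilde W,W,B$, so that, conditionally on $\mathcal{F}_T^{\tilde{W},{W},{B}}$, the pair $(Y_{u_0},Z)$ is a controlled Markov diffusion. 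This reduces the mean-field problem to a robust time-consistent mean-variance control problem of exactly the type treated in Section 3 for a single player, and the extended HJBI system \eqref{HJBI-MV2}--\eqref{HJBI-opt-MV2} is produced along the same route.

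Concretely, I would proceed in three steps. First, using $\mathbb{E}[Y_{u_0}(T)]-\frac{\delta_0}{2}\var[Y_{u_0}(T)]=\mathbb{E}[Y_{u_0}(T)]-\frac{\delta_0}{2}\mathbb{E}[Y_{u_0}(T)^2]+\frac{\delta_0}{2}(\mathbb{E}[Y_{u_0}(T)])^2$, I introduce the auxiliary function $\Upsilon(t,y,z)=\mathbb{E}^{\mathbb{Q}^{u_0}}_{t,y,z}[Y_{u_0}(T)]$ evaluated along the candidate best response $(\pi^\circ_{u_0},a^\circ_{u_0})$ and its associated worst-case density generator; by the Kolmogorov backward representation this $\Upsilon$ solves the linear equation $\mathcal{A}^{u_0}\Upsilon=0$ with terminal datum $\Upsilon(T,y,z)=y$, which is \eqref{g-MV2}. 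Second, following \cite{bjork2017time} and its robust extension in \cite{wang2022robust}, I would write the local game-theoretic optimality condition from Definition \ref{def:mfrs}: apply Dynkin's formula to the candidate value function $V$ along the $h$-perturbed control on $[t,t+h)$, add the running penalty $\frac{1}{2\Psi_0}(\varphi_{u_0}^2+\chi_{u_0}^2+\phi_{u_0}^2+\mathbb{E}^{\mathbb{Q}^{u_0}}[(\vartheta^U_{u_0})^2|\mathcal{F}_T^{\tilde{W},{W},{B}}])$ together with the quadratic correction terms $\delta_0 g\,\mathcal{A}^{(\pi^*_U,a^*_U)}_{u_0}g-\frac{\delta_0}{2}\mathcal{A}^{(\pi^*_U,a^*_U)}_{u_0}g^2$ that absorb the non-linear dependence on $\mathbb{E}[Y_{u_0}(T)]$, divide by $h$, and send $h\downarrow0^+$. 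The resulting inequality forces the equilibrium control to be a pointwise maximizer of $\mathcal{L}^{(\pi^*_U,a^*_U)}_{u_0}$, while ambiguity aversion (the relative-entropy penalty) contributes the pointwise minimizer over $(\varphi_{u_0},\chi_{u_0},\phi_{u_0},\vartheta_{u_0})\in\mathscr{A}$; together these give \eqref{HJBI-MV2} and the characterization \eqref{HJBI-opt-MV2}. Third, the terminal conditions $V(T,y,z)=\Upsilon(T,y,z)=y$ are read off from $J_{u_0}$ at $t=T$.

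The hard part will be making the Dynkin/limit step rigorous under the non-Lipschitz $4/2$ coefficients: one must check that along admissible $(\pi,a)\in\mathscr{U}_{u_0}$ and $(\varphi,\chi,\phi,\vartheta)\in\mathscr{A}$ the stochastic integrals in the difference quotient are true martingales and that the resulting expectations are finite and right-differentiable at $h=0$. This is exactly where the structural restrictions $\pi_{u_0}=\ell_{u_0}Z/(aZ+b)$ with bounded $\ell_{u_0}$ and $(\varphi,\chi)=(\hslash\sqrt Z,\hbar\sqrt Z)$ with $\mathcal{C}^2<\kappa^2/(2\nu^2)$ are used, via $\mathbb{E}[\exp(\mathcal{C}^2\int_0^T Z\,\rd t)]<\infty$ (the Remark after the admissible sets and Proposition 5.1 of \cite{Kraft}). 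A secondary point is the Isaacs condition: since $\mathcal{L}^{(\pi^*_U,a^*_U)}_{u_0}$ is concave in $(\pi_{u_0},a_{u_0})$ once the quadratic penalty is included and strictly convex in $(\varphi_{u_0},\chi_{u_0},\phi_{u_0},\vartheta_{u_0})$, the $\sup$--$\inf$ in \eqref{HJBI-MV2} may be interchanged; one must also keep track of the $\mathcal{F}_T^{\tilde{W},{W},{B}}$-measurability of the conditional-expectation terms so that $\mathcal{A}^{(\pi^*_U,a^*_U)}_{u_0}$ is a bona fide second-order operator in $(y,z)$.
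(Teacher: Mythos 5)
Your derivation is correct and follows essentially the route the paper itself intends: the paper states this proposition as a definition of the extended HJBI system and omits the derivation, referring (as in the $n$-insurer case, Proposition \ref{prop:hjbi}) to the extended-HJB construction of \cite{bjork2017time} and \cite{wang2022robust}, which is exactly the conditioning-on-common-noise, auxiliary-function-$\Upsilon$, $h$-perturbation argument you describe. Your additional remarks on the martingale/integrability checks and the $\mathcal{F}_T^{\tilde{W},{W},{B}}$-measurability of the mean-field terms are consistent with what the paper defers to its verification theorem, so there is no substantive divergence to report.
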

Similarly, we define $ (\pi^\circ_{u_0},a^\circ_{u_0}) $ as the candidate response strategies for the AAI with type vector $ {u_0} $. We also designate $ ( \varphi^\circ_{u_0},\chi^\circ_{u_0},\phi_{u_0}^\circ,\vartheta_{u_0}^\circ) $ as the density generator associated with the worst-case scenario for the candidate strategies. Assuming that the pair $ (v^{u_0}, \varUpsilon^{u_0}) $ solves (\ref{HJBI-MV2})--(\ref{g-MV2}), we refer to it as the candidate value function tuple, with $ v^{u_0}$ being the candidate value function. Analogous to Theorems \ref{solution-HJBI}--\ref{Verification}, we also have the following results. The proofs follow a similar approach to the previous section and are omitted.
		For AAI with type vector $ u=(x^0,\lambda,\mu_{ 1},\mu_{ 2},\eta,\theta,\delta,\Psi)$,
		let \begin{equation*}
			\left\{\begin{aligned}
				&K_{u,1}=\diag(-\frac{1}{2}\nu^2(\rho^2\frac{\Psi\delta}{\Psi+\delta}+(1-\rho^2)\Psi), \nu^2\rho^2\frac{\delta \Psi}{(\Psi+{\delta})^2}),\\
				&K_{u,2}=\diag(-\frac{1}{2}\nu^2\delta(1-\rho^2\frac{\delta}{\Psi+\delta} ),\nu^2\rho^2\Psi\frac{\delta \Psi}{(\Psi+{\delta})^2}), \\
				&K_{u,1,2}=\diag(-\nu^2\left[ \rho^2\Psi\frac{2\Psi{\delta}}{(\Psi+{\delta})^2}+(1-\rho^2)\Psi\right],\nu^2\rho^2\frac{\Psi\delta}{\Psi+\delta}),\\
				&\bar{B}_{u,1}=\diag(-(\kappa+m\nu\rho\frac{\Psi}{\Psi+\delta}), -\left[\kappa +m\nu\rho\frac{{\delta}^2+(\Psi)^2}{(\Psi+{\delta})^2}\right]), \\
				&\bar{B}_{u,2}=\diag (-m\nu\rho\frac{\delta}{\Psi+\delta},-m\nu\rho\frac{2\Psi{\delta}}{(\Psi+{\delta})^2}),\\
				&G_u=\diag(\frac{m^2}{2(\Psi+\delta)}, \frac{\delta m^2}{(\Psi+\delta)^2}),
			\end{aligned}\right.
		\end{equation*}
		
		and $ H=\begin{pmatrix}
			0&1\\1&0
		\end{pmatrix} $, where $ \diag(a,b)= \begin{pmatrix}
			a&0\\0&b
		\end{pmatrix}$. Define 
		\begin{equation*}
			\alpha_{u,1}=\|K_{u,1}\|_{\sup}+\|K_{u,2}\|_{\sup}+\|K_{u,1,2}\|_{\sup},\quad\alpha_{u,2}=\|\bar{B}_{u,1}\|_{\sup}+\|\bar{B}_{u,2}\|_{\sup},\quad	\alpha_{u,3}=\|G_{u}\|_{\sup}
		\end{equation*}
	and
		\begin{equation*}
			\Delta_u = \alpha_{u,2}^2-4\alpha_{u,1}\alpha_{u,3} , \varsigma_{u,1}=\frac{-\alpha_{u,2}+\sqrt{\Delta_u}}{2\alpha_{u,1}} , \varsigma_{u,2}=\frac{-\alpha_{u,2}-\sqrt{\Delta_u}}{2\alpha_{u,1}}, 
		\end{equation*}
		where $ \|\diag(a,b)\|_{\sup}= \max\{|a|,|b|\}$. Then we have the following results.
		\begin{proposition}\label{ricc2}
			For every $u\in\mathbb{U}$, if any of the following three compatible conditions holds, then we can define the corresponding $ U_u(t) $. Subsequently, we observe that the matrix Riccati equation (\ref{mat-ricc2}) has a solution $ F_u$ with $ \|F_u(t)\|_{\sup}\leqslant U_u(t) $ for $ t\in[0,T] $.\begin{equation}\label{mat-ricc2}
			\left\{\begin{aligned}
					&\frac{\rd F_u}{\rd t}=F_uK_{u,1}F_{u}+HF_uHK_{u,2}F_uH+\bar{B}_{u,1}F_u+HF_uH\bar{B}_{u,2}+HF_uK_{u,1,2}HF_u+ G_u,\\&F_u(0)=\diag(0,0).
				\end{aligned}\right.
			\end{equation}
			\begin{itemize}
				\item [(i)]If $ \Delta_u =0 $ and $ T<\frac{2}{\alpha_{u,2}} $, then define 
				\begin{equation*}
					U_u(t)=\varsigma_{u,1}+\frac{\alpha_{u,2}}{\alpha_{u,1}(2-\alpha_{u,2}t)}.
				\end{equation*}
				\item [(ii)]If $ \Delta_u >0 $ and $ T<\frac{1}{\sqrt{\Delta_u}}\log(\frac{\varsigma_{u,2}}{\varsigma_{u,1}}) $, then define
				\begin{equation*}
					U_u(t)=\frac{\alpha_{u,3}}{\alpha_{u,1}}\frac{1-e^{\alpha_{u,1}(\varsigma_{u,1}-\varsigma_{u,2})t}}{\varsigma_{u,2}-\varsigma_{u,1}e^{\alpha_{u,1}(\varsigma_{u,1}-\varsigma_{u,2})t}}.
				\end{equation*}
				\item [(iii)]If $ \Delta_u <0 $ and $ T<\frac{1}{\sqrt{|\Delta_u|}}(\pi+2\arctan(\frac{\Re{\varsigma_{u,1}}}{\Im{\varsigma_{u,1}}})) $, where $ \Re{\varsigma_{u,1}} $  represents the real part of $ \varsigma_{u,1} $ and $ \Im{\varsigma_{u,1}} $ represents the imaginary part of $ \varsigma_{u,1} $, then define
				\begin{equation*}
					U_u(t)=\Re{\varsigma_{u,1}}+\Im{\varsigma_{u,1}}\tan\left(\alpha_{u,1}t\Im{\varsigma_{u,1}}-\arctan\left(\frac{\Re{\varsigma_{u,1}}}{\Im{\varsigma_{u,1}}}\right)\right).
				\end{equation*}
			\end{itemize}
		\end{proposition}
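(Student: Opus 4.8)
The plan is to observe that \eqref{mat-ricc2} is the exact analogue of \eqref{mat-ricc}, obtained by replacing the index $i$ throughout by the type vector $u$ (so that $K_{i,1},K_{i,2},K_{i,1,2},\bar{B}_{i,1},\bar{B}_{i,2},G_i$ become $K_{u,1},K_{u,2},K_{u,1,2},\bar{B}_{u,1},\bar{B}_{u,2},G_u$, while $H$ is unchanged). Consequently the proof of Proposition \ref{ricc} applies \emph{mutatis mutandis}, and it suffices to reduce \eqref{mat-ricc2} to a scalar comparison problem.

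First I would set $\rho_u(t):=\|F_u(t)\|_{\sup}$ and estimate the right-hand side of \eqref{mat-ricc2} summand by summand, using the compatibility of $\|\cdot\|_{\sup}$ with matrix products together with $\|H\|_{\sup}=1$ and the fact that conjugation by the permutation matrix $H$ leaves the sup-norm unchanged. Passing to the upper Dini derivative, since $t\mapsto\rho_u(t)$ is Lipschitz but not $C^1$, this gives
\[
\frac{\rd}{\rd t}\rho_u(t)\ \leqslant\ \alpha_{u,1}\,\rho_u(t)^2+\alpha_{u,2}\,\rho_u(t)+\alpha_{u,3},\qquad \rho_u(0)=0,
\]
with $\alpha_{u,1},\alpha_{u,2},\alpha_{u,3}$ as defined before the statement.

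Next I would compare $\rho_u$ with the solution $g$ of the autonomous scalar Riccati equation $\dot g=\alpha_{u,1}g^2+\alpha_{u,2}g+\alpha_{u,3}$ with $g(0)=0$. Since $\alpha_{u,1},\alpha_{u,3}>0$, one has $\dot g(0)=\alpha_{u,3}>0$, so $g$ is nonnegative, strictly increasing on its interval of existence, and escapes to $+\infty$ in finite time. Integrating $g$ explicitly --- by partial fractions when $\Delta_u>0$, through a double root when $\Delta_u=0$, and via a $\tan$-type primitive when $\Delta_u<0$ --- yields precisely the three closed forms $U_u(t)$ of cases (i)--(iii); a direct check shows that $g$ becomes singular exactly at the times appearing as the upper bounds for $T$ in the three cases, so under the stated hypothesis $U_u$ is finite and nondecreasing on $[0,T]$.

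Finally, the comparison principle for scalar ODEs (see \cite{Papavassilopoulos1979}) yields $\rho_u(t)\leqslant g(t)=U_u(t)$ for every $t$ in the maximal interval of existence of $F_u$. As \eqref{mat-ricc2} has a polynomial, hence locally Lipschitz, right-hand side, it admits a unique local solution; the a priori bound $\rho_u\leqslant U_u$ rules out blow-up on $[0,T]$, so the solution extends to all of $[0,T]$ and satisfies $\|F_u(t)\|_{\sup}\leqslant U_u(t)$ there, whence also $\max_{t\in[0,T]}\|F_u(t)\|_{\sup}\leqslant U_u(T)$. The only genuine work is the summand-by-summand estimate leading to the scalar inequality --- checking that each of the six terms of the Riccati right-hand side is controlled by the claimed combination of the $\alpha$'s; since this computation coincides with the one already carried out for Proposition \ref{ricc}, it is omitted.
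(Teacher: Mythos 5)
Your proposal is correct and follows essentially the same route as the paper: the paper reduces \eqref{mat-ricc2} (exactly as for \eqref{mat-ricc}) to the scalar differential inequality $\frac{\rd}{\rd t}\|F_u(t)\|_{\sup}\leqslant \alpha_{u,1}\|F_u(t)\|_{\sup}^2+\alpha_{u,2}\|F_u(t)\|_{\sup}+\alpha_{u,3}$ and then invokes \cite{Papavassilopoulos1979} for the comparison bound and non-blow-up on $[0,T]$. You simply make explicit the steps the paper leaves to the cited reference (the Dini-derivative estimate, the closed-form solution of the scalar Riccati equation in the three cases, and the continuation argument), so no further comment is needed.
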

		\begin{theorem}\label{solution-MHJBI}
	
Assume that the parameters fulfill one of the three compatible conditions outlined in Proposition \ref{ricc2}, and
		\begin{equation}\label{condition}
			\sup_{u_0\in\mathbb{U}}m+\nu\rho U_{u_0}(T)\Psi_0<\frac{\kappa}{\sqrt{2}\nu},	\quad\sup_{u_0\in\mathbb{U}}\nu\sqrt{1-\rho^2} U_{u_0}(T)\Psi_0<\frac{\kappa}{\sqrt{2}\nu}.
		\end{equation}
	Then the candidate value function tuple $ (v^{u},\varUpsilon^{u}) $ of  AAI with type vector $u$  is  given by\begin{equation*}
			v^{u}(t, y,z)= v_{u,1}(t)+ yv_{u,2}(t)+v_{u,3}(t)z,\quad \varUpsilon^{u}(t, y,z)= \varUpsilon_{u,1}(t)+ y\varUpsilon_{u,2}(t)+\varUpsilon_{u,3}(t)z,
		\end{equation*}where 
		\begin{equation*}
			\begin{aligned}
				v_{u,2}(t)=\varUpsilon_{u,2}(t)=&e^{r(T-t)},
			\end{aligned}
		\end{equation*}
		\begin{equation}\label{F_u}
			\diag(v_{u,3}(t),\varUpsilon_{u,3}(t) )=F_u(T-t).
		\end{equation}
		$ F_u(t) $ is the solution to (\ref{mat-ricc2}), $ \varUpsilon_{u,1}(t) $ and $ v_{u,1}(t) $ can  be obtained similarly based on \eqref{varUpsilon} and \eqref{v1}. Then the candidate value function tuple follows.
		\end{theorem}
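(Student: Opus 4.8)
The plan is to mirror the derivation that produced Theorem \ref{solution-HJBI} in the $n$-insurer setting, now applied to the extended HJBI system \eqref{HJBI-MV2}--\eqref{g-MV2} for a representative AAI with type vector $u_0$, and then recognize that the mean-field dependence enters only through the fixed-point quantities $\overline{\Omega}(t)=\mathbb{E}[\mu_1 a_U^*(t)]$ and $M_1=\mathbb{E}\bigl[\tfrac{m-\nu\rho(\Psi v_{U,3}+\delta\varUpsilon_{U,3})}{\Psi+\delta}\bigr]e^{-r(T-t)}$, which are treated as exogenous deterministic inputs when solving the single-agent HJBI problem. First I would fix the candidate value functions as affine in $(y,z)$, namely $v^{u_0}(t,y,z)=v_{u_0,1}(t)+yv_{u_0,2}(t)+v_{u_0,3}(t)z$ and $\varUpsilon^{u_0}(t,y,z)=\varUpsilon_{u_0,1}(t)+y\varUpsilon_{u_0,2}(t)+\varUpsilon_{u_0,3}(t)z$, substitute into $\mathcal{A}^{(\pi_U^*,a_U^*)}_{u_0}$, and carry out the inner infimum over $(\varphi_{u_0},\chi_{u_0},\phi_{u_0},\vartheta_{u_0})$ followed by the outer supremum over $(\pi_{u_0},a_{u_0})$. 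Because the running penalty is quadratic and the generator is affine in the controls (up to the quadratic reinsurance penalty $\hat\eta(1-a_{u_0})^2$ and the quadratic-in-$\pi_{u_0}$ diffusion term), the inner minimization is an explicit quadratic optimization yielding $(\varphi_{u_0}^\circ,\chi_{u_0}^\circ,\phi_{u_0}^\circ,\vartheta_{u_0}^\circ)$ linear in $v_{u_0,\cdot},\varUpsilon_{u_0,\cdot}$, and the outer maximization gives $\pi_{u_0}^\circ$ and the unconstrained $\hat a_{u_0}^\circ$, which is then projected onto $[0,1]$.

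Next I would collect the coefficients of $y$ and $z$ in the resulting PDE. Matching the $y$-coefficient gives $\dot v_{u_0,2}=-rv_{u_0,2}$, $v_{u_0,2}(T)=1$, hence $v_{u_0,2}(t)=\varUpsilon_{u_0,2}(t)=e^{r(T-t)}$ (the $\varUpsilon$ equation being linear by \eqref{g-MV2}, which is just the generator applied to $\varUpsilon$ under the already-fixed optimizers). Matching the $z$-coefficient produces a coupled system of Riccati-type ODEs for $(v_{u_0,3},\varUpsilon_{u_0,3})$; writing $\diag(v_{u_0,3}(t),\varUpsilon_{u_0,3}(t))=F_{u_0}(T-t)$ and using the substitutions for $S_{u_0},R_{u_0},Q_{u_0},P_{u_0}$ in \eqref{SRQP}, one checks that this system is exactly \eqref{mat-ricc2} with the matrices $K_{u_0,1},K_{u_0,2},K_{u_0,1,2},\bar B_{u_0,1},\bar B_{u_0,2},G_{u_0}$ as defined. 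Proposition \ref{ricc2} then supplies, under one of the three compatible conditions, a solution $F_{u_0}$ with $\|F_{u_0}(t)\|_{\sup}\le U_{u_0}(t)$; condition \eqref{condition} guarantees the bound $U_{u_0}(T)$ is small enough that all the density generators lie in $\mathscr{A}$ (i.e.\ the $\mathcal{C}^2<\kappa^2/(2\nu^2)$-type requirement for the Girsanov change of measure and the finiteness of exponential moments of $\int_0^T Z(s)\,ds$) and that $\pi_{u_0}^\circ\in\mathscr{U}_{u_0}$. Finally, $v_{u_0,1}$ and $\varUpsilon_{u_0,1}$ solve linear first-order ODEs with zero terminal condition obtained by matching the constant term, giving the quoted formulas \eqref{v1} and \eqref{varUpsilon}; a verification argument identical in structure to Theorem \ref{Verification} (using the admissibility/integrability furnished by \eqref{condition}) then confirms these candidates are genuine.

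The step I expect to be the main obstacle is handling the conditional-expectation terms $\mathbb{E}^{\mathbb{Q}^{u_0}}[\pi_U(t)\mid\mathcal{F}_T^{\tilde W,W,B}]$, $\mathbb{E}^{\mathbb{Q}^{u_0}}[a_U(t)\mu_1\mid\mathcal{F}_T^{\tilde W,W,B}]$, and $\mathbb{E}^{\mathbb{Q}^{u_0}}[(1-a_U(t))^2(\lambda+\hat\lambda)\mu_2\mid\mathcal{F}_T^{\tilde W,W,B}]$ appearing in $\mathcal{A}^{(\pi_U^*,a_U^*)}_{u_0}$: one must argue that at the mean-field equilibrium the strategies $(\pi_U^*,a_U^*)$ are, conditionally on the common noise $(\tilde W,W,B)$, deterministic functions of $Z$ (indeed $a_U^*$ is purely deterministic and $\pi_U^*$ is $Z$-measurable through $Z/(aZ+b)$), so these conditional expectations reduce to the deterministic averages $\overline{\Omega}(t)$ and $M_1\,e^{r(T-t)}\cdot(\cdots)$, which decouples the single-agent HJBI problem from the law of $U$. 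This is exactly the place where the $n\to\infty$ heuristic in \eqref{m-pi}--\eqref{m-a} gets promoted to a rigorous fixed-point statement, and where one must verify the fixed point for $\overline{\Omega}$ (via the monotone-iteration argument analogous to Remark \ref{Randomness}) is well-defined; the remaining computations are the same routine coefficient-matching and completion-of-squares already carried out in Appendix \ref{proof-solution-HJBI}, so they can be cited rather than repeated.
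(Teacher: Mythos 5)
Your proposal is correct and matches the paper's approach: the paper explicitly omits this proof as ``similar to the previous section,'' i.e.\ it is obtained by repeating the coefficient-matching and completion-of-squares derivation of Appendix \ref{proof-solution-HJBI} with the mean-field generator, which is exactly what you do. You also correctly isolate the one genuinely new ingredient --- that the conditional expectations $\mathbb{E}^{\mathbb{Q}^{u_0}}[\,\cdot\mid\mathcal{F}_T^{\tilde W,W,B}]$ collapse to the deterministic averages $\overline{\Omega}(t)$ and $M_1$ because $a_U^*$ is deterministic and $\pi_U^*$ is $Z$-measurable --- which is precisely what lets the single-agent HJBI system reduce to \eqref{mat-ricc2} and Proposition \ref{ricc2}.
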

		
Theorem \ref{solution-MHJBI} provides solutions to the extended HJBI equations, and the subsequent theorems confirm that the candidate response strategies, the corresponding worst-case scenario density generators, and the candidate value functions indeed constitute the robust equilibrium strategies, the associated worst-case scenario density generators, and value functions, respectively.
		
		\begin{theorem}[Verification theorem]\label{mVerification}
			$ \forall u_0\in\mathbb{U} $, the candidate value function   $v^{u_0},  $ is the value function tuple, that is, the equation  $ V^{u_0}(t, y,z)=v^{u_0}(t, y,z) $ holds.
		\end{theorem}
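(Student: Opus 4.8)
\textbf{Proof proposal for Theorem \ref{mVerification}.}

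The plan is to mirror the verification argument for the $n$-insurer game (Theorem \ref{Verification}, Appendix \ref{proof-Verification}), with the essential new feature being that the state dynamics of $Y^{(\pi^*_U,a^*_U)}_{u_0}$ involve conditional expectations given $\mathcal{F}_T^{\tilde W,W,B}$ of the (frozen) mean-field strategy $(\pi^*_U,a^*_U)$. Since the mean-field flow is frozen, these conditional expectations are simply deterministic coefficients in $t$ (once $(\pi^*_U,a^*_U)$ from \eqref{m-opt} is substituted), so the generator $\mathcal{A}^{(\pi^*_U,a^*_U)}_{u_0}$ has the same affine-in-$(y,z)$ structure as in the $n$-player case. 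First I would verify that the candidate tuple $(v^{u_0},\varUpsilon^{u_0})$ from Theorem \ref{solution-MHJBI}, together with the candidate response strategy $(\pi^\circ_{u_0},a^\circ_{u_0})$ and worst-case generator $(\varphi^\circ_{u_0},\chi^\circ_{u_0},\phi^\circ_{u_0},\vartheta^\circ_{u_0})$, solves the extended HJBI system \eqref{HJBI-MV2}--\eqref{g-MV2} pointwise; this is the content of Theorem \ref{solution-MHJBI} and reduces to the solvability of the Riccati equation \eqref{mat-ricc2} guaranteed by Proposition \ref{ricc2} under the compatible conditions, so I may quote it.

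The core of the proof is then the standard ``perturb-and-apply-It\^o'' estimate from \cite{bjork2017time}. For an arbitrary admissible $(\pi,a)\in\mathscr{U}_{u_0}$ and $h>0$, form the $h$-perturbed strategy $(\pi^h,a^h)$ as in Definition \ref{def:mfrs}, and expand the difference quotient
\[
\frac{\bar J_{u_0}\!\left((\pi^\circ_{u_0},a^\circ_{u_0}),(\pi^*_U,a^*_U),t,y,z\right)-\bar J_{u_0}\!\left((\pi^h,a^h),(\pi^*_U,a^*_U),t,y,z\right)}{h}.
\]
Because $v^{u_0}$ is affine in $y$ with $v^{u_0}(T,y,z)=y$ and $\varUpsilon^{u_0}(T,y,z)=y$, the mean and the auxiliary ``$g$-function'' in $\mathcal L^{(\pi^*_U,a^*_U)}_{u_0}$ can be identified with $\mathbb{E}^{\mathbb{Q}}_{t,y,z}[Y_{u_0}(T)]$ and the variance term respectively, exactly as in the $n$-player case; applying It\^o's formula to $v^{u_0}$ and to $(\varUpsilon^{u_0})$, $(\varUpsilon^{u_0})^2$ along the perturbed dynamics, taking $\mathbb{Q}^{u_0}$-expectations (the local-martingale parts are true martingales by the admissibility constraints $\pi=\ell Z/(aZ+b)$, $0\le a\le1$, $|\hslash|,|\hbar|\le\mathcal C$ with $\mathcal C^2<\kappa^2/(2\nu^2)$, plus \eqref{condition}), and dividing by $h$, the limit as $h\to0^+$ equals
\[
\sup_{(\pi,a)}\inf_{(\varphi,\chi,\phi,\vartheta)}\mathcal L^{(\pi^*_U,a^*_U)}_{u_0}\!\left((\pi,a),(\varphi,\chi,\phi,\vartheta),v^{u_0},\varUpsilon^{u_0},\Psi_0,(t,y,z)\right)-\big(\text{value attained at }(\pi^\circ_{u_0},a^\circ_{u_0})\big),
\]
which is $\ge0$ by \eqref{HJBI-MV2}. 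This gives the weak-equilibrium inequality \eqref{mv1}, hence $(\pi^\circ_{u_0},a^\circ_{u_0})=(\pi^*_{u_0},a^*_{u_0})$ is the best response and $V^{u_0}=v^{u_0}$; consistency of the worst-case generators follows from the inner infimum being attained uniquely at $(\varphi^\circ_{u_0},\chi^\circ_{u_0},\phi^\circ_{u_0},\vartheta^\circ_{u_0})$.

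The main obstacle I anticipate is the \emph{integrability bookkeeping}: one must check that $v^{u_0}(t,Y_{u_0}(t),Z(t))$, $\varUpsilon^{u_0}(t,Y_{u_0}(t),Z(t))$ and its square are in the appropriate Doob class so that the stochastic integrals are genuine martingales and the interchange of limit, expectation and integral is legitimate. Here $v^{u_0}$ and $\varUpsilon^{u_0}$ are affine in $z$ with coefficients $v_{u_0,3},\varUpsilon_{u_0,3}$ bounded by $U_{u_0}(T)$, and the worst-case change of measure $\mathbb{Q}^{u_0}$ introduces a drift in $Z$ of order $\nu\rho U_{u_0}(T)\Psi_0\sqrt Z$ and $\nu\sqrt{1-\rho^2}U_{u_0}(T)\Psi_0\sqrt Z$; condition \eqref{condition} is precisely what keeps the new mean-reversion speed of $Z$ positive and large enough (via Proposition 5.1 in \cite{Kraft}) to guarantee $\mathbb{E}[\exp(c\int_0^T Z(s)\,\mathrm ds)]<\infty$ for the relevant constant $c$, which in turn controls $\mathbb{E}[\sup_t e^{2|v_{u_0,3}|Z(t)}]$-type quantities and the exponential moments of $Z$ needed for the reinsurance terms. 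The secondary subtlety is that the conditional-expectation coefficients $\mathbb{E}^{\mathbb{Q}^{u_0}}[\pi^*_U(t)\,|\,\mathcal F_T^{\tilde W,W,B}]$ etc. must be shown to be well-defined bounded deterministic-in-$\omega$ (measurable w.r.t.\ $\mathcal F_T^{\tilde W,W,B}$) processes after substituting \eqref{m-opt}; since $\pi^*_U=\ell^*_U Z/(aZ+b)$ with $\ell^*_U$ deterministic and $a^*_U$ deterministic, and $U$ is $\hat{\mathcal F}_0$-measurable and independent of the Brownian motions, this conditional expectation collapses to $\frac{Z(t)}{aZ(t)+b}\mathbb{E}[\ell^*_U(t)]$, which is handled exactly as in the $n\to\infty$ heuristic of \eqref{m-pi}--\eqref{m-a}; I would record this reduction explicitly before running the It\^o computation.
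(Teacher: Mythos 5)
Your proposal follows essentially the same route as the paper, which omits this proof as being analogous to the $n$-insurer case: the substance is the Appendix~\ref{proof-Verification} argument (boundedness of $f_y,f_z$ for the affine candidates, control of $\mathbb{E}[\int Z(t)\,\rd t]$ by comparison with a CIR process whose mean-reversion rate $\kappa-\nu(\rho+\sqrt{1-\rho^2})\mathcal{C}$ stays positive under $\mathcal{C}^2<\kappa^2/(2\nu^2)$), combined with the reduction of the frozen mean-field conditional expectations to deterministic coefficients, which you correctly identify as the only genuinely new step. The only quibble is that your appeal to $\mathbb{E}[\sup_t e^{2|v_{u_0,3}|Z(t)}]$-type exponential moments is unnecessary — the candidate value functions are affine in $z$, so first moments of $\int Z\,\rd t$ suffice, with the Kraft-type exponential condition needed only for Novikov admissibility of the worst-case generator.
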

		Utilizing Theorems \ref{solution-MHJBI}--\ref{mVerification}, we derive the robust time-consistent mean-field equilibrium investment and reinsurance strategies as follows.
		\begin{theorem}	
			If $1\neq \mathbb{E}[\theta]$, the robust time-consistent  mean-field equilibrium strategy $ (\pi^*_{U},a^*_{U}) $,  $ U\sim\mathcal{D} $ exists and is given by \eqref{m-opt}. 
			Furthermore,  the associated worst-case scenario density generator  is $ \left(\varphi_{U}^*, \chi_{U}^*,\phi_{U}^*,\vartheta_{U}^*\right)$,  $ U\sim\mathcal{D} $, when $ U=u_0 $, the explicit forms  are given by
			\begin{equation*}
				\left\{\begin{aligned}
					&{\varphi^*_{u_0}}(t)=-(\nu \rho v_{u_0,3}+S_{u_0}(t) v_{u_0,2})\Psi_0\sqrt{Z(t)},\\
					&{\chi^*_{u_0}}(t)=-\nu\sqrt{1-\rho^2} v_{u_0,3}\Psi_0\sqrt{Z(t)},\\
					&\phi^*_{u_0}(t)=-\sqrt{\hat{\lambda}}\left[\mu_{01}a_{u_0}^*(t)-{\theta_{0}}\mathbb{E}[\mu_{1}a_{U}^*(t)]\right]v_{u_0,2}\Psi_0,\\
					&\vartheta_{u_0}^{u_0,*}(t)=-a_{u_0}^*(t)\sqrt{(\hat{\lambda}+\lambda_{0})\mu_{0 2}-\hat{\lambda}\mu_{0 1}^2}v_{u_0,2}\Psi_0,\\
					&\vartheta_{u_0}^{u_1,*}(t)=0,u_1\in\mathbb{U},u_1\neq u_0,
				\end{aligned}\right.
			\end{equation*}
			where $ S_{u_0}(t) $, $ R_{u_0}(t) $,  $ Q_{u_0}(t) $, $ P_{u_0}(t) $,  $ v_{u_0,j}(t) $ and  $ \varUpsilon_{u_0,j}(t), 1\leqslant j\leqslant 3 $, are given in \eqref{SRQP} and Theorem  \ref{solution-MHJBI}.
			
			If $1= \mathbb{E}[\theta]$, the robust  mean-field equilibrium does not exist.
		\end{theorem}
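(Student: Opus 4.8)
The plan is to transport the argument behind Theorem~\ref{thm1} from the finite population to the continuum of types, with the linear system in the $n$ players replaced by a scalar fixed-point problem in the population aggregates. By Theorems~\ref{solution-MHJBI}--\ref{mVerification}, once the population strategy $(\pi_U,a_U)$ is fixed the best response of the AAI with type $u_0$ is the candidate response strategy produced by the extended HJBI system \eqref{HJBI-opt-MV2}, and its value function is the affine tuple of Theorem~\ref{solution-MHJBI}; by Definition~\ref{def:mfe}, a robust mean-field equilibrium is exactly a fixed point of $u_0\mapsto(\pi^\circ_{u_0},a^\circ_{u_0})$ under the identification $U=u_0$. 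The first step is to check that this best-response map depends on the population only through the two time functions $\bar\ell(t):=\mathbb{E}[\ell_U(t)]$, where $\pi_U(t)=\ell_U(t)\tfrac{Z(t)}{aZ(t)+b}$, and $\bar\Omega(t):=\mathbb{E}[\mu_1 a_U(t)]$: the type-specific Riccati data of Proposition~\ref{ricc2}, and hence $v_{u_0,2},v_{u_0,3},\varUpsilon_{u_0,2},\varUpsilon_{u_0,3}$, do not involve the population, while the only population terms entering the first-order conditions for $(\pi_{u_0},a_{u_0})$ are $\mathbb{E}^{\mathbb{Q}^{u_0}}[\pi_U(t)\mid\mathcal{F}_T^{\tilde W,W,B}]$ and $\mathbb{E}^{\mathbb{Q}^{u_0}}[\mu_1 a_U(t)\mid\mathcal{F}_T^{\tilde W,W,B}]$, which equal $\bar\ell(t)\tfrac{Z(t)}{aZ(t)+b}$ and $\bar\Omega(t)$ since $a_U$ is deterministic given the type, $\pi_U$ is $Z$-measurable given the type, $Z$ is $\mathcal{F}_T^{\tilde W,W,B}$-measurable, and the change of measure to $\mathbb{Q}^{u_0}$ leaves the law of $U\sim\mathcal{D}$ unchanged.

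Second, reading off the HJBI first-order conditions exactly as in the proof of Theorem~\ref{solution-HJBI} (now in the $n\to\infty$ limit), the best response of $u_0$ is $\ell^\circ_{u_0}(t)=\theta_0\bar\ell(t)+S_{u_0}(t)$ and $a^\circ_{u_0}(t)=\big(\tfrac{Q_{u_0}(t)}{R_{u_0}(t)}\bar\Omega(t)+\tfrac{P_{u_0}(t)}{R_{u_0}(t)}\big)\wedge1$, with $S,Q,R,P$ as in \eqref{SRQP}, and the worst-case generators follow by substituting these strategies into the minimizer in \eqref{HJBI-opt-MV2}, giving the displayed formulas. Imposing the consistency conditions $\ell^*_{u_0}=\ell^\circ_{u_0}$ and $a^*_{u_0}=a^\circ_{u_0}$ and then taking expectations over the type (weighted by $1$ and by $\mu_1$, respectively) yields the closed scalar equations $(1-\mathbb{E}[\theta])\bar\ell(t)=\mathbb{E}[S_U(t)]$ and $\bar\Omega(t)=\mathbb{E}\big[\mu_1\big(\tfrac{Q_U(t)}{R_U(t)}\bar\Omega(t)+\tfrac{P_U(t)}{R_U(t)}\big)\wedge1\big]$. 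When $\mathbb{E}[\theta]\neq1$ the first has the unique solution $\bar\ell(t)=\mathbb{E}[S_U(t)]/(1-\mathbb{E}[\theta])$, which is the investment part of \eqref{m-opt}. For the second, I would show that for each $t$ the map $x\mapsto\mathbb{E}\big[\mu_1\big(\tfrac{Q_U(t)}{R_U(t)}x+\tfrac{P_U(t)}{R_U(t)}\big)\wedge1\big]$ is a contraction: composing with the $1$-Lipschitz truncation it suffices that $\mathbb{E}\big[\tfrac{\mu_1 Q_U(t)}{R_U(t)}\big]<1$, and from \eqref{SRQP} one has $\tfrac{\mu_1 Q_U(t)}{R_U(t)}=\tfrac{\hat\lambda}{\lambda+\hat\lambda}\cdot\theta\cdot\tfrac{\mu_1^2}{\mu_2}\cdot\tfrac{(\Psi+\delta)e^{2r(T-t)}}{(\Psi+\delta)e^{2r(T-t)}+2\hat\eta e^{r(T-t)}}<1$ almost surely (using $\lambda>0$, $\theta\leqslant1$, $\mu_1^2\leqslant\mu_2$, $\hat\eta>0$), hence $<1$ in expectation; Banach's theorem yields a unique $\bar\Omega(t)$, the reinsurance part of \eqref{m-opt}. (Alternatively the monotone iteration of Remark~\ref{Randomness} transfers verbatim.) Feeding $(\bar\ell,\bar\Omega)$ back into $(\ell^\circ_{u_0},a^\circ_{u_0})$ and using that the best-response map factors through $(\bar\ell,\bar\Omega)$ shows the reconstructed profile is a genuine fixed point, i.e.\ a mean-field equilibrium; admissibility in $\mathscr{U}$ holds because $a^*_U\in[0,1]$ by construction and $\sup_U|\ell^*_U(t)|<\infty$ thanks to $\|F_u(T)\|_{\sup}\leqslant U_u(T)$ from Proposition~\ref{ricc2} and the uniformity in \eqref{condition}.

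Third, for the non-existence claim when $\mathbb{E}[\theta]=1$: any mean-field equilibrium would force its average investment coefficient to satisfy $(1-\mathbb{E}[\theta])\bar\ell(t)=\mathbb{E}[S_U(t)]$, i.e.\ $\mathbb{E}[S_U(t)]=0$ for every $t\in[0,T]$; but at $t=T$ we have $v_{U,3}(T)=\varUpsilon_{U,3}(T)=0$ by \eqref{F_u} together with the initial condition $F_u(0)=\diag(0,0)$ in \eqref{mat-ricc2}, so $\mathbb{E}[S_U(T)]=\mathbb{E}[m/(\Psi+\delta)]\neq0$ because $m>0$ and $\Psi,\delta>0$ --- a contradiction, so no mean-field equilibrium exists.

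The hard part will be the reinsurance fixed point: the investment coefficient solves a genuinely linear scalar equation and is therefore immediate, but the retention equation is nonlinear because of both the $\wedge1$ truncation and the averaging over a continuum of types, so the contraction (or monotone-iteration) argument must be carried through carefully, with the almost-sure bound $\tfrac{\mu_1 Q_U}{R_U}<1$ strong enough to survive the expectation. A secondary technical point is making the first step rigorous --- establishing a priori that along any admissible population profile $a_U$ is deterministic and $\pi_U$ is $Z$-measurable, so that the conditional expectations $\mathbb{E}^{\mathbb{Q}^{u_0}}[\,\cdot\mid\mathcal{F}_T^{\tilde W,W,B}]$ really collapse to type-averages, which is needed before the fixed-point reduction even makes sense.
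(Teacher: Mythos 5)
Your proposal is correct and takes essentially the same route as the paper, which in fact omits this proof entirely (deferring to the analogy with Section 4 and Appendices A--B): best response via the extended HJBI equations of Theorem \ref{solution-MHJBI}, verification as in Theorem \ref{mVerification}, and a fixed point in the two population aggregates $\bar\ell$ and $\bar\Omega$ yielding \eqref{m-opt}. The details you supply beyond what the paper records --- the contraction bound $\mathbb{E}[\mu_1 Q_U/R_U]<1$ for the retention fixed point (the paper instead uses the monotone iteration of Remark \ref{Randomness} in the $n$-insurer case) and the evaluation of $S_U$ at $t=T$ for the non-existence claim when $\mathbb{E}[\theta]=1$ --- are consistent with the paper's setup and correctly fill in what it leaves implicit.
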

	The above paragraph demonstrates the congruence of results between the mean-field and $n$-insurer scenarios. As a result, a closed-form solution for the robust equilibrium investment strategy can be derived, while the robust equilibrium reinsurance strategy can be computed numerically.
	
	\section{\bf Numerical analysis}

In this section, we examine the impact of various parameters on equilibrium strategies, specifically the portfolio processes and reinsurance proportions for the AAIs. To illustrate portfolio behavior, we utilize the deterministic coefficients ${\pi^*(t)}/ \frac{{Z(t)}}{a{Z(t)}+b} $, as detailed in Remark \ref{Randomness}. Similarly, $ a^*(t) $ is used to depict the reinsurance behavior of the AAIs. We set $ T=5 $, $ r=0.02 $. Unless specified otherwise, we employ the real-data estimates provided in \cite{42model}, derived from S\&P 500 and VIX data spanning January 2010 to December 2019. Specifically, $ \kappa=7.3479 $, $ \nu=0.6612 $, $ m=2.9428 $, $ \rho=-0.7689 $, $ a=0.9051 $ and $ b=0.0023 $. To illustrate the impact of competition, we consider two insurers with different payment patterns for AAI 1 and AAI 2, aiming to investigate their effects on equilibrium reinsurance strategies. Specifically, we set the payment of AAI 1 to be low frequency but high payment, while the payment of AAI 2 is high frequency but low payment. For this scenario, we choose the following parameter values: $ \lambda_1 = 0.9,$ $ \lambda_2 = 2.4,$ $\hat{\lambda} = 0.6,$ $ \eta_1 = 0.2,$ $ \eta_2 = 0.2,$ and $ \hat{\eta} = 0.25. $ Additionally, we set $ \mu_{11} = 1,$ $ \mu_{12} = 2,$ $ \mu_{21} = 1/2,$ $ \mu_{22} = 1/2.$ Furthermore, the default values of the other parameters are as follows: $ \delta_1 = 2,$ $ \delta_2 = 3,$ $ \theta_1 = 0.7,$ $ \theta_2 = 0.7,$ $ \Psi_{1} = 5,$ $ \Psi_{2} = 7. $

\begin{figure}[H]
	\centering
	\begin{minipage}[t]{0.45\linewidth}
		\centering
		\includegraphics[width=0.9\linewidth]{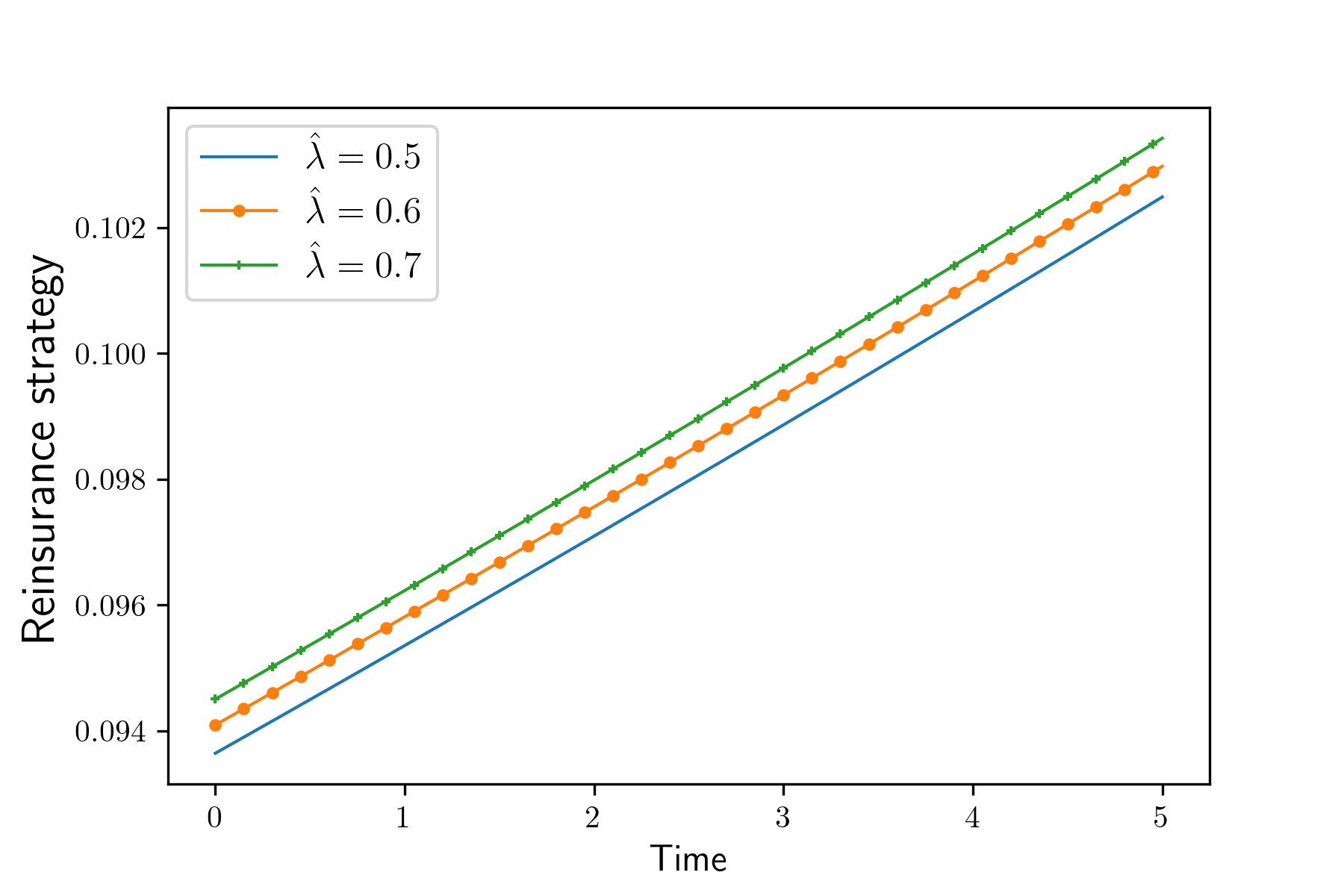}
		\caption{Effects of $ \hat{\lambda} $ on $ a^*_1(t) $}
		\label{lambda_hat_Reinsurance}
	\end{minipage}
	\begin{minipage}[t]{0.45\linewidth} 
		\centering
		\includegraphics[width=0.9\linewidth]{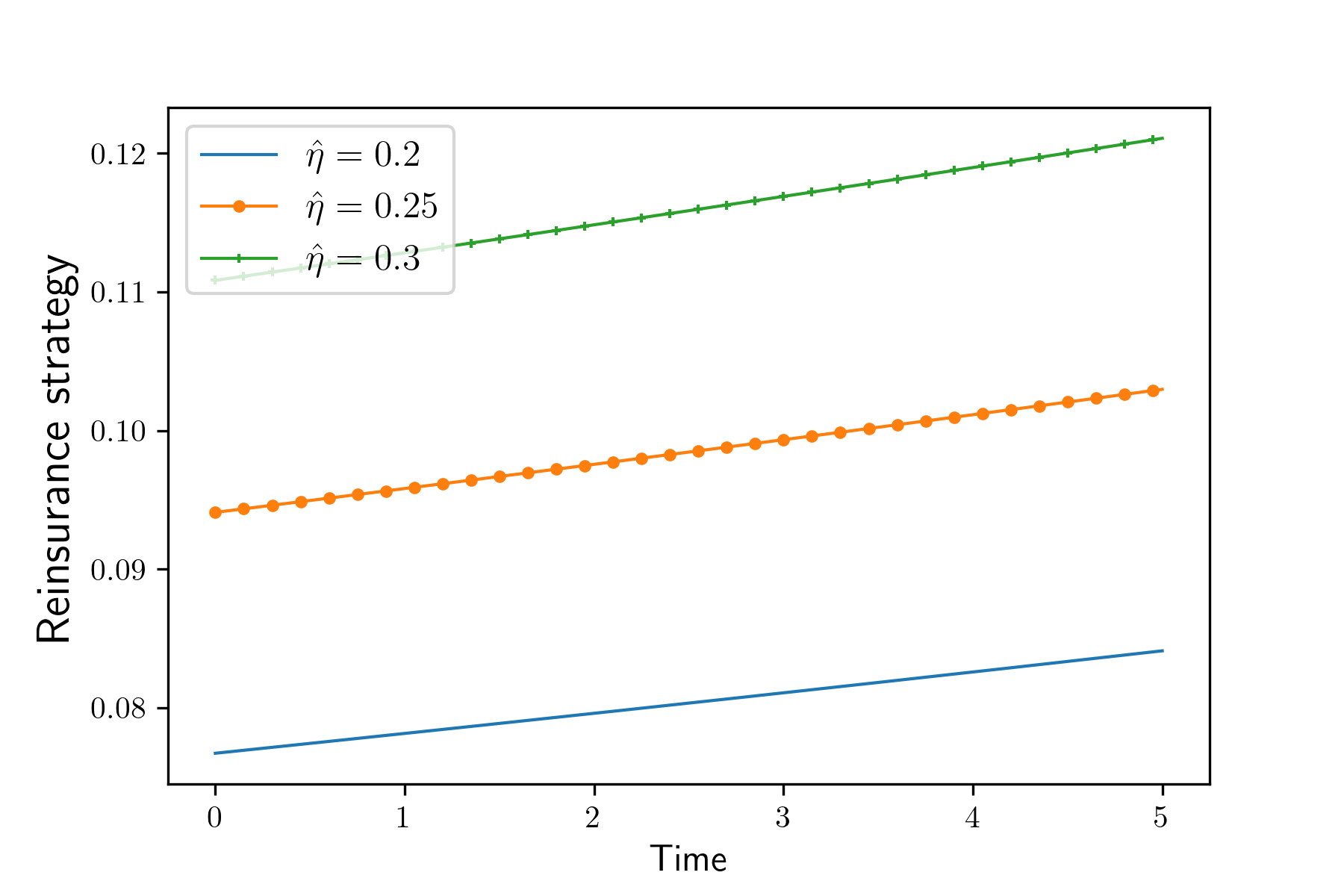}
		\caption{Effects of $ \hat{\eta} $ on $ a^*_1(t) $}
		\label{eta_hat_Reinsurance}
	\end{minipage}
\end{figure}

Our analysis primarily aims to understand how competition, risk aversion, and ambiguity aversion impact the equilibrium reinsurance strategies of AAIs. Figs.~\ref{lambda_hat_Reinsurance}-\ref{eta_hat_Reinsurance} showcase the influence of two crucial parameters, $\hat{\lambda}$ and $\hat{\eta}$, on reinsurance strategies. The parameter $\hat{\lambda}$ signifies the intensity of the common insurance business, reflecting AAIs' profitability potential by engaging in more insurance business. With premiums calculated based on the expected value principle, a higher $\hat{\lambda}$ indicates greater profit potential from writing more insurance policies. Consequently, the reinsurance strategy increases with $\hat{\lambda}$ (Fig.~\ref{lambda_hat_Reinsurance}). On the other hand, $\hat{\eta}$ characterizes the reinsurance premium, representing the cost of transferring insurance risk to a reinsurer. Higher premiums for sharing insurance risk may lead insurers to opt for retaining more risk themselves. Fig.~\ref{eta_hat_Reinsurance} illustrates the positive relationship between $\hat{\eta}$ and the reinsurance strategy. It's worth noting that the independence between the financial market and the insurance market implies that parameters $\hat{\lambda}$ and $\hat{\eta}$ in the insurance market do not impact the robust equilibrium investment strategy, as observed in \eqref{n-opt}. This suggests that the AAIs' investment decisions remain uninfluenced by the intensity of the common insurance business or the reinsurance premium. Instead, their investment strategies depend on other factors, such as risk preferences and financial market dynamics.


\begin{figure}[H]
	\centering
	\begin{minipage}[t]{0.45\linewidth} 
		\centering
		\includegraphics[width=0.9\linewidth]{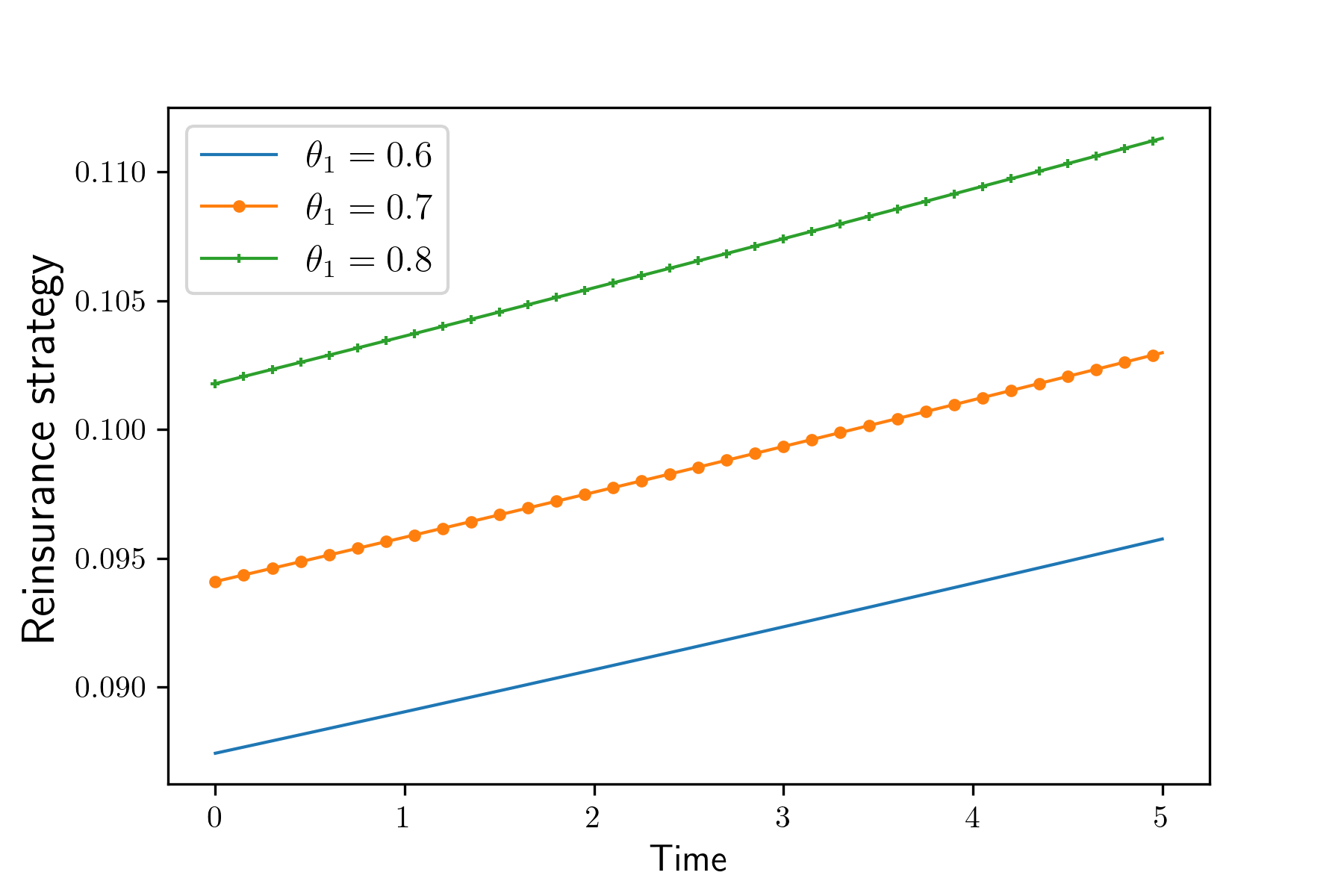}
		\caption{Effects of $ \theta_{1}$ on $ a^*_1(t) $}
		\label{theta1_Reinsurance}
	\end{minipage}
	\begin{minipage}[t]{0.45\linewidth}
		\centering
		\includegraphics[width=0.9\linewidth]{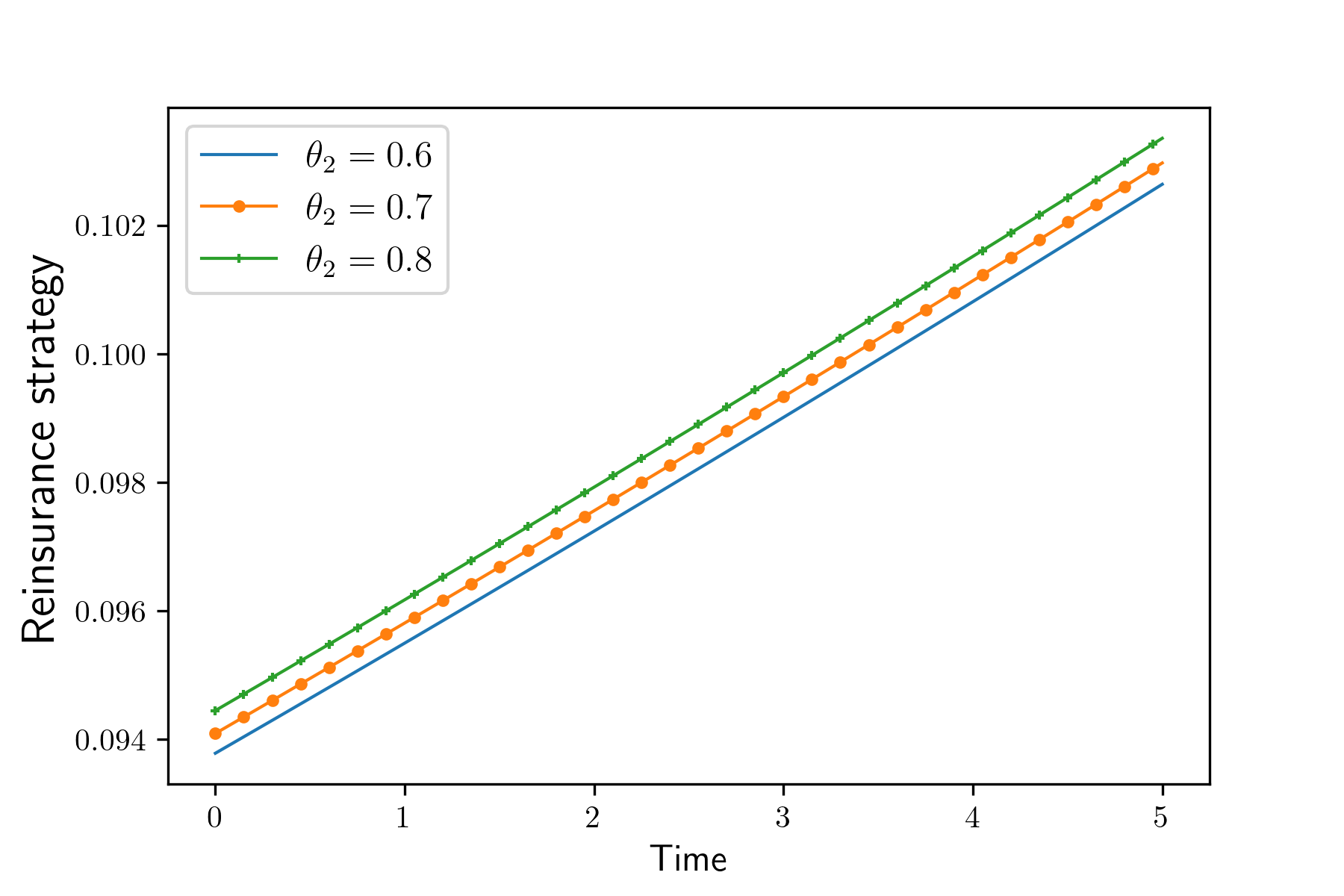}
		\caption{Effects of $ \theta_{2}$ on $ a^*_1(t) $}
		\label{theta2_Reinsurance}
	\end{minipage}
\end{figure}

\begin{figure}[H]
	\centering
	\begin{minipage}[t]{0.45\linewidth} 
		\centering
		\includegraphics[width=0.9\linewidth]{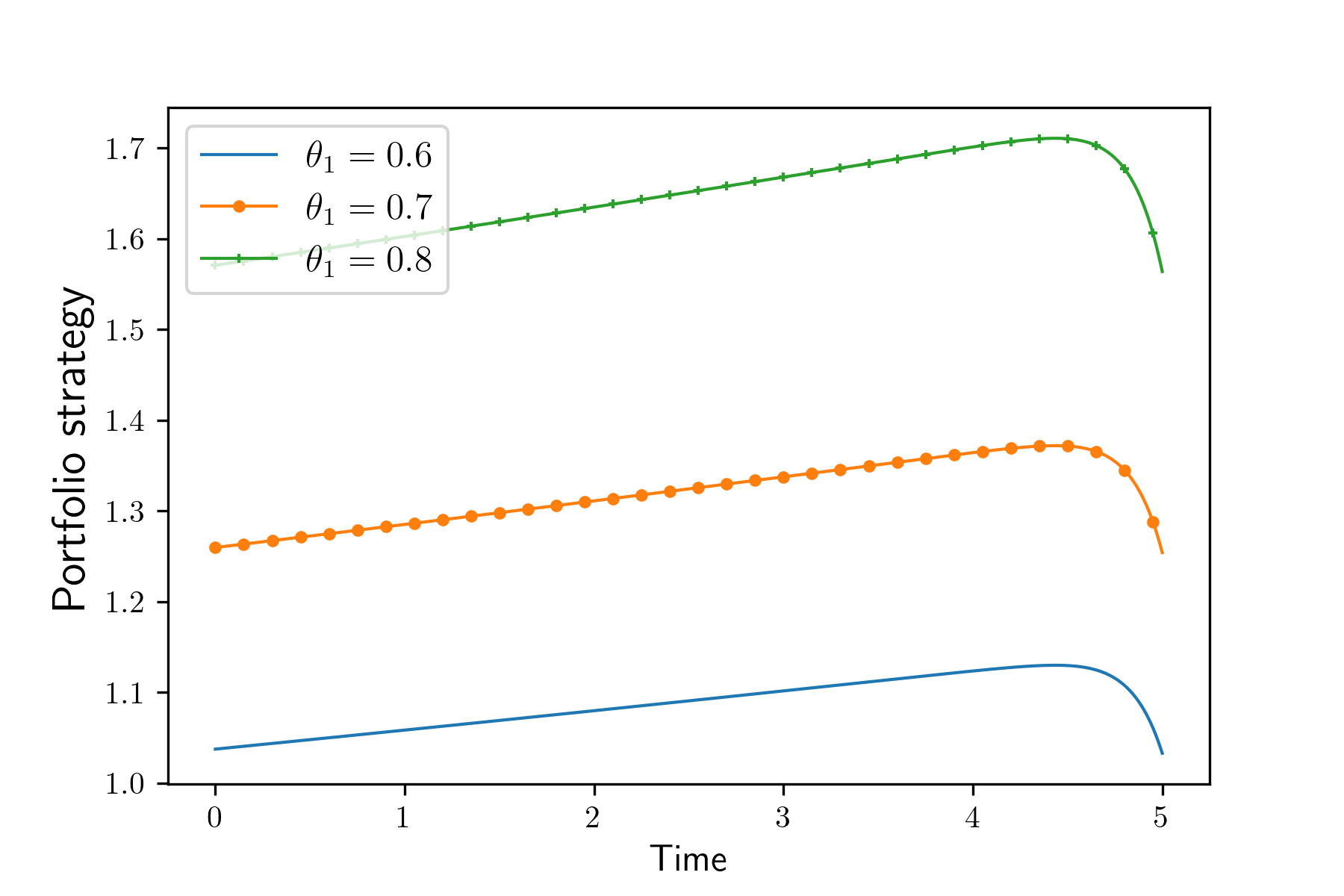}
		\caption{Effects of $ \theta_{1}$ on $ \pi^*_1(t) $}
		\label{theta1_Portfolio}
	\end{minipage}
	\begin{minipage}[t]{0.45\linewidth}
		\centering
		\includegraphics[width=0.9\linewidth]{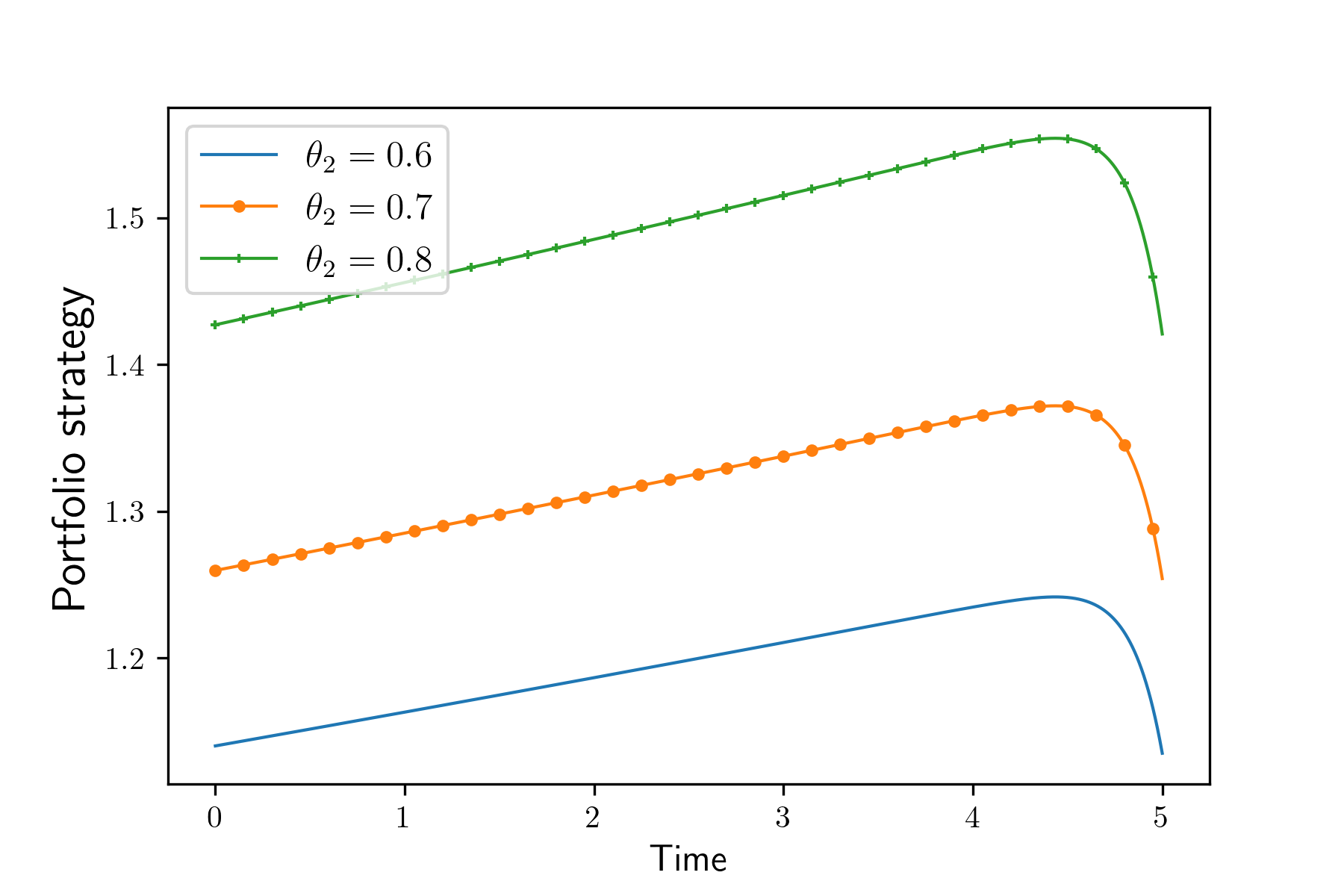}
		\caption{Effects of $ \theta_{2}$ on $ \pi^*_1(t) $}
		\label{theta2_Portfolio}
	\end{minipage}
\end{figure}

The parameters $\theta_1$ and $\theta_2$ reflect the levels of competition among the AAIs. Figs.~\ref{theta1_Reinsurance} and \ref{theta2_Reinsurance} show the impact of competition on AAI 1's reinsurance strategy. When AAI 1 prioritizes relative performance, she  tends to adopt a more risk-seeking approach to outperform others. Consequently, in Fig.~\ref{theta1_Reinsurance}, AAI 1's reinsurance strategy increases with $\theta_1$, indicating a greater acceptance of insurance risk when focusing more on relative performance. Fig.~\ref{theta2_Reinsurance} illustrates AAI 2's influence on AAI 1's reinsurance strategy. We observe that $a_1^*$ has a positive relationship with $\theta_2$. As $\theta_2$ rises, AAI 2 selects a larger $a_2^*$. In response, AAI 1 also adjusts to a larger $a_1^*$ in this competitive scenario. Similarly, Figs.~\ref{theta1_Portfolio} and \ref{theta2_Portfolio} depict the effects of competition on the robust equilibrium investment strategies. We note that $\pi^*_1$ increases with both $\theta_1$ and $\theta_2$. In a competitive setting, if one AAI prioritizes relative performance and adopts a risk-seeking strategy, others tend to allocate more to the risky asset. This reveals that competition in either the insurance market or the financial market drives insurers to adopt more aggressive investment strategies.
\begin{figure}[H]
	\centering
	\begin{minipage}[t]{0.45\linewidth} 
		\centering
		\includegraphics[width=0.9\linewidth]{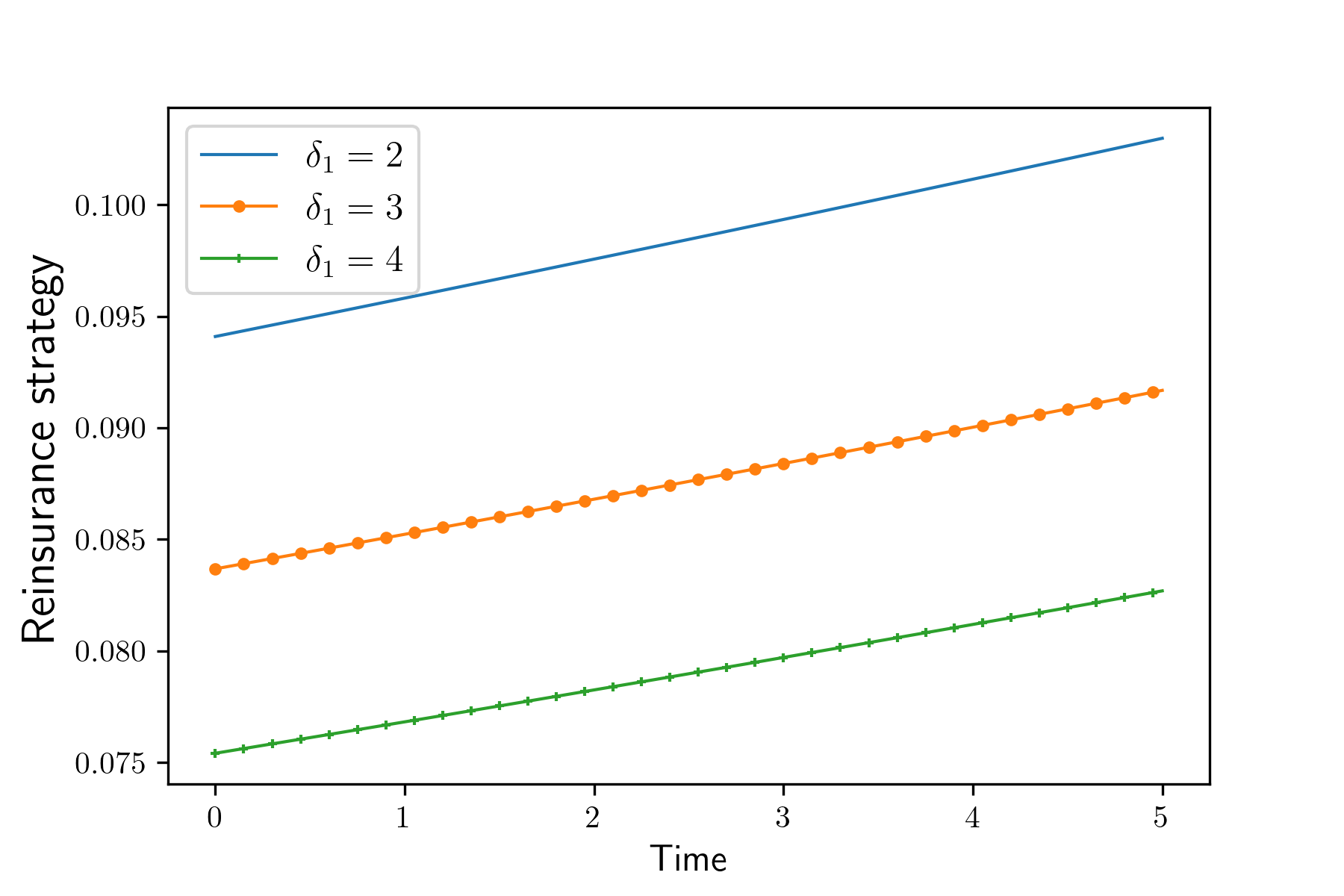}
		\caption{Effects of $ \delta_{1}$ on $ a^*_1(t) $}
		\label{delta1_Reinsurance}
	\end{minipage}
	\begin{minipage}[t]{0.45\linewidth}
		\centering
		\includegraphics[width=0.9\linewidth]{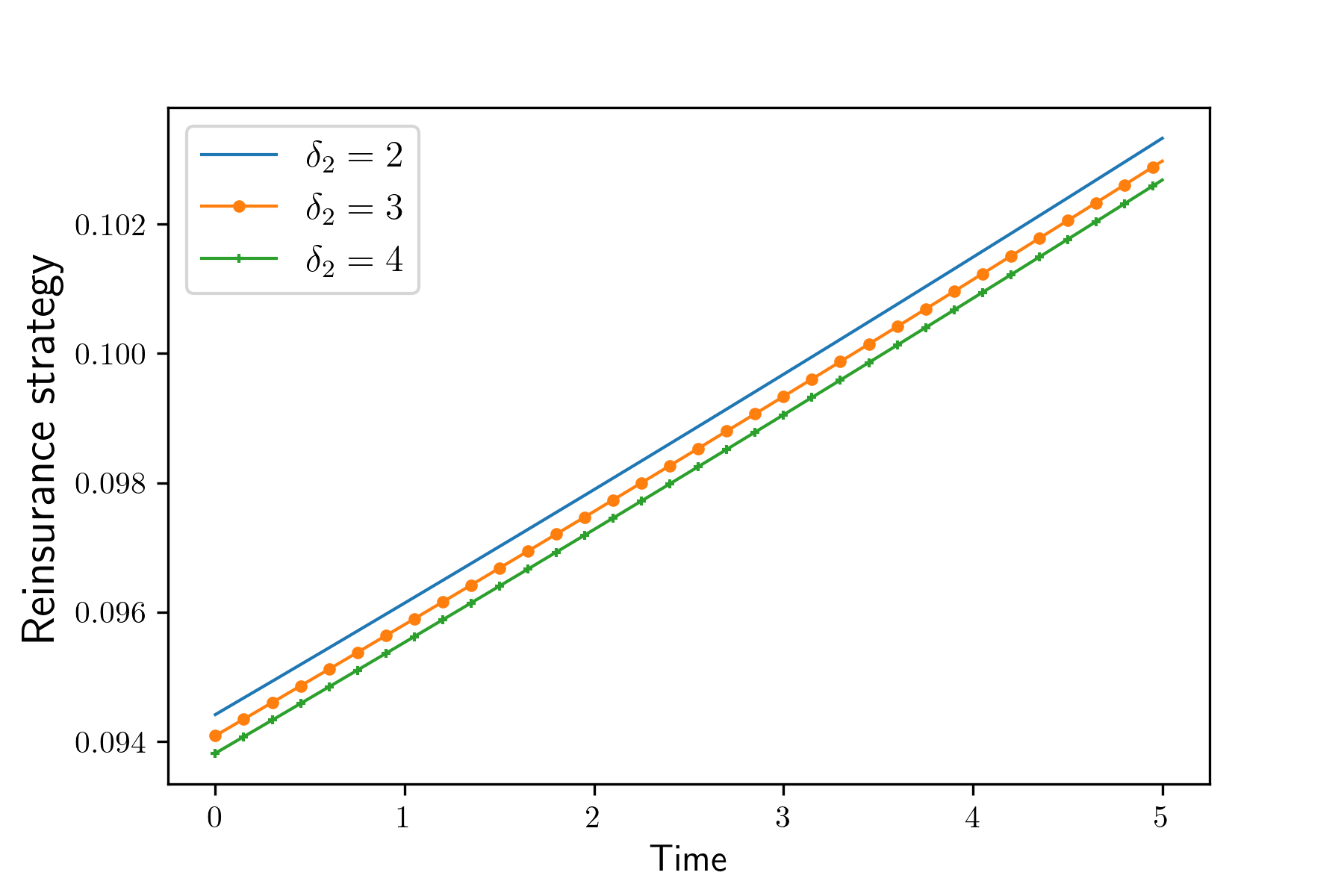}
		\caption{Effects of $ \delta_{2}$ on $ a^*_1(t) $}
		\label{delta2_Reinsurance}
	\end{minipage}
\end{figure}

The parameters $\delta_1$ and $\delta_2$ represent the risk aversion levels of AAIs. Fig.~\ref{delta1_Reinsurance} illustrates the effect of $\delta_1$ on AAI 1's reinsurance strategy in a robust equilibrium. As $\delta_1$ increases, AAI 1 becomes more risk-averse, resulting in a reduced willingness to take on insurance risk independently. Consequently, AAI 1 transfers more risk to the reinsurer, leading to a lower retention level $a_1^*$, as depicted in Fig.~\ref{delta1_Reinsurance}. The influence of AAI 2's risk aversion parameter, $\delta_2$, on AAI 1's reinsurance strategy is portrayed in Fig.~\ref{delta2_Reinsurance}. Higher values of $\delta_2$ indicate that AAI 2 is more willing to share insurance risk. AAI 1 responds accordingly in the competition, leading to a decrease in $a_1^*$ as AAI 2's risk aversion parameter increases (see Fig.~\ref{delta2_Reinsurance}).

\begin{figure}[H]
	\centering
	\begin{minipage}[t]{0.45\linewidth} 
		\centering
		\includegraphics[width=0.9\linewidth]{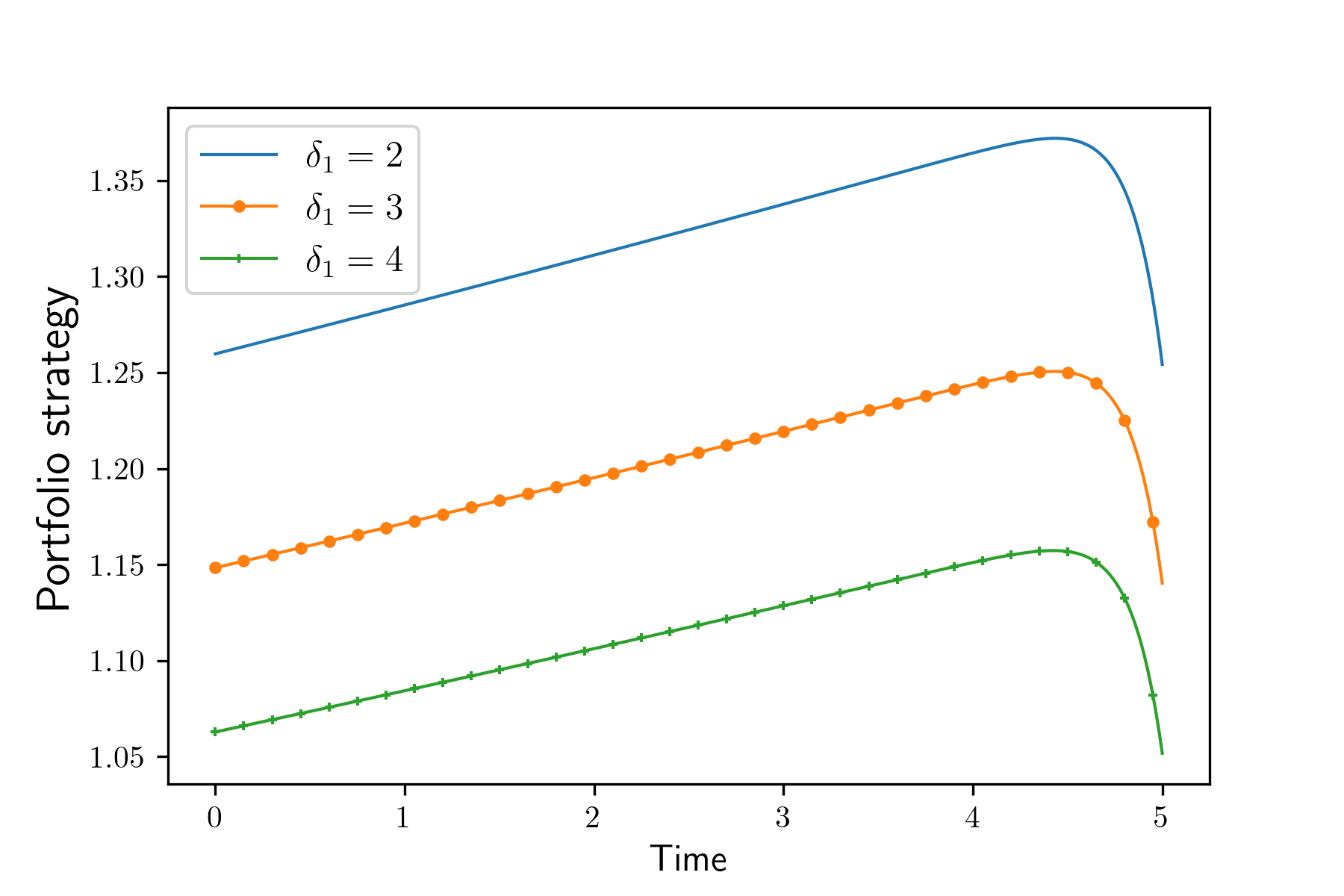}
		\caption{Effects of $ \delta_{1}$ on $ \pi^*_1(t) $}
		\label{delta1_Portfolio}
	\end{minipage}
	\begin{minipage}[t]{0.45\linewidth}
		\centering
		\includegraphics[width=0.9\linewidth]{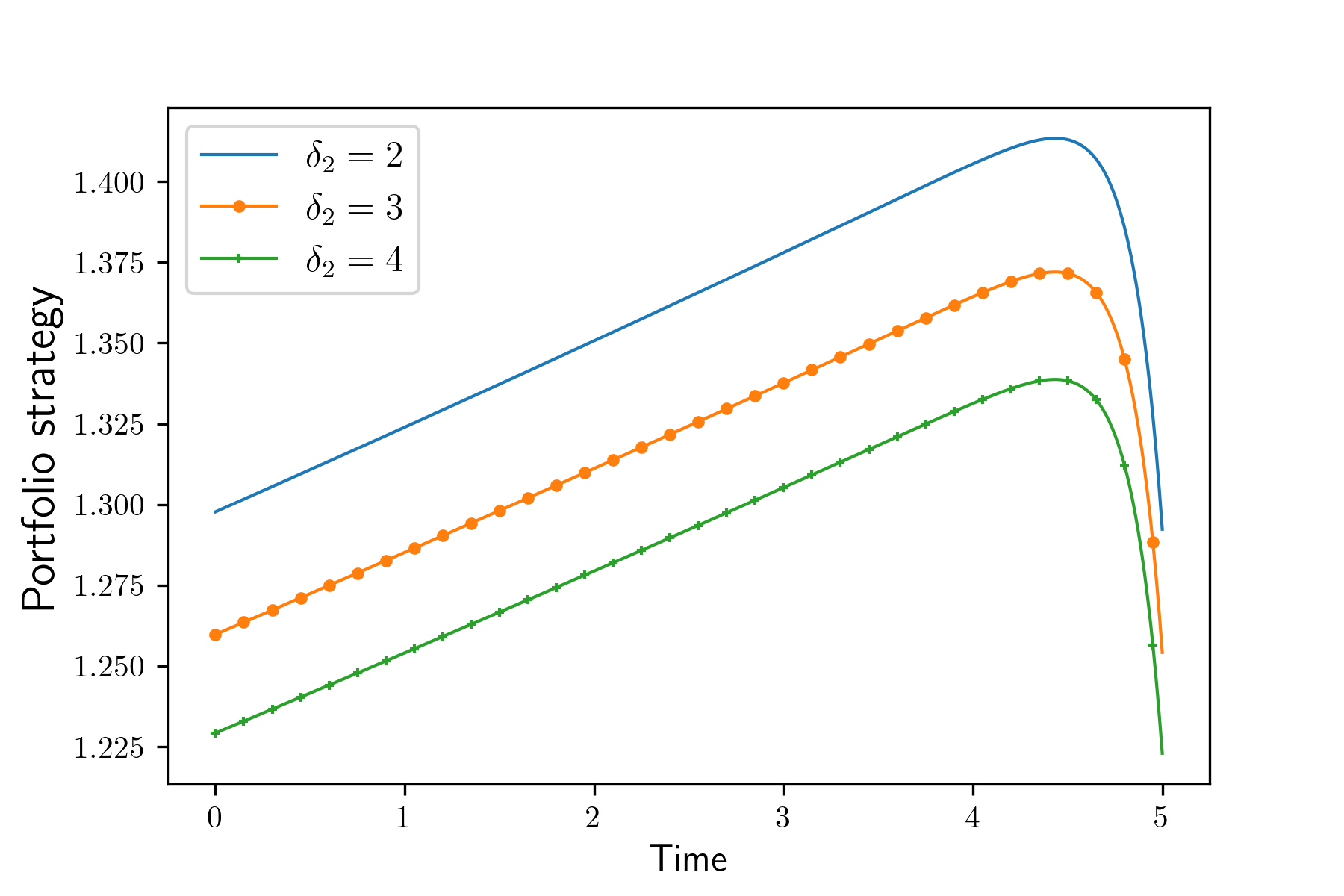}
		\caption{Effects of $ \delta_{2}$ on $ \pi^*_1(t) $}
		\label{delta2_Portfolio}
	\end{minipage}
\end{figure}

Fig.~\ref{delta1_Portfolio} illustrates the relationship between AAI 1's risk aversion parameter, $\delta_1$, and the investment strategy in the financial market. As AAI 1 becomes more risk-averse (with a larger $\delta_1$), she adopts a more conservative approach by reducing her  allocation in risky assets. Consequently, there is a negative impact of $\delta_1$ on the optimal investment strategy, $\pi_1^*$. Additionally, Fig.~\ref{delta2_Portfolio} demonstrates how AAI 2's risk aversion attitude influences AAI 1's investment strategy. When AAI 2 exhibits higher levels of risk aversion, AAI 1 tends to adopt a more conservative investment strategy. Conversely, if AAI 2 is more risk-seeking, AAI 1 is more inclined to pursue a more aggressive investment approach. These findings align with the observations in Figs.~\ref{delta1_Reinsurance} and \ref{delta2_Reinsurance}, which depict the effects of $\delta_1$ and $\delta_2$ on AAI 1's reinsurance strategy. In the financial market, different AAIs' behaviors tend to converge as they respond to each other's risk management strategies.

\begin{figure}[H]
	\centering
	\begin{minipage}[t]{0.45\linewidth} 
		\centering
		\includegraphics[width=0.9\linewidth]{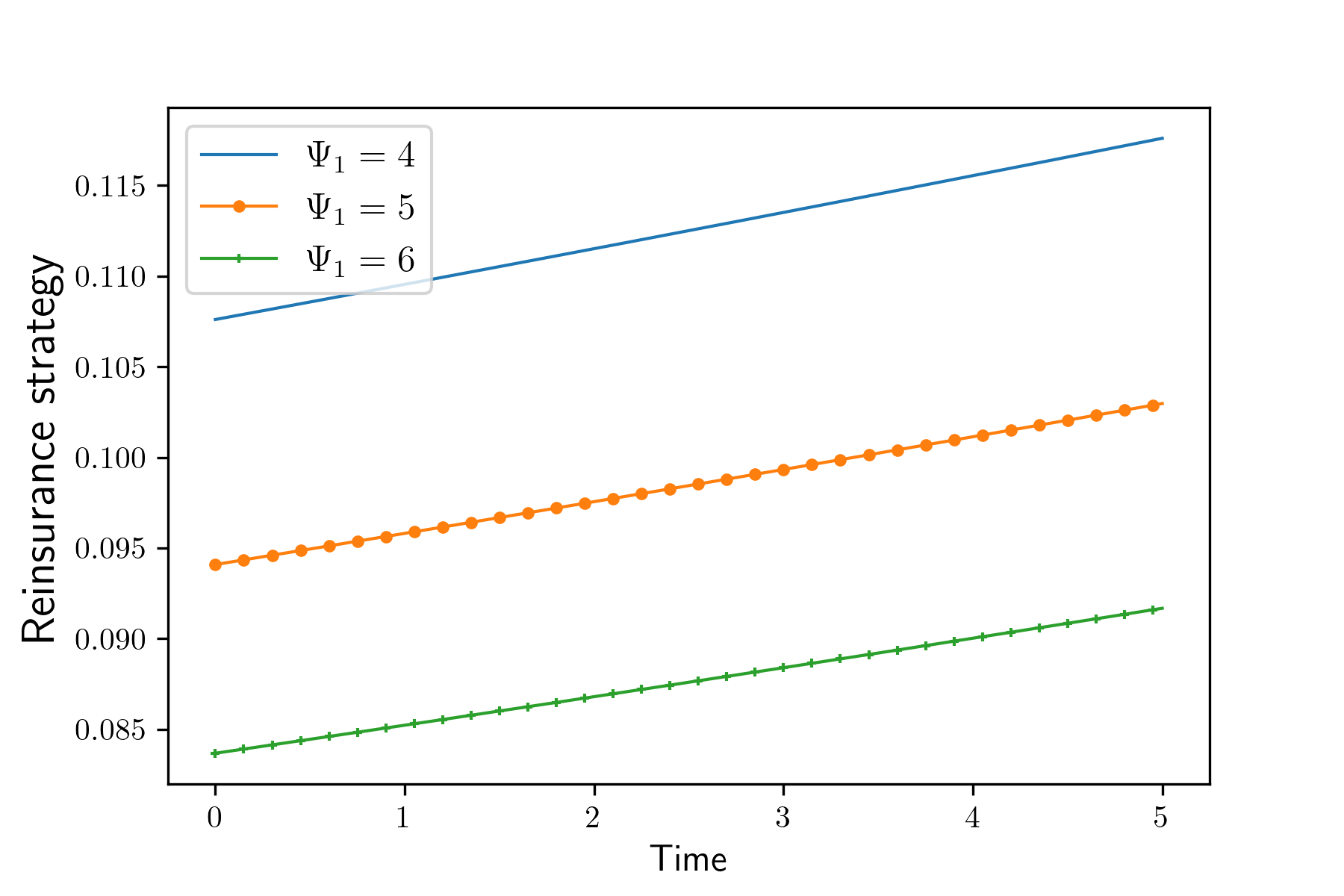}
		\caption{Effects of $ \Psi_{1}$ on $ a^*_1(t) $}
		\label{psi1_Reinsurance}
	\end{minipage}
	\begin{minipage}[t]{0.45\linewidth}
		\centering
		\includegraphics[width=0.9\linewidth]{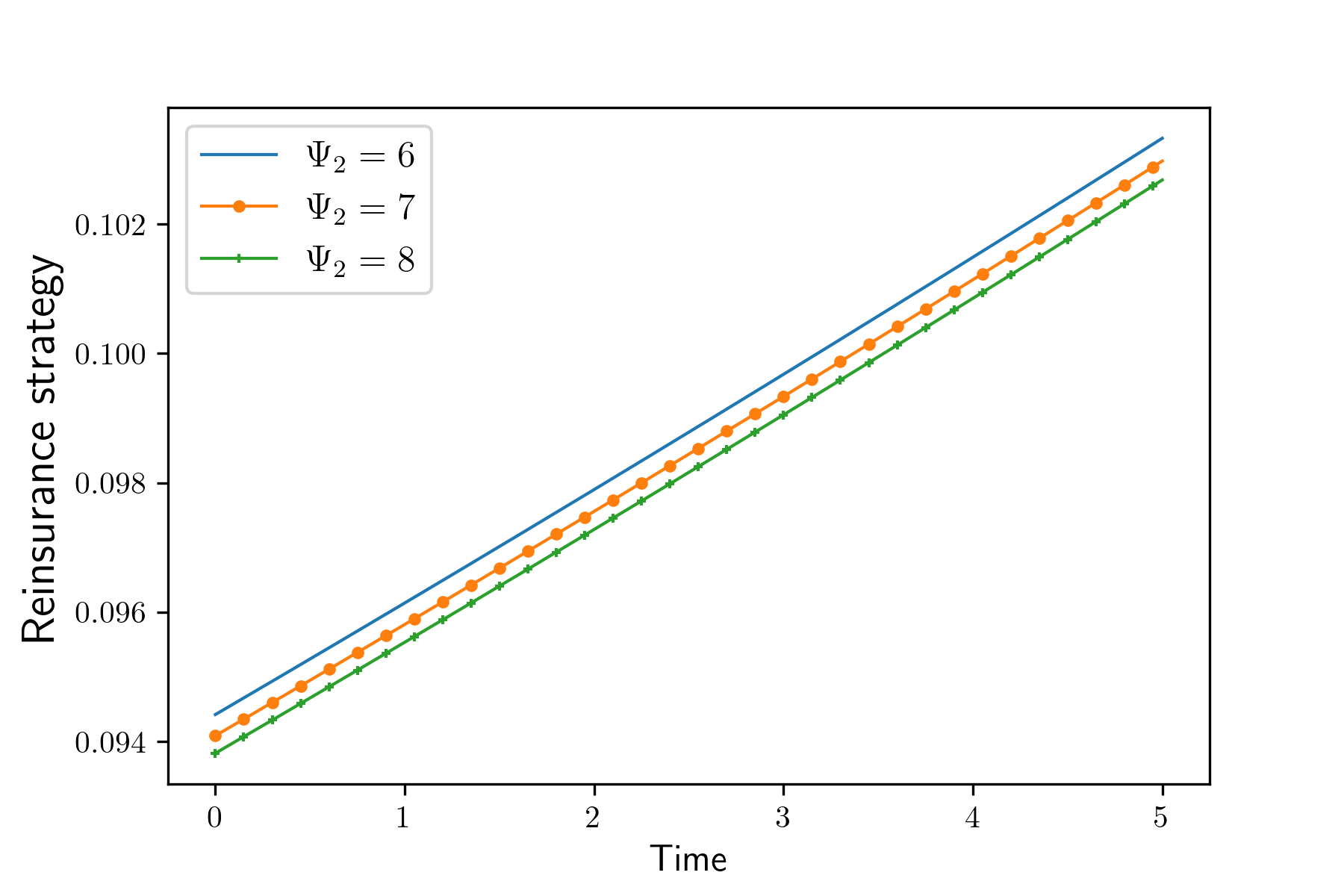}
		\caption{Effects of $ \Psi_{2}$ on $ a^*_1(t) $}
		\label{psi2_Reinsurance}
	\end{minipage}
\end{figure}
\begin{figure}[H]
	\centering
	\begin{minipage}[t]{0.45\linewidth} 
		\centering
		\includegraphics[width=0.9\linewidth]{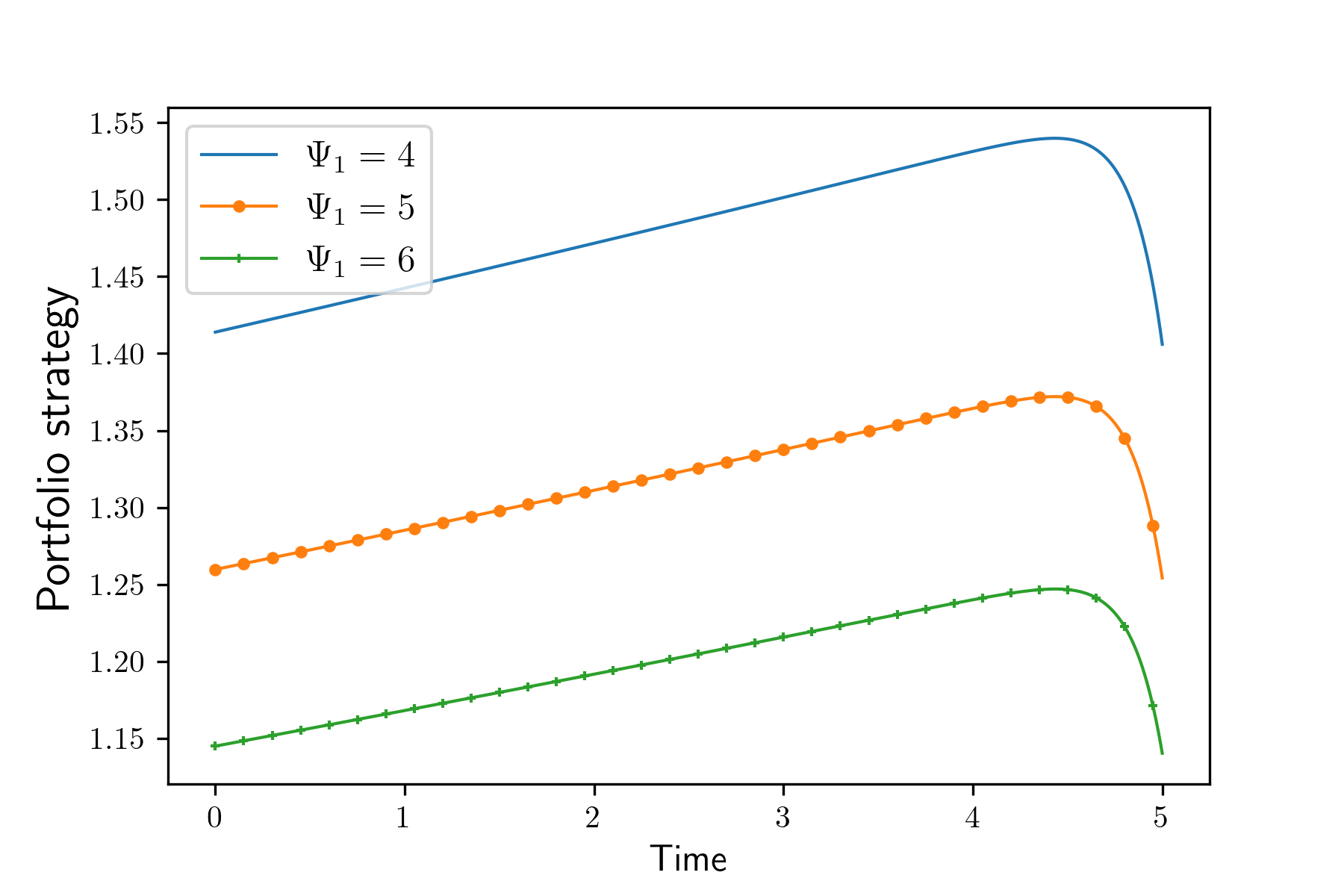}
		\caption{Effects of $ \Psi_{1}$ on $ \pi^*_1(t) $}
		\label{psi1_Portfolio}
	\end{minipage}
	\begin{minipage}[t]{0.45\linewidth}
		\centering
		\includegraphics[width=0.9\linewidth]{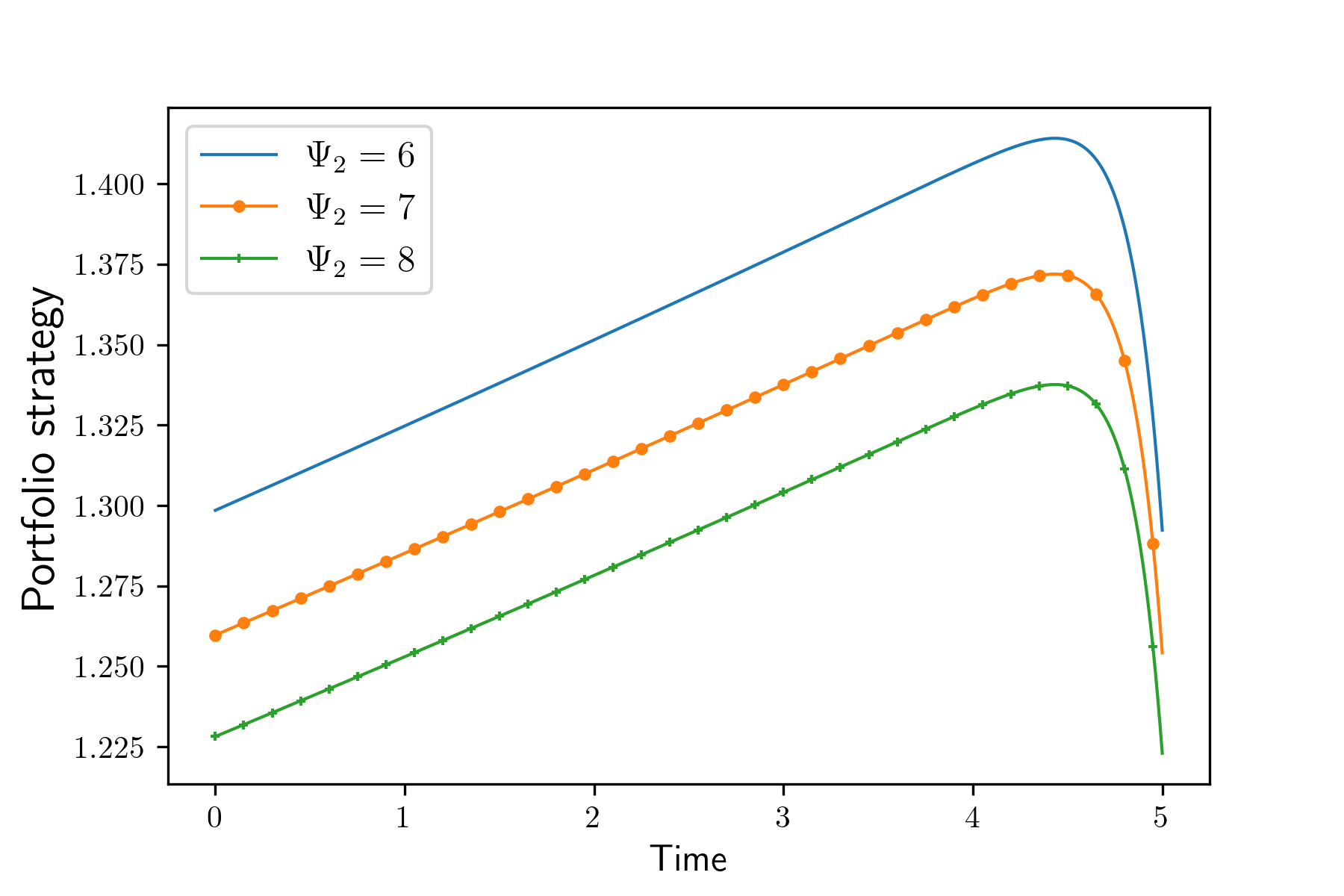}
		\caption{Effects of $ \Psi_{2}$ on $ \pi^*_1(t) $}
		\label{psi2_Portfolio}
	\end{minipage}
\end{figure}

The parameters $\Psi_1$ and $\Psi_2$ quantify the level of ambiguity aversion, representing an AAI's aversion to uncertain outcomes. In Fig.~\ref{psi1_Reinsurance}, it is observed that higher levels of ambiguity aversion for AAI 1 result in reduced confidence in the financial model. As a consequence, AAI 1 becomes more cautious in managing increased model uncertainty, leading to a decrease in her  optimal reinsurance allocation, denoted as $a^*_1$. Furthermore, the impact of AAI 2's ambiguity attitude on AAI 1's reinsurance strategy is examined in Figs.~\ref{psi2_Reinsurance}. These figures demonstrate a positive relationship between AAI 1's reinsurance strategy and AAI 2's ambiguity attitude. Specifically, when AAI 2's ambiguity aversion parameter, $\Psi_2$, increases, both AAI 2 and AAI 1 exhibit a reduced willingness to take risks in the competition game. These findings align with the explanations provided for Figs.~\ref{theta2_Reinsurance} and \ref{delta2_Reinsurance}, where increasing ambiguity aversion leads to a decrease in risk-taking behavior for both AAIs in the competition game.

Fig.~\ref{psi1_Portfolio} provides insights into the impact of ambiguity attitude on investment strategy. It reveals that as AAI 1 becomes more ambiguity averse, she exhibit less confidence in the financial market. Consequently, there is a decrease in the optimal investment strategy, $\pi_1^*$, as $\Psi_1$ increases. A similar influence is observed when considering the ambiguity aversion coefficient, $\Psi_2$, of Insurer 2. As $\Psi_2$ increases, AAI 2 becomes more risk-averse, resulting in a decrease in her allocation to risky assets. Consequently, AAI 1 also decreases her investment in stocks, as depicted in Fig.~\ref{psi2_Portfolio}.

The observed phenomena confirm the presence of a herd effect in a competitive environment. In such a setting, when there are changes in risk exposures, AAIs tend to adjust their risk management strategies in a similar direction to their competitors. This behavior confirms the herd effect of competition in \cite{deng2018non}. 

\section{\bf Conclusion}

This paper integrates model uncertainty and the 4/2 stochastic volatility model to introduce robust n-insurer and mean-field games for competitive insurers under mean-variance criterion. We formulate the robust mean-field game in a non-linear system. By the $n$-dimensional extended HJBI equations, we derive the ``weak equilibrium strategy'', referred to as the robust time-consistent investment-reinsurance response strategy in this paper, for both the $n$-insurer case and the mean-field case. 
We also demonstrate that as the number of AAIs, $n$, tends toward infinity, the robust equilibrium strategies in the $n$-insurer setting converge to those in the robust mean-field equilibrium. By solving the coupled Riccati equations, which is a novel contribution compared to previous research, we obtain the robust equilibrium strategies.  We provide suitable conditions outlined in the verification theorem. Furthermore, we present numerical examples to illustrate insurers' economic behaviors and examine the influence of different parameters on the robust equilibrium strategies. Our findings indicate that even when considering ambiguity, the herd effect of competition still exists. Insurers' strategies now depend not only on the risk attitude of their peers but also on their peers' ambiguity attitude.

\vskip 15pt
{\bf Acknowledgements.} The authors acknowledge the support from the National Natural Science Foundation of China (Grant No.12271290, No.11901574, No.11871036), the MOE Project of Key Research Institute of Humanities and Social Sciences (22JJD910003). The authors thank the members of the group of Actuarial Science and Mathematical Finance at the Department of Mathematical Sciences, Tsinghua University for their feedbacks and useful conversations.
\vskip 15pt
	\appendix
	\renewcommand{\theequation}{\thesection.\arabic{equation}}
	
\section{\bf Proof of Theorem \ref{solution-HJBI}}\label{proof-solution-HJBI}
Based on \eqref{equ:af}, we have\begin{equation*}
		\begin{aligned}
			&\delta_{i} \Upsilon(t,y,z) \mathcal{A}^{\{(\pi_k,a_k)_{k=1}^n\},(\varphi,\chi,\phi,\vartheta)}\Upsilon(t,y,z)-\frac{\delta_{i}}{2}\mathcal{A}^{\{(\pi_k,a_k)_{k=1}^n\},(\varphi,\chi,\phi,\vartheta)}\Upsilon^2(t,y,z)\\
			=&-\nu^2\frac{\delta_{i}}{2}z(\Upsilon_{z})^2\!-\delta_{i}\nu\sqrt{z}\rho\sigma((1\!-\!\frac{\theta_{i}}{n})\pi_i\!-\!\frac{\theta_{i}}{n}\sum_{k\neq i}\pi_{k}(t))\Upsilon_{y}\Upsilon_{z}\\
			&-\frac{\delta_{i}}{2}(1-\frac{\theta_i}{n})^2a_{i}^2(t) \left(\hat{\lambda}+\lambda_{i}\right) \mu_{i 2}(\Upsilon_y)^2-\frac{\delta_{i}\theta_{i}^2}{2n^2}\sum_{k\neq i}(a_{k}(t))^2 \left(\left(\hat{\lambda}\!+\!\lambda_{k}\right) \mu_{k 2}-\hat{\lambda}\mu_{k1}^2\right)(\Upsilon_y)^2\\&-\frac{\delta_{i}\hat{\lambda}\theta_i^2}{2n^2}\left[\sum_{k\neq i}a_k(t)\mu_{k 1}\right]^2\!\!(\Upsilon_y)^2\\
			&+\delta_{i}\hat{\lambda}\frac{\theta_{i}}{n}(1-\frac{\theta_i}{n})a_i(t)\mu_{i 1}\sum_{k\neq i}a_{k}(t)\mu_{k 1}(\Upsilon_y)^2-\frac{\delta_{i}}{2}\sigma^2((1\!-\!\frac{\theta_{i}}{n})\pi_i\!-\!\frac{\theta_{i}}{n}\sum_{k\neq i}\pi_{k}(t))^2(\Upsilon_y)^2.
		\end{aligned}
	\end{equation*}

	Suppose that the two-tuple $ (v^{(i)}, \varUpsilon^{(i)}) $ solves both (\ref{HJBI-MV}) and (\ref{g-MV}). As the operator\\ $ \mathcal{L}_{i}\left({\left\{(\pi_{k}^*,a_k^*)_{k\neq i},(\pi_{i},a_i)\right\},(\varphi_i,\chi_i,\phi_i,\vartheta_{i})},V,\Upsilon,\Psi_i,(t,y,z)\right) $ is quadratic with respect to $ (\varphi_i,\chi_i,\phi_i,\vartheta_{i}) $, we can obtain the values of $\left(\varphi_{i,\pi_i,a_i}^\circ, \chi_{i,\pi_i,a_i}^\circ,\phi_{i,\pi_i,a_i}^\circ, \vartheta_{i,\pi_i,a_i}^\circ\right)$ by applying the first-order condition:
	\begin{equation*}
		\left\{\begin{aligned}
			&\nu\sqrt{z}\rho v^{(i)}_{z}+((1-\frac{\theta_{i}}{n})\pi_i-\frac{\theta_{i}}{n}\sum_{k\neq i}\pi^*_{k})\sigma v^{(i)}_y+\frac{\varphi^\circ_{{i,\pi_i,a_i}}}{\Psi_i} =0,\\
			&\nu\sqrt{z}\sqrt{1-\rho^2} v^{(i)}_{z}+\frac{\chi^\circ_{i,\pi_i,a_i}}{\Psi_i}=0,\\
			&\sqrt{\hat{\lambda}}(1-\frac{\theta_i}{n})\mu_{i1}a_{i}(t)v^{(i)}_y-\sqrt{\hat{\lambda}}\frac{\theta_{i}}{n}\sum_{k\neq i}\mu_{k1}a_{k}^*(t)v^{(i)}_y+\frac{\phi^\circ_{{i,\pi_i,a_i}}}{\Psi_i}=0,\\
			&(1-\frac{\theta_i}{n})a_{i}(t)\sqrt{(\hat{\lambda}+\lambda_{i})\mu_{i 2}-\hat{\lambda}\mu_{i 1}^2}v^{(i)}_y+\frac{\vartheta^\circ_{{i,i,\pi_i,a_i}}}{\Psi_i}=0,\\&-\frac{\theta_i}{n}a_{k}^*(t)\sqrt{(\hat{\lambda}+\lambda_{k})\mu_{k 2}-\hat{\lambda}\mu_{k 1}^2}v^{(i)}_y+\frac{\vartheta^\circ_{{i,k,\pi_i,a_i}}}{\Psi_i}=0,~k\neq i,
		\end{aligned}\right.
	\end{equation*}that is,
\begin{equation}\label{equ:px}
		\left\{\begin{aligned}
			&{\varphi^\circ_{i,\pi_i,a_i}}=-\nu\sqrt{z}\rho v^{(i)}_{z}\Psi_i-((1-\frac{\theta_{i}}{n})\pi_i-\frac{\theta_{i}}{n}\sum_{k\neq i}\pi^*_{k})\sigma v^{(i)}_y\Psi_i,\\
			&{\chi^\circ_{i,\pi_i,a_i}}=-\nu\sqrt{z}\sqrt{1-\rho^2} v^{(i)}_{z}\Psi_i,\\
			&\phi^\circ_{{i,\pi_i,a_i}}=-\sqrt{\hat{\lambda}}\left[(1-\frac{\theta_i}{n})\mu_{i1}a_{i}(t)-\frac{\theta_{i}}{n}\sum_{k\neq i}\mu_{k1}a_{k}^*(t)\right]v^{(i)}_y\Psi_i,\\
			&\vartheta^\circ_{{i,i,\pi_i,a_i}}=-(1-\frac{\theta_i}{n})a_{i}(t)\sqrt{(\hat{\lambda}+\lambda_{i})\mu_{i 2}-\hat{\lambda}\mu_{i 1}^2}v^{(i)}_y\Psi_i,\\
			&\vartheta^\circ_{{i,k,\pi_i,a_i}}=\frac{\theta_i}{n}a_{k}^*(t)\sqrt{(\hat{\lambda}+\lambda_{k})\mu_{k 2}-\hat{\lambda}\mu_{k 1}^2}v^{(i)}_y\Psi_i,~k\neq i.
		\end{aligned}\right.
	\end{equation}
Upon substituting the previous equations into (\ref{HJBI-MV}), it becomes evident that (\ref{HJBI-MV}) is a quadratic equation in terms of $\pi_i$ and $a_i$. By applying the first-order condition, we can derive the values of $\left(\hat{\pi}^\circ_i, \hat{a}^\circ_i\right)$. Consequently, we can obtain the following expressions:
	\begin{equation}\label{equ:v1}
		\left\{	\begin{aligned}
			&\nu\rho\sqrt{z}( v^{(i)}_{yz}-v^{(i)}_yv^{(i)}_{z}\Psi_i-\delta_{i}\Upsilon_{y}\Upsilon_{z})+m\sqrt{z} v^{(i)}_y+\sigma((1-\frac{\theta_i}{n})\hat{\pi}^\circ_i-\frac{\theta_i}{n}\sum_{k\neq i}\pi^*_k)(v^{(i)}_{yy}-(v^{(i)}_y)^2\Psi_i-{\delta_{i}}(\Upsilon_y)^2)=0,\\
			&2\hat{\eta}(\lambda_i+\hat{\lambda})\mu_{i 2}(1-\hat{a}^\circ_i(t))v^{(i)}_y+\hat{\lambda}\frac{\theta_{i}}{n}\sum_{k\neq i}a^*_{k}(t)\mu_{i 1}\mu_{k 1}((v^{(i)}_y)^2\Psi_i-v^{(i)}_{yy})\\&+ (1-\frac{\theta_i}{n})\hat{a}^\circ_i(t)\left[(\lambda_i+\hat{\lambda})\mu_{i2}v^{(i)}_{yy}-(\hat{\lambda}+\lambda_{i})\mu_{i 2}{\Psi_i}(v^{(i)}_y)^2\right]\\&-\delta_{i}(1-\frac{\theta_i}{n})\hat{a}^\circ_i(t) \left(\hat{\lambda}+\lambda_{i}\right) \mu_{i 2}(\Upsilon_y)^2+\delta_{i}\hat{\lambda}\frac{\theta_{i}}{n}\mu_{i 1}\sum_{k\neq i}a_{k}^*(t)\mu_{k 1}(\Upsilon_y)^2=0,\\
			&2\hat{\eta}(\lambda_i+\hat{\lambda})\mu_{i 2}(1-\hat{a}^\circ_i(t))v^{(i)}_y+\hat{\lambda}\frac{\theta_{i}}{n}\sum_{k\neq i}a^*_{k}(t)\mu_{i 1}\mu_{k 1}((v^{(i)}_y)^2\Psi_i-v^{(i)}_{yy}+\delta_{i}(\Upsilon_y)^2)\\&+ (1-\frac{\theta_i}{n})\hat{a}^\circ_i(t)\left[(\lambda_i+\hat{\lambda})\mu_{i2}(v^{(i)}_{yy}-\delta_{i}(\Upsilon_y)^2-\Psi_i(v^{(i)}_y)^2)\right]=0.
		\end{aligned}\right.
	\end{equation}
	
We conjecture \begin{equation*}
		v^{(i)}(t, y,z)= v_{i,1}(t)+ yv_{i,2}(t)+v_{i,3}(t)z,\quad \varUpsilon^{(i)}(t, y,z)= \varUpsilon_{i,1}(t)+ y\varUpsilon_{i,2}(t)+\varUpsilon_{i,3}(t)z.
	\end{equation*}
	Then $ v^{(i)}_y=v_{i,2}(t)$, $v^{(i)}_z=v_{i,3}(t)$, $ v^{(i)}_{yy}=v^{(i)}_{yz}=v^{(i)}_{zz}=0 $, $v^{(i)}_t=v_{i,1}'(t)+ yv_{i,2}'(t)+v_{i,3}'(t)z $,  $ \varUpsilon^{(i)}_y=\varUpsilon_{i,2}(t)$, $\varUpsilon^{(i)}_z=\varUpsilon_{i,3}(t)$, $ \varUpsilon^{(i)}_{yy}=\varUpsilon^{(i)}_{yz}=\varUpsilon^{(i)}_{zz}=0 $, $ v_{i,1}(T)=0$, $ v_{i,2}(T)=1,$  $v_{i,3}(T)=0,$  $\varUpsilon_{i,1}(T)=0,$  $\varUpsilon_{i,2}(T)=1,$  $\varUpsilon_{i,3}(T)=0 $ and \eqref{equ:v1} is equivalent to
	\begin{equation}\label{equ:pia1}
		\left\{	\begin{aligned}
			&\nu\rho\sqrt{z}( -v_{i,2}(t)v_{i,3}(t)\Psi_i\!-\!\delta_{i}\varUpsilon_{i,2}\varUpsilon_{i,3})\!+\!m\sqrt{z} v_{i,2}(t)\!+\!\sigma((1-\frac{\theta_i}{n})\hat{\pi}^\circ_i\!-\!\frac{\theta_i}{n}\sum_{k\neq i}\pi^*_k)(-v_{i,2}^2(t)\Psi_i\!-\!{\delta_{i}}\varUpsilon_{i,2}^2)\!=\!0,\\
			&2\hat{\eta}(\lambda_i+\hat{\lambda})\mu_{i 2}(1-\hat{a}^\circ_i(t))v_{i,2}+\hat{\lambda}\frac{\theta_{i}}{n}\sum_{k\neq i}a^*_{k}(t)\mu_{i 1}\mu_{k 1}(v_{i,2}^2\Psi_i+\delta_{i}\varUpsilon_{i,2}^2)\\&+ (1-\frac{\theta_i}{n})\hat{a}^\circ_i(t)(\lambda_i+\hat{\lambda})\mu_{i2}(-\delta_{i}\varUpsilon_{i,2}^2-\Psi_iv_{i,2}^2)=0.
		\end{aligned}\right.
	\end{equation}
and
	\begin{equation*}
		\left\{	\begin{aligned}
			&-\nu\rho\sqrt{z}( v_{i,2}v_{i,3}\Psi_i+\delta_{i}\varUpsilon_{i,2}\varUpsilon_{i,3})+m\sqrt{z} v_{i,2}(t)=\sigma((1-\frac{\theta_i}{n})\hat{\pi}^\circ_i-\frac{\theta_i}{n}\sum_{k\neq i}\pi^*_k)(v_{i,2}^2(t)\Psi_i+{\delta_{i}}\varUpsilon_{i,2}^2),\\
			&2\hat{\eta}(\lambda_i+\hat{\lambda})\mu_{i 2}(1-\hat{a}^\circ_i(t))v_{i,2}+\hat{\lambda}\frac{\theta_{i}}{n}\sum_{k\neq i}a^*_{k}(t)\mu_{i 1}\mu_{k 1}(v_{i,2}^2\Psi_i+\delta_{i}\varUpsilon_{i,2}^2)\\&= (1-\frac{\theta_i}{n})\hat{a}^\circ_i(t)(\lambda_i+\hat{\lambda})\mu_{i2}(\delta_{i}\varUpsilon_{i,2}^2+\Psi_iv_{i,2}^2).
		\end{aligned}\right.
	\end{equation*}
Based on \eqref{equ:pia1}, we obtain the following expressions:
	\begin{equation}\label{equ:pia2}
		\left\{	\begin{aligned}
			&(1-\frac{\theta_i}{n})\hat{\pi}^\circ_i-\frac{\theta_i}{n}\sum_{k\neq i}\pi^*_k= \frac{m v_{i,2}-\nu\rho( v_{i,2}v_{i,3}\Psi_i+\delta_{i}\varUpsilon_{i,2}\varUpsilon_{i,3})}{v_{i,2}^2(t)\Psi_i+{\delta_{i}}\varUpsilon_{i,2}^2}\frac{ \sqrt{z}}{\sigma},\\
			&2\hat{\eta}(\lambda_i+\hat{\lambda})\mu_{i 2}v_{i,2}+\hat{\lambda}\frac{\theta_{i}}{n}\sum_{k\neq i}a^*_{k}(t)\mu_{i 1}\mu_{k 1}(v_{i,2}^2\Psi_i+\delta_{i}\varUpsilon_{i,2}^2)\\&= \hat{a}^\circ_i(t)(\lambda_i+\hat{\lambda})\mu_{i2}((1-\frac{\theta_i}{n})(\delta_{i}\varUpsilon_{i,2}^2+\Psi_iv_{i,2}^2)+2\hat{\eta}v_{i,2}).
		\end{aligned}\right.
	\end{equation}
	Denote
	\begin{equation*}
		\begin{aligned}
			&S_i(t)=\frac{m v_{i,2}-\nu\rho( v_{i,2}v_{i,3}\Psi_i+\delta_{i}\varUpsilon_{i,2}\varUpsilon_{i,3})}{v_{i,2}^2(t)\Psi_i+{\delta_{i}}\varUpsilon_{i,2}^2},\\
			&R_i(t)=(\lambda_i+\hat{\lambda})\mu_{i2}((1-\frac{\theta_i}{n})(\delta_{i}\varUpsilon_{i,2}^2+\Psi_iv_{i,2}^2)+2\hat{\eta}v_{i,2}),\\
			&Q_i(t)=\hat{\lambda}\theta_i\mu_{i 1}(v_{i,2}^2\Psi_i+\delta_{i}\varUpsilon_{i,2}^2),\quad P_i(t)=2\hat{\eta}(\lambda_i+\hat{\lambda})\mu_{i 2}v_{i,2}.
	\end{aligned}\end{equation*}
	Then \begin{equation*}
		\hat{a}^\circ_i(t)R_i(t)=Q_i(t)\frac{1}{n}\sum_{k\neq i}\mu_{k 1}a^*_k(t)+P_i(t).
	\end{equation*}
Using the above notation, we can derive the following expression from \eqref{equ:pia2}:
 \begin{equation*}
		\left\{\begin{aligned}
			&{\pi}^\circ_i=\frac{\theta_i}{n-\theta_i}\sum_{k\neq i}\pi^*_k+ \frac{ n\sqrt{z}}{(n-\theta_i)\sigma}S_i(t),\\
			&a_i^\circ(t)=(\hat{a}^\circ_i(t)\vee0)\wedge1\\
			&\qquad=\left(\left(\frac{Q_i(t)}{R_i(t)}\frac{1}{n}\sum_{k\neq i}\mu_{k 1}a_k^*(t)+\frac{P_i(t)}{R_i(t)}\right)\vee0\right)\wedge1\\
			&\qquad=\left(\frac{Q_i(t)}{R_i(t)}\frac{1}{n}\sum_{k\neq i}\mu_{k 1}a_k^*(t)+\frac{P_i(t)}{R_i(t)}\right)\wedge1,
		\end{aligned}\right.
	\end{equation*}
	The associated worst-case scenario density generators can be obtained from \eqref{equ:px}, and are as follows:
	\begin{equation*}
		\left\{\begin{aligned}
			&{\varphi^\circ_{i}}=-(\nu \rho v_{i,3}+S_i(t) v_{i,2})\Psi_i\sqrt{z},\\
			&{\chi^\circ_{i}}=-\nu\sqrt{z}\sqrt{1-\rho^2} v_{i,3}\Psi_i,\\
			&\phi^\circ_{{i}}=-\sqrt{\hat{\lambda}}\left[(1-\frac{\theta_i}{n})\mu_{i1}a_{i}^\circ(t)-\frac{\theta_{i}}{n}\sum_{k\neq i}\mu_{k1}a_{k}^*(t)\right]v_{i,2}\Psi_i,\\
			&\vartheta^\circ_{i,i}=-(1-\frac{\theta_i}{n})a_{i}^\circ(t)\sqrt{(\hat{\lambda}+\lambda_{i})\mu_{i 2}-\hat{\lambda}\mu_{i 1}^2}v_{i,2}\Psi_i,\\
			&\vartheta^\circ_{{i,k}}=\frac{\theta_i}{n}a_{k}^*(t)\sqrt{(\hat{\lambda}+\lambda_{k})\mu_{k 2}-\hat{\lambda}\mu_{k 1}^2}v_{i,2}\Psi_i,~k\neq i.
		\end{aligned}\right.
	\end{equation*}
	By \eqref{g-MV}, we have
	\begin{equation*}
		\begin{aligned}
			0=&\mathcal{A}^{\left\{(\pi_{k}^*,a_k^*)_{k\neq i},(\pi_{i}^\circ,a_i^\circ)\right\},(\varphi_i^\circ,\chi_i^\circ,\phi_i^\circ,\vartheta_{i}^\circ)}\varUpsilon^{(i)}(t,y,z)\\
			=&\varUpsilon^{(i)}_t\!+\!\left[{\kappa}(\bar{Z}\!-\!z)+{\nu}\sqrt{z}({\rho}\varphi_i^\circ\!+\!\sqrt{1-{\rho}^2}\chi_i^\circ)\right]\varUpsilon^{(i)}_{z} +ry\varUpsilon^{(i)}_y\\
			&\!+\!(1\!-\!\frac{\theta_i}{n})\!\left[\eta_{i}\!\left(\!\lambda_{i}\!+\!\hat{\lambda}\right) \mu_{i 1}\!-\!\hat{\eta}(1\!-\!a_i^\circ(t))^2\!\left(\!\lambda_{i}\!+\!\hat{\lambda}\!\right)\! \mu_{i 2}\!+\!\pi_{i}(t)\!\left(m\!-\!(\nu \rho v_{i,3}+S_i(t) v_{i,2})\Psi_i\right)\!({a}{z}\!+\!{{b}})\right]\! \varUpsilon^{(i)}_y\\
			&\!-\!\frac{\theta_{i}}{n}\sum_{k\neq i}\left[\eta_{k}\left(\lambda_{k}\!+\!\hat{\lambda}\right) \mu_{k 1}\!-\!\hat{\eta}(1\!-\!a_k^*(t))^2\!\left(\!\lambda_{k}\!+\!\hat{\lambda}\right) \mu_{k 2}\!+\!\pi_{k}(t)\left(m\!-\!(\nu \rho v_{i,3}\!+\!S_i(t) v_{i,2})\Psi_i\right)\!({a}{z}\!+\!{{b}})\right]\!\varUpsilon^{(i)}_y\\
			&+(1-\frac{\theta_i}{n})a_{i}^\circ(t)\left(\sqrt{\hat{\lambda}}\mu_{i1}\phi_{i}^\circ(t)+\sqrt{(\hat{\lambda}+\lambda_{i})\mu_{i 2}-\hat{\lambda}\mu_{i 1}^2}\vartheta_{i,i}^\circ(t)\right)\varUpsilon^{(i)}_y\\
			&-\frac{\theta_{i}}{n}\sum_{k\neq i}a_{k}^*(t) \left(\sqrt{\hat{\lambda}}\mu_{k1}\phi_{i}^\circ(t)+\sqrt{(\hat{\lambda}+\lambda_{k})\mu_{k 2}-\hat{\lambda}\mu_{k 1}^2}\vartheta_{i,k}^\circ(t) \right)\varUpsilon^{(i)}_y,
		\end{aligned}
	\end{equation*} 
	that is 
	\begin{equation*}
		\begin{aligned}
			0=&\varUpsilon_{i,1}' \!+\! y\varUpsilon_{i,2}' \!+\!\varUpsilon_{i,3}'
			z\!+\!\left[{\kappa}\bar{Z}\!-\!{\kappa}z\!-\!{\nu}{z}({\rho}(\nu \rho v_{i,3}\!+\!S_i(t) v_{i,2})\Psi_i\!+ \!\nu ({1\!-\!\rho^2}) v_{i,3}\Psi_i)\right]\!\varUpsilon_{i,3} \!+\!ry\varUpsilon_{i,2}\\
			&+(1-\frac{\theta_i}{n})\left[\eta_{i}\left(\lambda_{i}+\hat{\lambda}\right) \mu_{i 1}-\hat{\eta}(1-a_i^\circ(t))^2\left(\lambda_{i}+\hat{\lambda}\right) \mu_{i 2} \right] \varUpsilon_{i,2}\\
			&-\frac{\theta_{i}}{n}\sum_{k\neq i}\left[\eta_{k}\left(\lambda_{k}+\hat{\lambda}\right) \mu_{k 1}-\hat{\eta}(1-a_k^*(t))^2\left(\lambda_{k}+\hat{\lambda}\right) \mu_{k 2} \right]\varUpsilon_{i,2}\\
			&+ S_i(t)z\left(m-(\nu \rho v_{i,3}+S_i(t) v_{i,2})\Psi_i\right) \varUpsilon_{i,2}\\
			&+(1-\frac{\theta_i}{n})a_{i}^\circ(t)\left(\sqrt{\hat{\lambda}}\mu_{i1}\phi_{i}^\circ(t)-(1-\frac{\theta_i}{n})a_{i}^\circ(t)((\hat{\lambda}+\lambda_{i})\mu_{i 2}-\hat{\lambda}\mu_{i 1}^2)v_{i,2}\Psi_i\right)\varUpsilon_{i,2}\\
			&-\frac{\theta_{i}}{n}\sum_{k\neq i}a_{k}^*(t) \left(\sqrt{\hat{\lambda}}\mu_{k1}\phi_{i}^\circ(t)+\frac{\theta_i}{n}a_{k}^*(t)((\hat{\lambda}+\lambda_{k})\mu_{k 2}-\hat{\lambda}\mu_{k 1}^2)v_{i,2}\Psi_i \right)\varUpsilon_{i,2}.
		\end{aligned}
	\end{equation*} 
	
Comparing the coefficients of $y$ and $z$ in the last equation, we can derive the ordinary differential equations (ODEs) for $\varUpsilon_{i,1}$, $\varUpsilon_{i,2}$, and $\varUpsilon_{i,3}$ as follows:
	\begin{equation}\label{varUpsilon}
		\left\{\begin{aligned}
			0=&\varUpsilon_{i,1}'  +{\kappa}\bar{Z}\varUpsilon_{i,3} 
			+(1-\frac{\theta_i}{n})\left[\eta_{i}\left(\lambda_{i}+\hat{\lambda}\right) \mu_{i 1}-\hat{\eta}(1-a_i^\circ(t))^2\left(\lambda_{i}+\hat{\lambda}\right) \mu_{i 2} \right] \varUpsilon_{i,2}\\
			&-\frac{\theta_{i}}{n}\sum_{k\neq i}\left[\eta_{k}\left(\lambda_{k}+\hat{\lambda}\right) \mu_{k 1}-\hat{\eta}(1-a_k^*(t))^2\left(\lambda_{k}+\hat{\lambda}\right) \mu_{k 2} \right]\varUpsilon_{i,2}\\
			&+(1-\frac{\theta_i}{n})a_{i}^\circ(t)\left(\sqrt{\hat{\lambda}}\mu_{i1}\phi_{i}^\circ(t)+\sqrt{(\hat{\lambda}+\lambda_{i})\mu_{i 2}-\hat{\lambda}\mu_{i 1}^2}\vartheta_{i,i}^\circ(t)\right)\varUpsilon_{i,2}\\
			&-\frac{\theta_{i}}{n}\sum_{k\neq i}a_{k}^*(t) \left(\sqrt{\hat{\lambda}}\mu_{k1}\phi_{i}^\circ(t)+\sqrt{(\hat{\lambda}+\lambda_{k})\mu_{k 2}-\hat{\lambda}\mu_{k 1}^2}\vartheta_{i,k}^\circ(t) \right)\varUpsilon_{i,2},\\
			0=	&\varUpsilon_{i,2}'+r\varUpsilon_{i,2},\\
			0=&\varUpsilon_{i,3}' -\left[{\kappa}+{\nu}{\rho}(\nu \rho v_{i,3}+S_i(t) v_{i,2})\Psi_i+ \nu^2 ({1-\rho^2}) v_{i,3}\Psi_i\right]\varUpsilon_{i,3}\\
			&+ S_i(t)\left(m-(\nu \rho v_{i,3}+S_i(t) v_{i,2})\Psi_i\right) \varUpsilon_{i,2}.
		\end{aligned}\right.	
	\end{equation}
On the other hand, by \eqref{HJBI-MV}, we have
\begin{equation*}
		\begin{aligned}
			0=	&	\mathcal{L}_{i}\left({\left\{(\pi_{k}^*,a_k^*)_{k\neq i},(\pi_{i}^\circ,a_i^\circ)\right\},(\varphi_i^\circ,\chi_i^\circ,\phi_i^\circ,\vartheta_{i}^\circ)},v^{(i)},\varUpsilon^{(i)}, \Psi_i,(t,y,z)\right)\\=&\mathcal{A}^{\left\{(\pi_{k}^*,a_k^*)_{k\neq i},(\pi_{i}^\circ,a_i^\circ)\right\},(\varphi_i^\circ,\chi_i^\circ,\phi_i^\circ,\vartheta_{i}^\circ)}v^{(i)}(t,y,z)+  \frac{\varphi^{2}(t)}{2 \Psi_i} +\frac{\chi^2(t)}{2 \Psi_i} +  \frac{\phi^{2}(t)}{2 \Psi_i} +\frac{\vartheta(t)^T\vartheta(t)}{2 \Psi_i} \\
			&+\delta_{i} \varUpsilon^{(i)}(t,y,z) \mathcal{A}^{\left\{(\pi_{k}^*,a_k^*)_{k\neq i},(\pi_{i}^\circ,a_i^\circ)\right\},(\varphi_i^\circ,\chi_i^\circ,\phi_i^\circ,\vartheta_{i}^\circ)}\varUpsilon^{(i)}(t,y,z)\\&-\frac{\delta_{i}}{2}\mathcal{A}^{\left\{(\pi_{k}^*,a_k^*)_{k\neq i},(\pi_{i}^\circ,a_i^\circ)\right\},(\varphi_i^\circ,\chi_i^\circ,\phi_i^\circ,\vartheta_{i}^\circ)}(\varUpsilon^{(i)})^2(t,y,z)\\
			=&v_{i,1}' + yv_{i,2}' +v_{i,3}'
			z\!+\!\left[{\kappa}\bar{Z}\!-{\kappa}z-{\nu}{z}({\rho}(\nu \rho v_{i,3}+S_i(t) v_{i,2})\Psi_i+ \nu ({1-\rho^2}) v_{i,3}\Psi_i)\right]v_{i,3} +ryv_{i,2}\\
			&+(1-\frac{\theta_i}{n})\left[\eta_{i}\left(\lambda_{i}+\hat{\lambda}\right) \mu_{i 1}-\hat{\eta}(1-a_i^\circ(t))^2\left(\lambda_{i}+\hat{\lambda}\right) \mu_{i 2} \right] v_{i,2}\\
			&-\frac{\theta_{i}}{n}\sum_{k\neq i}\left[\eta_{k}\left(\lambda_{k}+\hat{\lambda}\right) \mu_{k 1}-\hat{\eta}(1-a_k^*(t))^2\left(\lambda_{k}+\hat{\lambda}\right) \mu_{k 2} \right]v_{i,2}\\
			&+ S_i(t)z\left(m-(\nu \rho v_{i,3}+S_i(t) v_{i,2})\Psi_i\right) v_{i,2}\\
			&+(1-\frac{\theta_i}{n})a_{i}^\circ(t)\left(\sqrt{\hat{\lambda}}\mu_{i1}\phi_{i}^\circ(t)-(1-\frac{\theta_i}{n})a_{i}^\circ(t)((\hat{\lambda}+\lambda_{i})\mu_{i 2}-\hat{\lambda}\mu_{i 1}^2)v_{i,2}\Psi_i\right)v_{i,2}\\
			&-\frac{\theta_{i}}{n}\sum_{k\neq i}a_{k}^*(t) \left(\sqrt{\hat{\lambda}}\mu_{k1}\phi_{i}^\circ(t)+\frac{\theta_i}{n}a_{k}^*(t)((\hat{\lambda}+\lambda_{k})\mu_{k 2}-\hat{\lambda}\mu_{k 1}^2)v_{i,2}\Psi_i \right)v_{i,2}\\
			&-\nu^2\frac{\delta_{i}}{2}z\varUpsilon_{i,3}^2\!-\delta_{i}\nu\sqrt{z}\rho\sigma((1\!-\!\frac{\theta_{i}}{n})\pi_i^\circ\!-\!\frac{\theta_{i}}{n}\sum_{k\neq i}\pi_{k}^*(t))\varUpsilon_{i,2}\varUpsilon_{i,3}\\
			&-\frac{\delta_{i}}{2}(1-\frac{\theta_i}{n})^2(a_{i}^\circ(t))^2 \left(\hat{\lambda}+\lambda_{i}\right) \mu_{i 2}\varUpsilon_{i,2}^2-\frac{\delta_{i}\theta_{i}^2}{2n^2}\sum_{k\neq i}(a_{k}^*(t))^2 \left(\left(\hat{\lambda}\!+\!\lambda_{k}\right) \mu_{k 2}-\hat{\lambda}\mu_{k1}^2\right)\varUpsilon_{i,2}^2\\
			&\!\!-\!\frac{\delta_{i}\hat{\lambda}\theta_i^2}{2n^2}\!\!\left[\!\sum_{k\neq i}a_k^*(t)\mu_{k 1}\!\right]^2\!\!\!\!\varUpsilon_{i,2}^2\!+\!\delta_{i}\hat{\lambda}\frac{\theta_{i}}{n}(1\!-\!\frac{\theta_i}{n})a_i^\circ(t)\mu_{i 1}\!\sum_{k\neq i}a_{k}^*(t)\mu_{k 1}\varUpsilon_{i,2}^2\!-\!\frac{\delta_{i}}{2}\sigma^2((1\!-\!\frac{\theta_{i}}{n})\pi_i^\circ\!-\!\frac{\theta_{i}}{n}\sum_{k\neq i}\pi_{k}^*(t))^2\varUpsilon_{i,2}^2\\
			&\quad+\!\frac{1}{2}(\nu \rho v_{i,3}\!+\!S_i(t) v_{i,2})^2\Psi_iz
			\!+\!\frac{1}{2}\nu^2z{(1-\rho^2)} v_{i,3}^2\Psi_i
			+\frac{1}{2}{\hat{\lambda}}\!\left[(1-\frac{\theta_i}{n})\mu_{i1}a_{i}^\circ(t)\!-\!\frac{\theta_{i}}{n}\sum_{k\neq i}\mu_{k1}a_{k}^*(t)\right]^2\!\!\!\!v_{i,2}^2\Psi_i\\
			&\quad+\frac{1}{2}(1-\frac{\theta_i}{n})^2(a_{i}^\circ(t))^2{((\hat{\lambda}+\lambda_{i})\mu_{i 2}-\hat{\lambda}\mu_{i 1}^2)}v_{i,2}^2\Psi_i
			+\frac{1}{2}\sum_{k\neq i}\frac{\theta_i^2}{n^2}(a_{k}^*(t))^2{((\hat{\lambda}+\lambda_{k})\mu_{k 2}-\hat{\lambda}\mu_{k 1}^2)}v_{i,2}^2\Psi_i,
		\end{aligned}
	\end{equation*}
	that is \begin{equation*}
		\begin{aligned}
			0=	&v_{i,1}' + yv_{i,2}' +v_{i,3}'
			z\!+\!\left[{\kappa}\bar{Z}\!-{\kappa}z \right]v_{i,3} +ryv_{i,2}\\
			&+(1-\frac{\theta_i}{n})\left[\eta_{i}\left(\lambda_{i}+\hat{\lambda}\right) \mu_{i 1}-\hat{\eta}(1-a_i^\circ(t))^2\left(\lambda_{i}+\hat{\lambda}\right) \mu_{i 2} \right] v_{i,2}\\
			&-\frac{\theta_{i}}{n}\sum_{k\neq i}\left[\eta_{k}\left(\lambda_{k}+\hat{\lambda}\right) \mu_{k 1}-\hat{\eta}(1-a_k^*(t))^2\left(\lambda_{k}+\hat{\lambda}\right) \mu_{k 2} \right]v_{i,2}\\
			&+ S_i(t)v_{i,2}mz
			-\nu^2\frac{\delta_{i}}{2}z\varUpsilon_{i,3}^2\!-\delta_{i}\nu z\rho S_i(t)\varUpsilon_{i,2}\varUpsilon_{i,3}\\
			&-\frac{\delta_{i}}{2}(1-\frac{\theta_i}{n})^2a_{i}^2(t) \left(\hat{\lambda}+\lambda_{i}\right) \mu_{i 2}\varUpsilon_{i,2}^2-\frac{\delta_{i}\theta_{i}^2}{2n^2}\sum_{k\neq i}(a_{k}(t))^2 \left(\left(\hat{\lambda}\!+\!\lambda_{k}\right) \mu_{k 2}-\hat{\lambda}\mu_{k1}^2\right)\varUpsilon_{i,2}^2\\&-\frac{\delta_{i}\hat{\lambda}\theta_i^2}{2n^2}\left[\sum_{k\neq i}a_k(t)\mu_{k 1}\right]^2\varUpsilon_{i,2}^2+\delta_{i}\hat{\lambda}\frac{\theta_{i}}{n}(1-\frac{\theta_i}{n})a_i(t)\mu_{i 1}\sum_{k\neq i}a_{k}(t)\mu_{k 1}\varUpsilon_{i,2}^2-\frac{\delta_{i}}{2}S_i(t)^2z\varUpsilon_{i,2}^2\\
			&-\!\frac{1}{2}(\nu \rho v_{i,3}\!+\!S_i(t) v_{i,2})^2\Psi_iz\!-\!\frac{1}{2}\nu^2z{(1-\rho^2)} v_{i,3}^2\Psi_i\!-\!\frac{1}{2}{\hat{\lambda}}\left[(1-\frac{\theta_i}{n})\mu_{i1}a_{i}^\circ(t)\!-\!\frac{\theta_{i}}{n}\sum_{k\neq i}\mu_{k1}a_{k}^*(t)\right]^2\!\!\!v_{i,2}^2\Psi_i
			\\
			\end{aligned}
		\end{equation*}
	\begin{equation*}
\begin{aligned}
			&-\frac{1}{2}(1-\frac{\theta_i}{n})^2(a_{i}^\circ(t))^2{((\hat{\lambda}+\lambda_{i})\mu_{i 2}-\hat{\lambda}\mu_{i 1}^2)}v_{i,2}^2\Psi_i-\frac{1}{2}\sum_{k\neq i}\frac{\theta_i^2}{n^2}(a_{k}^*(t))^2{((\hat{\lambda}+\lambda_{k})\mu_{k 2}-\hat{\lambda}\mu_{k 1}^2)}v_{i,2}^2\Psi_i.
		\end{aligned}
	\end{equation*}
By comparing the coefficients of $y$ and $z$ in the last equation, we obtain the ODEs for $v_{i,1}$, $v_{i,2}$ and $v_{i,3}$ as follows:
	\begin{equation}\label{v1}
\begin{aligned}
			0=	&v_{i,1}'+{\kappa}\bar{Z}v_{i,3}+(1-\frac{\theta_i}{n})\left[\eta_{i}\left(\lambda_{i}+\hat{\lambda}\right) \mu_{i 1}-\hat{\eta}(1-a_i^\circ(t))^2\left(\lambda_{i}+\hat{\lambda}\right) \mu_{i 2} \right] v_{i,2}\\
			&-\frac{\theta_{i}}{n}\sum_{k\neq i}\left[\eta_{k}\left(\lambda_{k}+\hat{\lambda}\right) \mu_{k 1}-\hat{\eta}(1-a_k^*(t))^2\left(\lambda_{k}+\hat{\lambda}\right) \mu_{k 2} \right]v_{i,2}\\
			&-\frac{\delta_{i}}{2}(1-\frac{\theta_i}{n})^2a_{i}^2(t) \left(\hat{\lambda}+\lambda_{i}\right) \mu_{i 2}\varUpsilon_{i,2}^2-\frac{\delta_{i}\theta_{i}^2}{2n^2}\sum_{k\neq i}(a_{k}(t))^2 \left(\left(\hat{\lambda}\!+\!\lambda_{k}\right) \mu_{k 2}-\hat{\lambda}\mu_{k1}^2\right)\varUpsilon_{i,2}^2\\
			&-\frac{\delta_{i}\hat{\lambda}\theta_i^2}{2n^2}\left[\sum_{k\neq i}a_k(t)\mu_{k 1}\right]^2\varUpsilon_{i,2}^2+\delta_{i}\hat{\lambda}\frac{\theta_{i}}{n}(1-\frac{\theta_i}{n})a_i(t)\mu_{i 1}\sum_{k\neq i}a_{k}(t)\mu_{k 1}\varUpsilon_{i,2}^2\\
			&-\frac{1}{2}{\hat{\lambda}}\left[(1-\frac{\theta_i}{n})\mu_{i1}a_{i}^\circ(t)-\frac{\theta_{i}}{n}\sum_{k\neq i}\mu_{k1}a_{k}^*(t)\right]^2v_{i,2}^2\Psi_i
		   -\frac{1}{2}(1-\frac{\theta_i}{n})^2(a_{i}^\circ(t))^2{((\hat{\lambda}+\lambda_{i})\mu_{i 2}-\hat{\lambda}\mu_{i 1}^2)}v_{i,2}^2\Psi_i\\
			&-\frac{1}{2}\sum_{k\neq i}\frac{\theta_i^2}{n^2}(a_{k}^*(t))^2{((\hat{\lambda}+\lambda_{k})\mu_{k 2}-\hat{\lambda}\mu_{k 1}^2)}v_{i,2}^2\Psi_i.
								\end{aligned}	
		\end{equation}
	\begin{equation}\label{v2}
\begin{aligned}
			0=	& v_{i,2}'  +rv_{i,2},\\
					\end{aligned}	
		\end{equation}
				\begin{equation}\label{v3}
				\begin{aligned}
			0=	&v_{i,3}'
			-{\kappa} v_{i,3} + S_i(t)v_{i,2}m-\nu^2\frac{\delta_{i}}{2}\varUpsilon_{i,3}^2\!-\delta_{i}\nu \rho S_i(t)\varUpsilon_{i,2}\varUpsilon_{i,3}\\
			& -\frac{\delta_{i}}{2}S_i(t)^2\varUpsilon_{i,2}^2-\frac{1}{2}(\nu \rho v_{i,3}+S_i(t) v_{i,2})^2\Psi_i-\frac{1}{2}\nu^2{(1-\rho^2)} v_{i,3}^2\Psi_i.
		\end{aligned}	
	\end{equation}
	Based on \eqref{varUpsilon} and \eqref{v2}, we directly have
	\begin{equation*}
		v_{i,2}=\varUpsilon_{i,2}=e^{r(T-t)}.
	\end{equation*}
Then \begin{equation*}
		S_i(t)=\frac{m e^{r(T-t)}-\nu\rho( e^{r(T-t)}v_{i,3}\Psi_i+\delta_{i} e^{r(T-t)}\varUpsilon_{i,3})}{(\Psi_i+{\delta_{i}})e^{2r(T-t)}}=\frac{m -\nu\rho( v_{i,3}\Psi_i+\delta_{i} \varUpsilon_{i,3})}{(\Psi_i+{\delta_{i}})e^{r(T-t)}}.
	\end{equation*}
	To solve $ (\pi_{i}^\circ,a_i^\circ)$ and $(\varphi_i^\circ,\chi_i^\circ,\phi_i^\circ,\vartheta_{i}^\circ) $, we need to solve $ v_{i,3}$ and $\varUpsilon_{i,3} $. Combining \eqref{varUpsilon} and \eqref{v3}, we have
	\begin{equation*}
		\left\{	\begin{aligned}
			0=	&v_{i,3}'
			-{\kappa} v_{i,3} + S_i(t)e^{r(T-t)}m-\nu^2\frac{\delta_{i}}{2}\varUpsilon_{i,3}^2\!-\delta_{i}\nu \rho S_i(t)e^{r(T-t)}\varUpsilon_{i,3}\\
			& -\frac{\delta_{i}}{2}S_i(t)^2e^{2r(T-t)}-\frac{1}{2}(\nu \rho v_{i,3}+S_i(t) e^{r(T-t)})^2\Psi_i-\frac{1}{2}\nu^2{(1-\rho^2)} v_{i,3}^2\Psi_i\\
			0=&\varUpsilon_{i,3}' -\left[{\kappa}+{\nu}{\rho}(\nu \rho v_{i,3}+S_i(t) e^{r(T-t)})\Psi_i+ \nu^2 ({1-\rho^2}) v_{i,3}\Psi_i\right]\varUpsilon_{i,3}\\
			&+ S_i(t)e^{r(T-t)}\left(m-(\nu \rho v_{i,3}+S_i(t) e^{r(T-t)})\Psi_i\right),
		\end{aligned}\right.
	\end{equation*}
	that is,\begin{equation}\label{v-Upsilon}
		\left\{\begin{aligned}
			v_{i,3}'=&\frac{1}{2}\nu^2(\rho^2\frac{\Psi_i\delta_{i}}{\Psi_i+\delta_{i}}+(1-\rho^2)\Psi_i)v_{i,3}^2+(\kappa+m\nu\rho\frac{\Psi_i}{\Psi_i+\delta_{i}})v_{i,3}-\nu^2\rho^2\frac{\Psi_i\delta_{i}}{\Psi_i+\delta_{i}}v_{i,3}\varUpsilon_{i,3}\\
			&+m\nu\rho\frac{\delta_{i}}{\Psi_i+\delta_{i}}\varUpsilon_{i,3}+\frac{1}{2}\nu^2\delta_{i}(1-\rho^2\frac{\delta_{i}}{\Psi_i+\delta_{i}} )\varUpsilon_{i,3}^2-\frac{m^2}{2(\Psi_i+\delta_{i})},\\
			\varUpsilon_{i,3}'=&-\nu^2\rho^2\frac{\delta_{i} \Psi_i}{(\Psi_i+{\delta_{i}})^2}\varUpsilon_{i,3}^2+\left[\kappa +m\nu\rho\frac{{\delta_{i}}^2+(\Psi_i)^2}{(\Psi_i+{\delta_{i}})^2}\right]\varUpsilon_{i,3}\\
			&+\nu^2\left[ \rho^2\Psi_i\frac{2\Psi_i{\delta_{i}}}{(\Psi_i+{\delta_{i}})^2}+(1-\rho^2)\Psi_i\right]v_{i,3}\varUpsilon_{i,3}\\
			&+m\nu\rho\frac{2\Psi_i{\delta_{i}}}{(\Psi_i+{\delta_{i}})^2}v_{i,3}-\nu^2\rho^2\Psi_i\frac{\delta_{i} \Psi_i}{(\Psi_i+{\delta_{i}})^2}v_{i,3}^2-\frac{\delta_{i} m^2}{(\Psi_i+\delta_{i})^2}.
		\end{aligned}\right.
	\end{equation}
We conclude that the solution to the coupled Riccati differential equation (\ref{v-Upsilon}) can be expressed in terms of the solution to (\ref{ricc}), which is given by
 \begin{equation*}
\diag(v_{i,3}(t),\varUpsilon_{i,3}(t) )=F_i(T-t).
	\end{equation*}
Based on  Proposition \ref{ricc}, it is easy to see that $ (\varphi_i^\circ,\chi_i^\circ,\phi_i^\circ,\vartheta_{i}^\circ)\in\mathscr{A} $ and $ (\pi_{i}^\circ,a_i^\circ)\in\mathscr{U}_i. $
\section{\bf Proof of Theorem \ref{Verification}}\label{proof-Verification}
For $ \left(\varphi_i(t), \chi_i(t),\phi_i(t),\vartheta_{i}(t)\right)\!\in\!\mathscr{A} $ and  $ (\pi_i,a_i)\!\in\!\mathscr{U}_i$, denote $ \left(\varphi_i(t), \chi_i(t)\right)\!=\!\left(\!\hslash_i(t)\!\sqrt{Z(t)},\hbar_i(t)\!\sqrt{Z(t)}\!\right) $, 
based on It\^{o}'s  formula,
 we only need to verify the following conditions for $ f\in\{v^{(i)},\varUpsilon^{(i)}, (\varUpsilon^{(i)})^2\}$
\begin{equation*}
	\left\{\begin{aligned}
		&\mathbb{E}^{ \mathbb{Q}^{\varphi_i,\chi_i,\phi_i,\vartheta_{i}}}_{s,y,z}\left[\int_{s}^{T}\left(f_z(t, Y(t),Z(t))\right)^2{Z(t)}\rd t\right]<\infty,\\
		&\mathbb{E}^{ \mathbb{Q}^{\varphi_i,\chi_i,\phi_i,\vartheta_{i}}}_{s,y,z}\left[\int_{s}^{T}\left(f_y(t, Y(t),Z(t))\right)^2\left((1-\frac{\theta_{i}}{n})\pi_{i}(t)-\frac{\theta_{i}}{n}\sum_{k\neq i}\pi^*_{k}(t)\right)^2({\Sigma}(t))^2\rd t\right]<\infty,\\
		&\mathbb{E}^{ \mathbb{Q}^{\varphi_i,\chi_i,\phi_i,\vartheta_{i}}}_{s,y,z}\left[\int_{s}^{T}\left(f_y(t, Y(t),Z(t))\right)^2\left((1-\frac{\theta_{i}}{n})a_{i}(t)\mu_{i 1}-\frac{\theta_{i}}{n}\sum_{k\neq i}a^*_{k}(t)\mu_{k1}\right)^2\rd t\right]<\infty,\\
		&\mathbb{E}^{ \mathbb{Q}^{\varphi_i,\chi_i,\phi_i,\vartheta_{i}}}_{s,y,z}\left[\int_{s}^{T}\left(f_y(t, Y(t),Z(t))\right)^2\frac{1}{n}\sum_{k\neq i}(a_{k}^*(t))^2 {((\hat{\lambda}+\lambda_{k})\mu_{k 2}-\hat{\lambda}\mu_{k 1}^2)}\rd t\right]<\infty,\\
		&\mathbb{E}^{ \mathbb{Q}^{\varphi_i,\chi_i,\phi_i,\vartheta_{i}}}_{s,y,z}\left[\int_{s}^{T}\left(f_y(t, Y(t),Z(t))\right)^2a_{i}^2(t) {((\hat{\lambda}+\lambda_{i})\mu_{i 2}-\hat{\lambda}\mu_{i1}^2)}\rd t\right]<\infty.
	\end{aligned}\right.
\end{equation*}In fact, 
\begin{equation*}
	\begin{aligned}
		&\rd Z(t)\\=&{\kappa}(\bar{Z}\!-\!Z(t))\rd t\!+\!{\nu}\sqrt{Z(t)}\!\left[{\rho}(\rd W^{ \mathbb{Q}^{\varphi_i,\chi_i,\phi_i,\vartheta_{i}}}(t)\!+\!\varphi_i(t)\rd t)\!+\!\sqrt{1\!-\!{\rho}^2}(\rd B^{ \mathbb{Q}^{\varphi_i,\chi_i,\phi_i,\vartheta_{i}}}(t)\!+\!\chi_i(t)\rd t)\right]\\
		=&(\kappa\bar{Z}\!-\!\left(\!{\kappa}\!-\!\nu \rho\hslash_i(t)\!-\!\nu \sqrt{1-{\rho}^2}\hbar_i(t)\!\right)\!Z(t))\rd t\!+\!{\nu}\sqrt{Z(t)}\!\left[{\rho}\rd W^{ \mathbb{Q}^{\varphi_i,\chi_i,\phi_i,\vartheta_{i}}}\!(t)\!+\!\sqrt{1-{\rho}^2}\rd B^{ \mathbb{Q}^{\varphi_i,\chi_i,\phi_i,\vartheta_{i}}}\!(t)\right]\!.
	\end{aligned}
\end{equation*}
Let $ \bar{\kappa}= {\kappa}-\nu (\rho+\sqrt{1-{\rho}^2})\mathcal{C} $, as $ \mathcal{C}^2<\frac{\kappa^2}{2\nu^2} $, then $\mathcal{C}< \frac{{\kappa}}{\nu (\rho+\sqrt{1-{\rho}^2})} $, then $ 0<\bar{\kappa}\leqslant {\kappa}-\nu \rho\hslash_i(t)-\nu \sqrt{1-{\rho}^2}\hbar_i(t)$. $\forall s\in[0,T]$, let $ \hat{Z}^s(t) $ be determined by the CIR model as follows \begin{equation*}
	\rd \hat{Z}^s(t)\!=\!(\kappa\bar{Z}-\bar{\kappa}\hat{Z}^s(t))\rd t+{\nu}\sqrt{\hat{Z}^s(t)}\!\left[{\rho}\rd W^{ \mathbb{Q}^{\varphi_i,\chi_i,\phi_i,\vartheta_{i}}}(t)\!+\!\sqrt{1-{\rho}^2}\rd B^{ \mathbb{Q}^{\varphi_i,\chi_i,\phi_i,\vartheta_{i}}}(t)\right]\!\!, t\in[s,T],\hat{Z}^s(s)\!=\!{Z}(s).
\end{equation*}
Then by the comparison theorem, we see \begin{equation*}
	Z(t)\leqslant\hat{Z}^s(t), \quad\forall t\in[s,T],\quad a.s.
\end{equation*}
Thus,  
\begin{equation*}
	\mathbb{E}^{ \mathbb{Q}^{\varphi_i,\chi_i,\phi_i,\vartheta_{i}}}_{s,y,z}\left[\int_{s}^{T}{Z(t)}\rd t\right]\leqslant\mathbb{E}^{ \mathbb{Q}^{\varphi_i,\chi_i,\phi_i,\vartheta_{i}}}_{s,y,z}\left[\int_{s}^{T}{\hat{Z}^s(t)}\rd t\right]=\int_{s}^T\left[ze^{-\bar{\kappa}(t-s)}+\frac{\kappa}{\bar{\kappa}}\bar{Z}(1-e^{-\bar{\kappa}(t-s)})\right]\rd t<\infty.
\end{equation*}
	While for $ (\pi_i,a_i)\in\mathscr{U}_i$, $ \left\{(\pi_{k}^*,a_k^*)_{k\neq i}\right\}\in\mathscr{U}_{-i} $,  $  \exists \{\mathcal{C}_j\}_{j=1}^n\subseteq\mathbb{R}_+$, $ \mathcal{M}:=\sup_{1\leqslant j\leqslant n}\mathcal{C}_j<\infty, $ such that  $ \pi_k^*(t)=\ell_k(t)\frac{Z(t)}{aZ(t)+b}$, $|\ell_k(t)|\leqslant \mathcal{C}_k$, $\forall k\neq i$, $ \pi_i(t)=\ell_i(t)\frac{Z(t)}{aZ(t)+b}$, $|\ell_i(t)|\leqslant \mathcal{C}_i$, $ \forall t\in[0,T]$, $0\leqslant a_i\leqslant1$, $0\leqslant a_k^*\leqslant1 $. Based on Theorem {\ref{solution-HJBI}} and Proposition \ref{ricc}, it is easy to see that $$ \sup_{t\in[s,T]} |f_z(t, Y(t),Z(t))|<\infty,\quad\sup_{t\in[s,T]} |f_y(t, Y(t),Z(t))|<\infty,\quad\forall f\in\{v^{(i)},\varUpsilon^{(i)}, (\varUpsilon^{(i)})^2\},$$
	then it is obvious that
	\begin{equation*}
		\left\{\begin{aligned}
			&\mathbb{E}^{ \mathbb{Q}^{\varphi_i,\chi_i,\phi_i,\vartheta_{i}}}_{s,y,z}\left[\int_{s}^{T}\left(f_y(t, Y(t),Z(t))\right)^2\left((1-\frac{\theta_{i}}{n})a_{i}(t)\mu_{i 1}-\frac{\theta_{i}}{n}\sum_{k\neq i}a^*_{k}(t)\mu_{k1}\right)^2\rd t\right]<\infty,\\
			&\mathbb{E}^{ \mathbb{Q}^{\varphi_i,\chi_i,\phi_i,\vartheta_{i}}}_{s,y,z}\left[\int_{s}^{T}\left(f_y(t, Y(t),Z(t))\right)^2\frac{1}{n}\sum_{k\neq i}(a_{k}^*(t))^2 {((\hat{\lambda}+\lambda_{k})\mu_{k 2}-\hat{\lambda}\mu_{k 1}^2)}\rd t\right]<\infty,\\
			&\mathbb{E}^{ \mathbb{Q}^{\varphi_i,\chi_i,\phi_i,\vartheta_{i}}}_{s,y,z}\left[\int_{s}^{T}\left(f_y(t, Y(t),Z(t))\right)^2a_{i}^2(t) {((\hat{\lambda}+\lambda_{i})\mu_{i 2}-\hat{\lambda}\mu_{i1}^2)}\rd t\right]<\infty.
		\end{aligned}\right.
	\end{equation*}
Moreover,
\begin{equation*}
	\begin{aligned}
		&\mathbb{E}^{ \mathbb{Q}^{\varphi_i,\chi_i,\phi_i,\vartheta_{i}}}_{s,y,z}\left[\int_{s}^{T}\left(f_y(t, Y(t),Z(t))\right)^2\left((1-\frac{\theta_{i}}{n})\pi_{i}(t)-\frac{\theta_{i}}{n}\sum_{k\neq i}\pi^*_{k}(t)\right)^2({\Sigma}(t))^2\rd t\right]\\
		\leqslant&\mathcal{M}^2\sup_{t\in[s,T]} |f_y(t, Y(t),Z(t))|^2	\mathbb{E}^{ \mathbb{Q}^{\varphi_i,\chi_i,\phi_i,\vartheta_{i}}}_{s,y,z}\left[\int_{s}^{T}{Z(t)}\rd t\right]<\infty
	\end{aligned}
\end{equation*}
and
\begin{equation*}
	\mathbb{E}^{ \mathbb{Q}^{\varphi_i,\chi_i,\phi_i,\vartheta_{i}}}_{s,y,z}\!\left[\int_{s}^{T}\left(f_z(t, Y(t),Z(t))\right)^2{Z(t)}\rd t\right]\!\leqslant\! \sup_{t\in[s,T]} |f_z(t, Y(t),Z(t))|^2\mathbb{E}^{ \mathbb{Q}^{\varphi_i,\chi_i,\phi_i,\vartheta_{i}}}_{s,y,z}\!\left[\int_{s}^{T}{Z(t)}\rd t\right]\!<\!\infty.
\end{equation*}The others are similar to that in \cite{Chi2018} and we omit them here.

	\bibliographystyle{apalike}
	\bibliography{RNG-MV}
	
\end{document}